\setlist{nolistsep}
\newcommand{\ls}[1]
   {\dimen0=\fontdimen6\the\font \lineskip=#1\dimen0
\advance\lineskip.5\fontdimen5\the\font \advance\lineskip-\dimen0
\lineskiplimit=.9\lineskip \baselineskip=\lineskip
\advance\baselineskip\dimen0 \normallineskip\lineskip
\normallineskiplimit\lineskiplimit \normalbaselineskip\baselineskip
\ignorespaces }
\newtheorem{dfn}{Definition}
\newtheorem{definition}{Definition}
\newtheorem{cor}{Corollary}
\newtheorem{corollary}{Corollary}
\newtheorem{lemma}{Lemma}
\newtheorem{theorem}{Theorem}
\newtheorem{remark}{Remark}
\newlist{parts}{enumerate}{1}
\crefname{partsi}{Part}{Parts}
\setlist[parts,1]{label=\arabic*.,ref=\arabic*}
\newcommand{\knownf}[1]{\ensuremath{F\node{#1}}}
\newcommand{\minval}[1]{\ensuremath{\mathit{Min}\node{#1}}}
\newcommand{\Majvals}[1]{\ensuremath{\mathit{Maj}\node{#1}}}
\newcommand{\knownvals}[1]{\ensuremath{\mathit{Vals}\node{#1}}}
\newcommand{\knownlows}[1]{\ensuremath{\mathit{Low}\node{#1}}}
\newcommand{\eqdef}{\triangleq}
\newcommand{\OptMaj}{\mbox{$\mbox{\sc Opt}_{\Maj}$}}
\newcommand{\OptMin}{\mbox{$\mbox{\sc Opt}_{\min}$}}
\newcommand{\OptMink}{\mbox{$\mbox{\sc Opt}_{\min\!\mbox{-}k}$}}
\newcommand{\set}[1]{\{#1\}}
\newcommand{\Crash}{\mathsf{Crash}}
\def\emptyset{\mbox{\O}}
\newcommand{\defemph}[1]{\textbf{\textit{#1}}}
\newcommand{\sat}{\models}
\newcommand{\decideZ}{\mathsf{decide(0)}}
\newcommand{\decideO}{\mathsf{decide(1)}}
\newcommand{\decide}{\mathsf{decide}}
\newcommand{\gammacr}{\gamma^{\mathrm{cr}}}
\newcommand{\Proc}{\mathsf{Procs}}
\newcommand{\tee}{\,\defemph{t}}
\newcommand{\Pz}{P_0}
\newcommand{\UPz}{\mbox{$\mbox{{\sc u-}}P_0$}}
\newcommand{\Maj}{\mathsf{Maj}}
\newcommand{\OptO}{\mbox{$\mbox{{\sc Opt}}_1$}}
\newcommand{\OptZ}{\mbox{$\mbox{{\sc Opt}}_0$}}
\newcommand{\UOptZ}{\mbox{$\mbox{{\sc u-Opt}}_0$}}
\newcommand{\UOptMink}{\mbox{$\mbox{\sc u-Prot}_{\min\!\mbox{-}k}$}}
\newcommand{\Fmodel}{{\cal F}}
\newcommand{\CG}{{\cal G}}
\newcommand{\Vals}{{\tt V}}
\newcommand{\Vecs}{\vec{\Vals}}
\newcommand{\dom}{\,{\preceq}\,}
\newcommand{\hmwopt}{P0_{\mathrm{opt}}}
\newcommand{\pdo}{unbeatable}
\newcommand{\FP}{\mathsf{F}}
\newcommand{\nnz}{\mathsf{never\hbox{-}known}(\exists 0)}
\newcommand{\node}[1]{\langle#1\rangle}
\newcommand{\Ga}{\CG_\alpha}
\newcommand{\dec}{\mathsf{d}}
\newcommand{\cw}{\exists\mathsf{correct}(w)}
\newcommand{\cv}{\exists\mathsf{correct}(v)}
\newcommand{\cz}{\exists\mathsf{correct}(0)}
\DeclareMathOperator{\Bary}{\sf Bary}
\DeclareMathOperator{\Star}{\sf St}
\DeclareMathOperator{\Car}{\sf Car}
\DeclareMathOperator{\Div}{\sf Div}
\DeclareMathOperator{\bdry}{\sf Bd}
\newcommand{\ang}[1]{\langle{#1}\rangle}
\newcommand{\cK}{\ensuremath{\mathcal{K}}}
\newcommand{\cL}{\ensuremath{\mathcal{L}}}
\newcommand{\cP}{\ensuremath{\mathcal{P}}}
\newcommand{\kAgreement}{{\bf \defemph{k}-Agreement}}
\newcommand{\Agreement}{{\bf Agreement}}
\newcommand{\UnikAg}{{\bf Uniform \defemph{k}-Agreement}}
\newcommand{\UniAg}{{\bf Uniform Agreement}}
\newcommand{\Decision}{{\bf Decision}}
\newcommand{\Validity}{{\bf Validity}}
\begin{document}
 
\title{%
Good, Better,  Best! --- Unbeatable Protocols \\ for Consensus and Set Consensus%
\footnote{Part of the results of this paper were announced, without details and without proof, as a brief announcement in PODC 2013~\cite{AYY-PODC-BA}.}
}
\author{Armando Casta\~{n}eda\\Technion\\{\small armando@cs.technion.ac.il}
\and Yannai A.~Gonczarowski\\The Hebrew University of Jerusalem\\and Microsoft Research\\ {\small yannai@gonch.name}
\and Yoram Moses\\Technion\\{\small moses@ee.technion.ac.il}
}
\date{November 11, 2013}

\begin{titlepage}
\maketitle
\vspace{-2mm}
\begin{abstract}
While the very first consensus protocols for the synchronous model were designed to match the {\em worst-case} lower bound, deciding in exactly $\tee+1$ rounds in all runs, it was soon realized that they could be strictly improved upon by {\em early stopping} protocols. These dominate the first ones, by always deciding in at most~$\tee+1$ rounds, but often much faster. 
A protocol is \emph{unbeatable} if it can't be strictly dominated. Namely, if no protocol~$Q$ can decide  strictly earlier than~$P$ against at least one adversary strategy, while deciding at least as fast as~$P$ in all cases. 
Unbeatability is often a much more suitable notion of optimality for distributed protocols than worst-case performance. 
Halpern, Moses and Waarts in \cite{HalMoWa2001}, who introduced this notion, presented a general logic-based transformation of any consensus protocol to an unbeatable protocol that dominates it, and suggested a particular unbeatable consensus protocol. Their analysis is based on a notion of {\em continual common knowledge}, which is not easy to work with in practice. 
Using a more direct knowledge-based analysis, this paper studies unbeatability for both consensus and $k$-set consensus. 
We present  unbeatable solutions to 
\emph{non-uniform} consensus and $k$-set consensus, and \emph{uniform} consensus
in synchronous message-passing contexts with crash failures.
Our consensus protocol strictly dominates the one suggested in \cite{HalMoWa2001}, showing that their protocol {\em is} in fact beatable. 

The $k$-set consensus problem  is much more technically challenging than consensus, and its analysis has triggered the development of the topological approach to distributed computing. 
Worst-case lower bounds for this problem have required either techniques based on algebraic topology~\cite{GHP}, or reduction-based proofs~\cite{AGGT,GGP}. 
Our proof of unbeatability is purely combinatorial, and is a direct, albeit nontrivial, generalization of the one for consensus. We also present an alternative topological unbeatability proof 
that allows to understand the connection between the connectivity of protocol complexes
and the decision time of processes.
All of our protocols make use of a notion of a \defemph{hidden path} of nodes relative to a process~$i$ at time~$m$, in which a value unknown to~$i$ at~$m$ may be seen by others. This is a structure that can implicitly be found in lower bound proofs for consensus going back to the '80s~\cite{DS}. Its use in our protocols sheds light on the mathematical structure underlying the consensus problem and its variants. 
 
For the synchronous model, only solutions to the {\em uniform} variant of $k$-set consensus have been offered. Based on our unbeatable protocols for uniform consensus and for non-uniform $k$-set consensus, we present a uniform $k$-set consensus protocol that strictly dominates all known solutions to this problem in the synchronous model. 
\end{abstract}

\vfill
\noindent
\textbf{Keywords}:
Consensus, $k$-set consensus, uniform consensus, majority consensus, optimality, knowledge, topology.
\vfill
\vfill
\pagenumbering{Roman}
\thispagestyle{empty}
\end{titlepage}
\pagenumbering{arabic}

\section{Introduction}

\vspace{-0.35cm}

Following \cite{HMT11}, we say that a protocol~$P$ is a \defemph{worst-case optimal} solution to a decision task~$S$ in a given model 
if it solves~$S$, and decisions in~$P$ are always taken no later than the {\em worst-case} lower bound for decisions in this problem. 
The very first consensus protocols were worst-case optimal, deciding in exactly $\tee+1$ rounds in all runs \cite{DS,PSL}.
It was soon realized, however, that they could be strictly improved upon by \defemph{early stopping} protocols~\cite{DRS}. 
The latter are also worst-case optimal, but they strictly improve upon the original ones because they can often decide much faster than the original ones. 
This paper is concerned with the study and construction of protocols that cannot be strictly improved upon, and are thus optimal in a much stronger sense. 

In benign failure models it is typically possible to define the behaviour of the environment (i.e., the adversary) in a manner that is independent of the protocol, in terms of a pair $\alpha=(\vec{v},\FP)$ consisting of a vector $\vec{v}$ 
of initial values and a failure pattern~$\FP$. (A formal definition is given in Section~\ref{sec:model}.)
A failure model~$\Fmodel$  is identified with a set of (possible) failure patterns. 
For ease of exposition, we will think of such a pair $\alpha=(\vec{v},\FP)$ as a particular {\em adversary}. 
A deterministic protocol~$P$ and an adversary~$\alpha$ uniquely define a run $r=P[\alpha]$. 
With this terminology, we can compare the performance of different decision protocols solving a particular task in a given context $\gamma=(\Vecs,\Fmodel)$, where $\Vecs$ is a set of initial vectors. 
A decision protocol $Q$ \defemph{dominates} a protocol~$P$ in~$\gamma$, denoted by $Q\boldsymbol{\dom_\gamma} P$ if, for all adversaries $\alpha$ and every process~$i$, if $i$ decides in~$P[\alpha]$ at time $m_i$, then $i$ decides in $Q[\alpha]$ at some time $m'_i\le m_i$. Moreover, we say that $Q$  \defemph{strictly dominates} $P$
if $Q\dom_\gamma P$ and  $P\!\!\boldsymbol{\not}\!\!\!\dom_\gamma Q$. I.e., if it dominates~$P$ and for some $\alpha\in\gamma$ there exists a process~$i$ that decides in $Q[\alpha]$ {\em strictly before} it does so in $P[\alpha]$. 
In the crash failure model, the early-stopping protocols of \cite{DRS} strictly dominate the original protocols of \cite{PSL}, which always decided at time $\tee+1$. Nevertheless, these early stopping protocols may not be optimal solutions to consensus. 
Following \cite{HMT11} a protocol~$P$ is said to be an  \defemph{all-case optimal} solution to a decision task~$S$ in a context~$\gamma$ if it solves~$S$ and, moreover, $P$ dominates every protocol~$P'$ that solves~$S$ in~$\gamma$. 
Dwork and Moses presented all-case optimal solutions to the {\em simultaneous} variant of consensus, in which all decisions are required to occur at the same time~\cite{DM}.
 For the standard ({\em eventual}) variant of consensus, in which decisions are not required to occur simultaneously, Moses and Tuttle showed that no all-case optimal solution exists~\cite{MT}. 
Consequently, Halpern, Moses and Waarts in \cite{HalMoWa2001} initiated the study of a notion of optimality 
that is achievable by eventual consensus protocols:

\begin{definition}[Halpern, Moses and Waarts]
A protocol $P$ is an \defemph{unbeatable} solution to a decision task~$S$ in a context~$\gamma$ if $P$ solves~$S$ in~$\gamma$ and no protocol $Q$ solving~$S$ in~$\gamma$ strictly dominates~$P$.%
\end{definition}

\vspace{-0.35cm}

Thus, $P$ is unbeatable if for all protocols~$Q$ that solve~$S$, if there exist an adversary~$\alpha$ and process~$i$ such that~$i$ 
decides in $Q[\alpha]$ strictly earlier than it does in $P[\alpha]$, then there must exist some adversary~$\beta$ and process~$j$ such that $j$ decides strictly earlier in $P[\beta]$ than it does in $Q[\beta]$. An unbeatable solution for~$S$ is $\dom$-minimal among the solutions of~$S$.\footnote{All-case optimal protocols are called {\em ``optimal in all runs''} in \cite{DM}. 
They are called  {\em ``optim\defemph{um}''} protocols by Halpern, Moses and Waarts in \cite{HalMoWa2001}, and unbeatable ones are simply called {\em ``optim\defemph{al}''} there.}

Halpern, Moses and Waarts 
observed that for every consensus protocol $P$ there exists an unbeatable protocol $Q_P$ that dominates~$P$. Moreover, they showed a two-step transformation that defines such a protocol~$Q_P$ based on~$P$. This transformation is based on a notion of {\em continual} common knowledge that is computable, but not in a computationally-efficient manner. They also present a simple and efficient consensus protocol $\hmwopt$ that is claimed to be unbeatable in the crash failure model.

This paper is concerned with the construction of concrete unbeatable~protocols for a number of variants of consensus
in synchronous, message-passing systems with crash failures. 
A new knowledge-based analysis \cite{FHMV,HM1} allows a simpler and more intuitive 
approach to unbeatability than that used in~\cite{HalMoWa2001}. 
Our main contributions are:
\vspace{-0.2cm}

\begin{enumerate}

\item A knowledge-based approach to the design and presentation of consensus protocols is employed, based on the knowledge of preconditions principle. 

\item The first unbeatable protocols are presented for (non-uniform) consensus and $k$-set consensus, and uniform consensus in the crash failure model. A protocol that strictly dominates all known protocols for uniform $k$-set consensus is presented. 

\item The unbeatable consensus protocol strictly dominates the $\hmwopt$ protocol from \cite{HalMoWa2001}, proving that $\hmwopt$, which was claimed to be beatable is {\em not} unbeatable. 

\item The proofs of unbeatability are combinatorial, and do not require topological or reduction-based arguments 
even for the $k$-set consensus protocol. 
A second, topological, proof for the $k$-set consensus protocol is presented in the appendix, and is compared with the combinatorial proof.
This is the first result that we know to have proofs of both kinds, and the comparison sheds light on the relationship between these two approaches. 

\item While the proof for consensus is strikingly succinct, both the proofs for $k$-set consensus and for uniform consensus are technically challenging and highly nontrivial. 

\end{enumerate}

\vspace{-0.2cm}

Full proofs of all technical claims stated in the paper are given in the Appendix.
 
In the rest of this section we sketch the intuition behind, and the structure of, our unbeatable protocols for consensus and $k$-set consensus in the crash failure model. The technical development substantiating this sketch is presented in the later sections. 

Denote by $\exists v$ the fact that at least one of the processes started out with initial value~$v$. 
In the standard 
(non-uniform)
version of consensus, there is an \emph{a priori} bound of $\tee$ on the number of failures, 
initial values are $v_i\in\{0,1\}$, and the following properties must hold in every run~$r$: 

\vspace{-0.15cm}

\begin{itemize}
\item[]{\bf Agreement:}\quad All correct processes that decide in~$r$ must decide on the same value. 
\item[]{\bf Decision:}\quad Every correct process must decide on some value, and 
\item[]{\bf Validity:}\quad For every value~$v$, a decision on~$v$ is allowed only if~$\exists v$ holds. 
\end{itemize}

\vspace{-0.15cm}
Since $\exists v$ is a precondition for deciding~$v$ by the \Validity\ property, a process cannot decide~$v$ unless it \defemph{knows} that~$\exists v$ is true. Indeed, Dolev presented a consensus protocol~$B$ (for {\em ``Beep''}) for the crash failure model in which a process decides on the particular value $v=0$ if and only if it {\em knows} $\exists 0$ \cite{DolevBeep}. It follows from \cite{HalMoWa2001} that there must exist an unbeatable protocol dominating~$B$. Clearly, $B$ decides on~0 as soon as possible. When is the earliest time at which it is possible to decide~1 in a protocol in which decisions on~0 use the rule employed by~$B$? 
Intuitively, a process should decide~1 once it knows that the rule for~0 will never hold for any correct process. Namely, 
let $\nnz$ be the fact that no correct process will {\em ever} know that~$\exists 0$ in the current run. 
Clearly, a process cannot decide~1 before it knows $\nnz$, as this would allow a run violating the \Agreement\ property. 
On the other hand, deciding~1 when $\nnz$ is known {\em is sound}, since no process will ever decide~0, because knowing $\exists 0$ is a precondition for deciding~0. 
To turn this argument into a protocol, we need to present a concrete test for when a process knows $\exists 0$ and when it knows $\nnz$. 
This is facilitated by considering message chains between processes at different times. 

\vspace{-0.15cm}

A {\em process-time node} is a pair $\node{i,m}$ referring to process~$i$ at time~$m$. We say that $\node{j,\ell}$ is {\em seen by} $\node{i,m}$ (in a given run~$r$) if 
there exists a message chain from $j$ at time~$\ell$ to~$i$ at time~$m$. 
It will be convenient to consider $\node{j,\ell}$ as being {\em hidden from} $\node{i,m}$ (in~$r$) if both (a) $i$ does not know that $j$ has failed before time~$\ell$ (it sees no node $\node{j',\ell}$ that did {\em not} see~$\node{j,\ell-1}$, which would prove that~$j$ failed earlier), and (b)~$\node{j,\ell}$ is not seen by $\node{i,m}$. 
It is straightforward to efficiently compute whether $\node{j,\ell}$ is {\em hidden from} $\node{i,m}$
in a run with adversary~$\alpha$ based on the communication graph~$\CG_\alpha$. 
Finally, we say that {\em there is a \defemph{hidden path} with respect to~$\node{i,m}$} in run~$r$ if there exists a sequence of processes $j_0,\ldots, j_{m-1},j_m$ such that $\node{j_\ell,\ell}$ is hidden from $\node{i,m}$, for all $\ell=0,\ldots,m$. For an illustration of hidden paths (indeed, of three disjoint hidden paths) with respect to~$\node{i,2}$, see Figure~\ref{lemma1:second}.
Observe that there does not exist a hidden path with respect to~$\node{i,m}$ precisely if, for some time~$\ell<m$, no node~$\node{j,\ell}$ is hidden from $\node{i,m}$. I.e., if for all processes~$j=1,\ldots,n$, either $\node{j,\ell}$ is seen by $\node{i,m}$, or process~$i$ knows at time~$m$ that~$j$ crashed {\em before} time~$\ell$. 

 A process~$i$ knows~$\exists 0$ (and so can decide~0) at time~$m$ iff it starts with initial value~0, or if some $\node{j,0}$ for a process~$j$ with initial value~0 is seen by $\node{i,m}$. 
For deciding~1, a process knows $\nnz$  exactly if  it knows that \defemph{no active process}  currently knows $\exists 0$. 
Based on this, we show that a process~$i$ knows $\nnz$ at time~$m$ precisely if both (a)~$i$ does not know~$\exists 0$, 
and (b)~no hidden path w.r.t.\ $\node{i,m}$ exists.  
As we show in Section~\ref{sec:PA-con}, this 
protocol can be efficiently implemented without the use of large messages. The resulting protocol is shown in \cref{sec:PA-con} to be unbeatable. It is the first unbeatable protocol for consensus.

For $k$-set consensus the set~$\Vals$ of possible initial values contains at least the $k+1$ values, 
$\{0,\ldots,d\}$, $d \geq k$,  the \Validity\ and \Decision\ properties are as in consensus, and the \Agreement\ property is replaced by 

\vspace{-0.15cm}

\begin{itemize}
\item[]{\bf \defemph{k}-Agreement:}\quad The correct processes that decide in~$r$ decide on at most $k$ distinct values. 
\end{itemize}

\vspace{-0.15cm}

As in the case of consensus, the \Validity\ condition implies that knowing $\exists v$ is also a precondition for deciding~$v$ in this variant of consensus.   Our unbeatable solution to $k$-set consensus is a natural generalization of the one for consensus. 
Define~$v\in\Vals$ to be a \defemph{low} value if $v\in\{0,\ldots, k-1\}$.
At the first instance at which a process~$i$  sees a low value, it decides on the minimal low value it has seen. If $i$ has not seen a low value by time~$m$, it can decide on a value provided that there do not exist $k$ process-disjoint hidden paths with respect to $\node{i,m}$.
Again, this translates into a simple condition regarding the existence of at least $k$ hidden nodes from $\node{i,m}$ at all times $\ell=0,\ldots,m$. 
When this condition holds, $i$ decides on the minimal value that it has seen. 
As discussed above, proving unbeatability of this protocol (\cref{thm:optmink}) is a nontrivial challenge. 

It is often of interest to consider {\em uniform} consensus \cite{CBS-uni,Dutta-uni,H86,KR-uni,Raynal04-uni,WTC-uni}  in which the \Agreement\ property is replaced by

\vspace{-0.15cm}

\begin{itemize}
\item[]{\bf Uniform~Agreement:}\quad The  processes that decide in $r$ must all decide on the same value. 
\end{itemize}

\vspace{-0.15cm}

This forces correct processes and faulty ones to act in a consistent manner.
This requirement makes sense only in a setting where failures are benign, and all processes that decide do so according to the protocol. 
Uniformity may be desirable 
when elements outside the system can observe decisions, as in distributed databases when decisions correspond to commitments to values.
As we shall see, the uniformity constraint strengthens the preconditions for decision, resulting in slower protocols. Therefore, it should be
avoided if possible. We present the first unbeatable protocol for uniform consensus. While it is both conceptually and structurally similar to our unbeatable consensus protocol, the proof of its unbeatability (\cref{thm:u-opt}) is significantly more subtle.

In an asynchronous setting, any non-uniform consensus protocol must also solve uniform consensus. Since the study of $k$-set consensus was initially performed in an asynchronous setting, the common version of $k$-set consensus in the literature is a uniform variant, in which $k$-Agreement is replaced by

\vspace{-0.15cm}

\begin{itemize}
\item[]{\bf Uniform \defemph{k}-Agreement:}\quad The processes that decide in~$r$ decide on at most $k$ distinct values. 
\end{itemize}

\vspace{-0.15cm}

We present a protocol for uniform $k$-set consensus, generalizing our unbeatable uniform consensus protocol and building upon our unbeatable (non-uniform) $k$-set consensus one. This protocol strictly dominates all existing protocols in the literature~\cite{CHLT,GGP,GP09,RRT}, and matches the worst-case bounds for this problem. Whether this protocol is unbeatable remains an open question. 

\vspace{-0.2cm}

\section{Preliminary Definitions}
\label{sec:model}

\vspace{-0.35cm}

Our model of computation is a synchronous, message-passing model with 
benign crash failures.
A system has~\mbox{$n\!\ge\!2$} processes denoted by  
$\Proc=\{1,2,\ldots,n\}$. 
Each pair of processes is connected by a two-way communication link,
and each message is tagged with the identity of the sender.
They share a discrete global clock that starts out at time~$0$ and
advances by increments of one. Communication in the system proceeds in
a sequence of \emph{rounds}, with round~$m+1$ taking place between
time~$m$ and time~$m+1$.
Each process starts in some \emph{initial state} at time~$0$,
usually with an \emph{input value} of some kind.
In every round, each process first sends a set of
messages to other processes, then receives messages sent to it
by other processes during the same round, and then performs some
local computation  based on the
messages it has received.

A faulty process fails by \emph{crashing} in some round~$m\ge 1$. 
It behaves correctly in the first~$m-1$ rounds and 
sends no messages from round~$m+1$ on. 
During its crashing round~$m$, the process may succeed in
sending messages on an arbitrary subset of its links. 
We assume that at most~$\tee \leq n-1$ processes fail in any given execution.

A \defemph{failure pattern} describes how processes fail in an execution.
It is a layered graph~$\FP$ whose vertices are process-time
pairs~$\node{i,m}$ for $i\in\Proc$ and $m\ge 0$. 
Such a vertex denotes process~$i$ and time~$m$. 
An edge has the form $(\node{i,m-1},\node{j,m})$ 
and it denotes the fact that a message sent by~$i$ to~$j$ in round~$m$ would be delivered successfully. 
Let~$\Crash(\tee)$ denote the set of failure patterns in which 
at most~$\tee$ crash failures can occur. 
An \defemph{input vector} describes what input the processes receive in an 
execution. The only inputs we consider are initial values that processes obtain at time~0. 
An input vector is thus a tuple $(v_1,\ldots,v_n)$ where~$v_j$ is the input to process~$j$. 
We think of the input vector and the failure pattern as being determined by an external scheduler, and thus a  pair $\alpha=(\vec{v},\FP)$ is called an {\em adversary}.

A \defemph{protocol} describes what messages a process sends and what decisions it takes, 
as a deterministic function
 of its local state
at the start of a round and the messages received during a round.
We assume that a protocol~$P$ has access to the values of~$n$ and~$\tee$,
typically passed to~$P$ as parameters.

A \defemph{run} is a description of an infinite behaviour of the system.
Given a run~$r$ and a time~$m$, 
$r_i(m)$ denote the \defemph{local state} of process~$i$ at time~$m$ in~$r$
and the \defemph{global state} at time $m$ 
is defined to be $r(m)=\node{r_1(m),r_2(m),\ldots,r_n(m)}$.
A protocol~$P$ and an adversary~$\alpha$ uniquely determine a run, 
and we write $r = P[\alpha]$.

Since we restrict attention to benign failure models and focus on decision times and solvability in this paper, Coan showed that it is sufficient to consider {\em full-information} protocols ({\em fip}'s for short), defined below \cite{Coan}. 
There is a convenient way to consider such protocols in our setting. 
With an adversary $\alpha=(\vec{v},\FP)$ we associate a \defemph{communication graph} $\CG_\alpha$, 
consisting of the graph~$\FP$ extended by labelling the initial nodes $\node{j,0}$ with the initial states $v_j$ according to~$\alpha$. 
With every node $\node{i,m}$ we associate a subgraph  $\Ga(i,m)$ of~$\CG_\alpha$, which we think of as $i$'s {\em view} at $\node{i,m}$.
Intuitively, this graph will represent all nodes $\node{j,\ell}$ from which $\node{i,m}$ has heard, and the initial values it has seen. 
Formally, $\Ga(i,m)$ is defined by induction on~$m$. 
$\Ga(i,0)$ consists of the node $\node{i,0}$, labelled by the initial value~$v_i$. 
Assume that $\Ga(1,m),\ldots,\Ga(n,m)$ have been defined, and let $J\subseteq\Proc$ be the set of processes~$j$ such that $j=i$ or $e_j=(\node{j,m},\node{i,m+1})$ is an edge of~$\FP$. Then $\Ga(i,m+1)$ consists of the node $\node{i,m+1}$, the union of all graphs $\Ga(j,m)$ with $j\in J$, and the edges 
$e_j=(\node{j,m},\node{i,m+1})$ for all $j\in J$. 
We say that $(j,\ell)$ is {\em seen} by $\node{i,m}$ if $(j,\ell)$ is a node of $\Ga(i,m)$. Note that this occurs exactly if $\FP$ allows a (Lamport) message chain starting at $\node{j,\ell}$ and ending at $\node{i,m}$.

A full-information protocol $P$ is one in which at every node $\node{i,m}$ of a run $r=P[\alpha]$ the process~$i$ constructs $\Ga(i,m)$ after receiving its round~$m$ nodes, and sends $\Ga(i,m)$ to all other processes in round~$m+1$. In addition, $P$ specifies what decisions $i$ should take at $\node{i,m}$ based on $\Ga(i,m)$.\footnote{Observe that in benign models fip's  
 do not involve exponentially large states nor exponentially large messages. 
In the crash failure model processes need only send the new edges and nodes that the learn about in every round, rather than the graph $\Ga(i,m)$.}
Full-information protocols thus differ only in the decisions taken at the nodes. 
Let $\dec(i,m)$ be the history of decisions taken by~$i$ up to time~$m$. 
Thus, in a run $r=P[\alpha]$, we define the local state 
$r_i(m) = \langle \dec(i,m),\Ga(i,m)\rangle$. 

\vspace{-0.15cm}

\subsection{Knowledge}

\vspace{-0.35cm}

Our construction of unbeatable protocols will be assisted and guided by a knowledge-based analysis, in the spirit of \cite{FHMV,HM1}. We now define only what is needed for the purposes of this paper. For a comprehensive treatment, the reader is referred to~\cite{FHMV}. 
Runs are dynamic objects, changing from one time point to the next. E.g., at one point process~$i$ may be undecided, while at the next it may decide on a value. Similarly, the set of initial values that~$i$ knows about, or has seen, may change over time. In addition, whether a process knows something at a given point can depend on what is true in other runs in which the process has the same information. 
We will therefore consider the truth of facts at {\em points} $(r,m)$---time~$m$ in run~$r$, with respect to a set or runs~$R$ (which we call a \defemph{system}). 
The systems we will be interested will have the form $R_P=R(P,\gamma)$ where $P$ is a protocol and $\gamma=\gamma(\Vals^n,\Fmodel)$ is the set of all adversaries that assign initial values from~$\Vals$ and failures according to~$\Fmodel$. We will write $(R,r,m)\sat A$ to state that fact~$A$ holds, or is satisfied, at $(r,m)$ in the system~$R$.

The truth of some facts can be defined directly. 
For example, the fact $\exists v$ will hold at $(r,m)$ in~$R$
if some process had initial value~$v$ in~$r$. We say that {\em (satisfaction of)} a fact~$A$ is \defemph{well-defined in~$R$} if 
for every point $(r,m)$ with $r\in R$ we can determine whether or not $(R,r,m)\sat A$. 
Satisfaction of~$\exists v$ is thus well defined. We will write $K_iA$ to denote that \defemph{process~$i$ knows~$A$}, and define: 
\begin{definition}[Knowledge]
\label{def:know}
Suppose that~$A$ is well-defined in~$R$. Then: 

\vspace{-0.2cm}

\begin{tabular}{r l c l}
$(R,r,m)$&\hskip-3mm$\sat K_iA$ & iff & $(R,r',m)\sat A$ ~~\mbox{for all~~} $r'\in R$~~\mbox{such that~~}
$r_i(m)=r'_i(m)$.
\end{tabular}
\end{definition}

\vspace{-0.3cm}

Thus, if $A$ is well-defined in~$R$ then Definition~\ref{def:know} makes $K_iA$ well-defined in~$R$. 
The definition can then be applied recursively, to define the truth of $K_jK_iA$ etc. Moreover, any boolean combination of well-defined facts is also well-defined. Knowledge has been used to study a variety of problems in distributed computing. 
We will make use of the following  fundamental connection between knowledge and action in distributed systems. We say that a fact~$A$ is a \defemph{precondition} for process~$i$ performing action~$\sigma$ in~$R$ if 
$(R,r,m)\sat A$ whenever $i$ performs $\sigma$ at a point $(r,m)$ of~$R$. 
\begin{theorem}[Knowledge of Preconditions, \cite{Mono}]
\label{thm:knowprec}
Assume that $R_P=R(P,\gamma)$ is the set of runs of a deterministic protocol~$P$. 
If $A$ is a precondition for~$i$ performing~$\sigma$ in~$R_P$, then $K_iA$ is a precondition for~$i$ performing~$\sigma$ in~$R_P$.
\end{theorem}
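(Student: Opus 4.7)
The plan is to unfold the definition of $K_i$ and exploit the determinism of $P$. Suppose that process~$i$ performs action~$\sigma$ at point $(r,m)$ in $R_P$; I want to show that $(R_P,r,m)\sat K_iA$. By \cref{def:know}, this amounts to showing that $(R_P,r',m)\sat A$ for every run $r'\in R_P$ satisfying $r'_i(m)=r_i(m)$.

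So fix such an $r'$. The key observation is that, by the very definition of a protocol, the actions~$i$ takes at time~$m$ are a deterministic function of its local state $r_i(m)$. In particular, whether or not~$i$ performs~$\sigma$ at time~$m$ is determined by $r_i(m)$. Since $r'_i(m)=r_i(m)$ and $i$ performs~$\sigma$ at $(r,m)$, process~$i$ must also perform~$\sigma$ at $(r',m)$. Now invoke the assumption that $A$ is a precondition for~$i$ performing~$\sigma$ in $R_P$: this gives $(R_P,r',m)\sat A$, as required. Since $r'$ was an arbitrary run of $R_P$ indistinguishable from $r$ to~$i$ at time~$m$, the definition of knowledge yields $(R_P,r,m)\sat K_iA$, completing the proof.

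There is essentially no obstacle here; the result is almost immediate from the definitions once one notes that ``$i$ performs $\sigma$ at time $m$'' is itself a fact that depends only on $r_i(m)$ (because $P$ is deterministic), and that the precondition hypothesis applies uniformly to every run of $R_P$ in which~$i$ performs~$\sigma$. The only minor subtlety worth flagging is to check that $K_iA$ is well-defined in $R_P$, which follows from the hypothesis that $A$ is well-defined together with the observation in the paragraph preceding \cref{def:know} that Definition~\ref{def:know} turns any well-defined fact into a well-defined $K_i$-fact.
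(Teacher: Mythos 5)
Your proof is correct, and it is the standard argument for this result: determinism of $P$ makes the event ``$i$ performs $\sigma$ at time $m$'' a function of $r_i(m)$ alone, so it transfers to every run $r'$ with $r'_i(m)=r_i(m)$, and the precondition hypothesis then delivers $A$ at each such $(r',m)$, which is exactly $K_iA$ by Definition~\ref{def:know}. Note that the paper does not actually include a proof of Theorem~\ref{thm:knowprec}: it is imported as a black box from the cited reference~\cite{Mono}, so there is no in-paper proof to compare against, but your argument is the right one and your aside about well-definedness of $K_iA$ is the correct thing to flag.
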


\vspace{-0.15cm}

\section{Unbeatable Consensus}
\label{sec:PA-con}

\vspace{-0.35cm}

We are now ready to apply knowledge to design an unbeatable protocol for consensus. We start with the standard version of consensus defined in the Introduction, and consider the crash failure context $\gammacr=\langle\Vals^n,\Crash(\tee)\rangle$, where $\Vals=\{0,1\}$ --- initial values are binary bits. Every protocol~$P$ in this setting determines a system $R_P=R(P,\gamma)$. 
The \Validity\ property of consensus states that $\exists{v}$ is a precondition for deciding~$v$. 
Theorem~\ref{thm:knowprec} immediately implies:
\begin{lemma}\label{lem:know-exists}
$K_i\exists{v}$ is a precondition for~$i$ deciding on value~$v$ in any protocol satisfying the \Validity\
property. 
\end{lemma}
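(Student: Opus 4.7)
The plan is to derive this directly from the Knowledge of Preconditions theorem (Theorem~\ref{thm:knowprec}) by identifying $\exists v$ as a precondition for the action ``decide $v$'' in any protocol $P$ satisfying \Validity. The argument is essentially a one-line invocation of the theorem, but I would break it into three small bookkeeping steps to make sure the hypotheses of Theorem~\ref{thm:knowprec} are in force.

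First, I would verify that the fact $\exists v$ is well-defined in the system $R_P=R(P,\gammacr)$. This is immediate from its definition: whether $\exists v$ holds at a point $(r,m)$ depends only on the initial vector $\vec{v}$ determined by the adversary generating~$r$, so it is in fact time-independent and evaluable at every point of~$R_P$. Hence $K_i\exists v$ is also well-defined by Definition~\ref{def:know}.

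Second, I would argue that $\exists v$ is a precondition for ``$i$ decides $v$'' in~$R_P$. Suppose $i$ decides~$v$ at some point $(r,m)$ of~$R_P$, i.e.\ $v$ appears in $\dec(i,m)$ for the first time at time~$m$. By the \Validity\ property of consensus, a decision on~$v$ is permitted in~$r$ only if $\exists v$ holds in~$r$; thus $(R_P,r,m)\sat\exists v$. This is exactly the definition of $\exists v$ being a precondition for the action ``decide $v$'' performed by~$i$ in~$R_P$.

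Finally, I would apply Theorem~\ref{thm:knowprec} with $A=\exists v$ and $\sigma=$``decide $v$'' to conclude that $K_i\exists v$ is a precondition for~$i$ deciding~$v$ in~$R_P$. Since $P$ was an arbitrary protocol satisfying \Validity, the lemma follows. There is no real obstacle here; the content of the lemma is entirely packaged inside Theorem~\ref{thm:knowprec}, and the only thing to check carefully is that \Validity\ indeed gives $\exists v$ the status of a precondition in the precise sense required by that theorem.
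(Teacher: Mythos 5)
Your proof is correct and matches the paper's approach exactly: the paper's own proof is literally the one line ``\cref{lem:know-exists} follows from \cref{thm:knowprec}.'' Your expansion merely spells out the bookkeeping (well-definedness of $\exists v$, that \Validity\ gives $\exists v$ the status of a precondition) that the paper leaves implicit.
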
 

\vspace{-0.25cm}

Since we restrict attention to full-information protocols, $(R_P,r,m)\sat K_i\exists v$  exactly if a node $\node{j,0}$ with initial value~$v$ is seen by $\node{i,m}$. 
For if not, then a run~$r'$ of the same protocol exists with $r'_i(m)=r_i(m)$ in which all initial values are~$\boldsymbol{\ne}v$. Notice that this depends only on the adversary $\alpha=(\vec{v},\FP)$: 
If $r=P[\alpha]$ and $r'=Q[\alpha]$ then, for all $i$ and $m$ we have $(R_P,r,m)\sat K_i\exists{v}$ iff  
$(R_Q,r',m)\sat K_i\exists{v}$.

While $K_i\exists{v}$ is a necessary condition for deciding~$v$, if $K_i\exists{0}$ is used as  a sufficient condition for~$\decideZ$ then $K_i\exists{1}$ cannot be sufficient for $\decideO$, since this would violate \Agreement: 
Everyone would decide on their own value at time~0. The following is a consensus protocol in which decisions on~0 are performed as soon as possible:

\noindent
\underline{{\bf Protocol}~$\Pz$}
(for an undecided process~$i$ at time~$m$):\\
\begin{tabular}{ll}
\qquad\qquad{\bf if}  ~$K_i\exists{0} $  & {\bf then} $\decideZ$\\
\qquad\qquad{\bf  elseif}  ~$m=\tee+1$
& {\bf then} $\decideO$
\end{tabular}

$\Pz$ is essentially the early stopping protocol from~\cite{DRS}. We know from \cite{HalMoWa2001} that there exists an unbeatable solution to consensus that dominates~$\Pz$. 
A key step in establishing unbeatability in this case is based on the following lemma
(see Appendix~\ref{sec-proofs-consensus} for proofs):  
\begin{lemma}\label{lem:decide-when-0}
If $Q\dom\Pz$ solves consensus, then every active process~$i$ decides~0 in~$Q$ when $K_i\exists{0}$ first holds. 
\end{lemma}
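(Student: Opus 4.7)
The plan is to induct on the first time $m$ at which $K_i\exists0$ holds in $Q[\alpha]$. The base case $m=0$ is immediate: $K_i\exists0$ at time $0$ forces $v_i=0$, so $K_i\exists1$ fails, and by Lemma~\ref{lem:know-exists} process $i$ cannot decide~$1$ at time~$0$; since $\Pz[\alpha]$ has $i$ decide~$0$ at time~$0$, domination forces $i$ to decide in $Q[\alpha]$ at time~$0$, and the only valid decision is~$0$.

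For the inductive step, fix $\alpha$ and an active $i$ with first-time $m\ge 1$. Domination gives some decision time $m'\le m$ in $Q[\alpha]$; it suffices to rule out a decision of~$1$, since Lemma~\ref{lem:know-exists} will then force $m'=m$ with decision~$0$. Assume for contradiction that $i$ decides~$1$ at $m'$. I would pick any message chain $\langle j_0,0\rangle,\ldots,\langle j_{m-1},m-1\rangle,\langle i,m\rangle$ with $v_{j_0}=0$ witnessing $K_i\exists0$ at $\langle i,m\rangle$, and set $j:=j_{m-1}$. Minimality of $m$ ensures $j\ne i$; the prefix of the chain shows $K_j\exists0$ at time $m-1$, so $j$'s first-time $m_1$ satisfies $m_1\le m-1$; and since $j$ transmits to $i$ in round~$m$, $j$ is active through time $m-1$ and hence through $m_1$. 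The inductive hypothesis applied to $(\alpha,j)$ then yields that $j$ decides~$0$ at time $m_1$ in $Q[\alpha]$.

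To derive the contradiction I build $\beta$ from $\alpha$ by keeping the initial values and simply removing any crashes of $i$ and $j$, thereby promoting them to correct processes; since failures only decrease, $\beta\in\Crash(\tee)$. Because $i$ and $j$ are alive through times $m$ and $m-1$ respectively in $\alpha$, the edges added to the failure pattern lie in rounds $\ge m$ (for $j$) and $>m$ (for $i$), while the chain edge $\langle j,m-1\rangle\to\langle i,m\rangle$ is already present in $\alpha$. A quick check confirms that $i$'s view at every time $\le m$ and $j$'s view at every time $\le m-1$ coincide in $Q[\alpha]$ and $Q[\beta]$, so $i$ still decides~$1$ at $m'\le m$ and $j$ still decides~$0$ at $m_1\le m-1$ in $Q[\beta]$; but both processes are now correct, violating \Agreement\ and contradicting $Q$ solving consensus.

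The principal subtlety I expect lies in this last step: one must simultaneously preserve the relevant prefix of the computation so that the two incompatible decisions survive, and promote both $i$ and $j$ to correct processes so that \Agreement\ is applicable to the pair. The clean observation enabling it is that making a process correct only strengthens its outgoing communication at or after its original crash round, which for the chosen $i$ and $j$ sits at times past which neither $i$'s decision at $m'\le m$ nor $j$'s decision at $m_1\le m-1$ is affected.
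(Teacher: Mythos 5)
Your proof is correct and follows essentially the same route as the paper's: induct on the first time $K_i\exists 0$ holds, use the inductive hypothesis on the last sender $j$ in the message chain, and then construct an indistinguishable run in which both $i$ and $j$ are correct to contradict Agreement. The only difference is that you are slightly more explicit than the paper at the last step: the paper's text only mentions removing $j$'s crash, whereas you also remove $i$'s, which is what is actually needed for the (non-uniform) Agreement property to apply to the pair; your accompanying check that the added failure-pattern edges lie strictly beyond the times at which $i$'s and $j$'s decisions are taken is the right way to justify this.
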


\vspace{-0.25cm}

By the \Agreement\ property, a precondition for deciding~1 is that no correct process \defemph{ever} decides~0. 
By Lemma~\ref{lem:know-exists}, in consensus protocols that dominate~$\Pz$ processes decide~0 as soon as they know $\exists{0}$. 
It follows that a precondition for deciding~1
is that no correct process will {\em ever} know $\exists{0}$ (denoted by $\nnz$). Indeed, by the Knowledge of Preconditions Theorem~\ref{thm:knowprec},  a process deciding~1 must know this fact. It turns out that this is equivalent to knowing that no active process {\em currently knows} $\exists{0}$. Using this we can show:

\begin{lemma}
\label{lem:know-igno2}
In a full-information protocol in~$\gammacr$,  the following facts are equivalent at time~$m$: 
\vspace{-0.25cm}

\begin{itemize}
\item $K_i(\nnz)$ and 
\item $\neg K_i\exists{0}~~\boldsymbol{\&}~$  there is no hidden path w.r.t.~$\node{i,m}$.
\end{itemize}
\end{lemma}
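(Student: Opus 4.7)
I aim to prove the equivalence by constructing, for each direction, an alternative run $r'$ satisfying $r'_i(m) = r_i(m)$, equivalently an adversary $\alpha'$ with $\CG_{\alpha'}(i,m)=\CG_\alpha(i,m)$. Two observations underlie the argument. First, $K_iA$ fails at $(r,m)$ iff some such $r'$ falsifies $A$, so $K_i(\nnz)$ fails precisely when some indistinguishable $r'$ contains a correct process that eventually learns $\exists 0$. Second, a process $j$ knows $\exists 0$ at time $\ell$ in a given run iff its view there contains a time-$0$ node labelled $0$, equivalently iff that run carries a message chain from such a node to $\langle j,\ell\rangle$.

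\textbf{The ``only if'' direction} (contrapositive). Assume $K_i(\nnz)$. If $K_i\exists 0$ held at $(r,m)$, then extending $r$ to an indistinguishable $r'$ in which $i$ never crashes would exhibit $i$ itself as a correct process that knows $\exists 0$, falsifying $\nnz$; such an extension is possible because $i$ is alive at $\langle i,m\rangle$ and the modification only \emph{decreases} the crash count, respecting the bound $\tee$. Now suppose a hidden path $j_0,\ldots,j_m$ with respect to $\langle i,m\rangle$ exists in $r$. I construct $\alpha'=(\vec v',\FP')$ by setting $v'_{j_0}:=0$, keeping each $j_\ell$ alive through time $\ell$ in $\FP'$, installing edges $(\langle j_\ell,\ell\rangle,\langle j_{\ell+1},\ell+1\rangle)$ for $\ell<m$, and forcing $j_m$ to be correct forever. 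Then $r':=P[\alpha']$ delivers the value $0$ along the chain to $j_m$ by time $m$, and $j_m$ is correct in $r'$, so $\nnz$ fails in $r'$. The central verification is that $\CG_{\alpha'}(i,m)=\CG_\alpha(i,m)$: clause~(b) of hiddenness puts each modified node $\langle j_\ell,\ell\rangle$ (and every new edge into it) outside $\CG_\alpha(i,m)$, while clause~(a) guarantees that $i$ has no evidence in $r$ that any $j_\ell$ crashed before time $\ell$, so the additional liveness required by $\FP'$ is consistent with $i$'s view. The crash budget is respected automatically, since all modifications only turn potential crashes into non-crashes.

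\textbf{The ``if'' direction.} Assume $\neg K_i\exists 0$ at $(r,m)$ and no hidden path. By the remark preceding the lemma, fix a time $\ell\le m$ at which every node $\langle j,\ell\rangle$ is either seen by $\langle i,m\rangle$ or known by $i$ to correspond to a process crashed before time $\ell$. Take any $r'$ with $r'_i(m)=r_i(m)$ and suppose toward contradiction that some correct process $k$ knows $\exists 0$ at some $(r',m')$ with $m'\ge m$; then $r'$ carries a chain $q_0,q_1,\ldots,q_{m'}=k$ with $q_0$'s initial value $0$, and in particular $q_\ell$ is alive at time $\ell$ in $r'$. Apply the dichotomy at layer $\ell$ to $j=q_\ell$. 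If $\langle q_\ell,\ell\rangle$ is seen by $\langle i,m\rangle$, then the sub-chain $q_0,\ldots,q_\ell$ places $\langle q_0,0\rangle$ (labelled $0$) into $\CG_{\alpha'}(q_\ell,\ell)\subseteq\CG_{\alpha'}(i,m)=\CG_\alpha(i,m)$, whence $K_i\exists 0$ holds at $(r,m)$, a contradiction. Otherwise $i$ knows in $r$, hence also in $r'$, that $q_\ell$ crashed before time $\ell$, directly contradicting $q_\ell$'s activity at time $\ell$ on the chain of $r'$. I expect the principal technical delicacy to lie in the adversary construction of the hidden-path sub-case above, where clauses~(a) and~(b) of hiddenness must be used in concert to preserve $\CG_\alpha(i,m)$ while realising the chain---this joint use is precisely the structural reason a hidden path is the ``right'' notion for characterising $K_i(\nnz)$.
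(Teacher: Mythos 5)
Your proof is correct and follows the same route as the paper, which handles the ``only if'' direction via the adversary construction underlying Lemma~\ref{exist-hidden-channels} (specialized to one hidden node per layer, i.e.\ hidden capacity $1$) and the ``if'' direction via the argument inside the \kAgreement\ part of Lemma~\ref{k-set-correct}'s proof (specialized to $k=1$); you carry out both specializations directly, which makes the ``if'' direction a genuinely self-contained dichotomy argument rather than a pointer. One imprecision in your ``central verification'' is worth flagging: the new edges that threaten to change $i$'s view are not primarily those \emph{into} $\langle j_\ell,\ell\rangle$ but those \emph{out of} $\langle j_\ell,\ell-1\rangle$. If in $r$ the process $j_\ell$ crashed in round~$\ell$ reaching only a subset of processes, then keeping it alive through time~$\ell$ forces it to send to \emph{everyone} in round~$\ell$, and some of those recipients could a priori lie inside $\CG_\alpha(i,m)$. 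Clause~(a) is what rules this out: every node $\langle p,\ell\rangle$ that $\langle i,m\rangle$ sees must already have seen $\langle j_\ell,\ell-1\rangle$ in $r$ (otherwise $i$ would have proof that $j_\ell$ died before~$\ell$), so the added round-$\ell$ messages land only at nodes $i$ never sees by time~$m$. You flag exactly this as the delicate spot; the check just described is what discharges it, and it is precisely what the backward induction in Lemma~\ref{exist-hidden-channels} makes rigorous (together with handling the edge case of a hidden path that revisits the same process at several layers, which forces extending that process's life to the maximum such layer rather than to~$\ell$).
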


\vspace{-0.25cm}

As long as there is a hidden path w.r.t.~$\node{i,m}$, process~$i$ considers it possible that some process
currently knows~$\exists{0}$. On such a path is excluded, it knows that it is safe to decide~1. 
This leads to an unbeatable protocol in which decisions on~0 occur as soon as possible, and on~1 as soon as a process knows that~0 will never be decided on: 

\noindent
\underline{{\bf Protocol}~$\OptZ$}
 (for an undecided process~$i$ at time~$m$):\\
\begin{tabular}{ll}
\qquad\qquad  {\bf if}~~$K_i\exists{0}$  &  {\bf then} ~$\decideZ$\\
\qquad\qquad  {\bf elseif}~~no hidden path w.r.t.~$\node{i,m}$ exists
&{\bf then} ~$\decideO$
\end{tabular}

Since we assume for simplicity that communication in our protocols is according to the full-information protocol, we only specify how processes decide.
By \cref{lem:decide-when-0,lem:know-igno2}, we have

\begin{theorem}\label{thm:optz}
$\OptZ$ is an unbeatable consensus protocol in $\gammacr$. 
 \end{theorem}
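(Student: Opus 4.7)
The plan is to establish the two parts of the claim in turn: first that $\OptZ$ solves consensus in $\gammacr$, and then that no consensus protocol strictly dominates it.

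For solvability I would verify \Validity, \Agreement, and \Decision\ in sequence. \Validity\ for value $0$ is immediate from the knowledge axiom applied to the first guard (or directly from \cref{lem:know-exists}). \Validity\ for value $1$ is vacuous: if no process has initial value $1$, then every process starts with $0$, so $K_i\exists{0}$ holds at time $0$ for every $i$ and no process ever reaches the second guard. For \Agreement, if some $i$ decides $1$ at time $m$, its guard gives $\neg K_i\exists{0}$ together with the absence of a hidden path w.r.t.~$\node{i,m}$, and \cref{lem:know-igno2} then yields $K_i\nnz$; truthfulness of $K_i$ gives $\nnz$, so no correct $j$ ever satisfies $K_j\exists{0}$, and \cref{lem:know-exists} rules out any correct $j$ ever deciding $0$. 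For \Decision\ I would argue that a hidden path w.r.t.~$\node{i,\tee+1}$ would require $\tee+2$ hidden nodes on distinct levels, which by a standard pigeonhole argument over the $\leq\tee$ crashes cannot occur; hence by time $\tee+1$ every active process either already satisfies $K_i\exists{0}$ or triggers the second guard.

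For unbeatability I would suppose $Q$ solves consensus in $\gammacr$ with $Q\dom\OptZ$ and derive $\OptZ\dom Q$. The \Decision\ argument also gives $\OptZ\dom\Pz$, so by transitivity $Q\dom\Pz$ and \cref{lem:decide-when-0} applies to $Q$: every active process in $Q$ decides $0$ precisely when $K_i\exists{0}$ first holds. Since both $Q$ and $\OptZ$ are full-information and satisfaction of $K_i\exists{0}$ at $\node{i,m}$ is determined by $\CG_\alpha$ alone, $Q$ and $\OptZ$ produce identical $0$-decision times on every adversary. For the $1$-decisions in $Q$, fix an adversary $\alpha$ and suppose $i$ decides $1$ at time $m$ in $Q[\alpha]$. \Agreement\ for $Q$ combined with \cref{lem:decide-when-0} applied to $Q$ forces no correct process ever to satisfy $K_j\exists{0}$ in $Q[\alpha]$; that is, $\nnz$ holds at $(Q[\alpha],m)$. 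Theorem~\ref{thm:knowprec} then yields $K_i\nnz$ at this point, which by \cref{lem:know-igno2} is equivalent to the pair $\neg K_i\exists{0}$ and ``no hidden path w.r.t.~$\node{i,m}$''. Both conditions depend only on $\CG_\alpha$, so they hold at $\node{i,m}$ in $\OptZ[\alpha]$ as well; hence in $\OptZ[\alpha]$ either $i$ has already decided $0$ or it now satisfies the second guard and decides $1$, and in both cases its decision time is at most $m$.

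The step I expect to require the most care is the passage ``\Agreement\ for $Q$ plus \cref{lem:decide-when-0} implies $\nnz$ at $(Q[\alpha],m)$'': one must argue that if $\nnz$ failed at this point, then in some run indistinguishable to $i$ some correct process would know $\exists{0}$ and, by \cref{lem:decide-when-0}, would decide $0$, violating \Agreement. The cleanest way to phrase this is to note that $\nnz$ is a property of the pair $(\FP,\vec{v})$ determining $\CG_\alpha$, so its truth transports uniformly between $R_Q$ and $R_{\OptZ}$, while the $K_i$ operator behaves coherently because the $i$-indistinguishability class at $\node{i,m}$ is pinned down by the common view $\Ga(i,m)$.
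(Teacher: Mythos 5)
Your proposal is correct and follows essentially the same route as the paper, which also derives the theorem from \cref{lem:decide-when-0} (to pin down $0$-decisions in any dominating protocol) and \cref{lem:know-igno2} (to show a $1$-decision forces the absence of a hidden path). The one step you flag as delicate — extracting $\nnz$ from \Agreement\ when the deciding process $i$ may itself be faulty — does need the extra observation that an active process at $(r,m)$ always considers possible an indistinguishable run in which it is correct, so that \Agreement\ in that run rules out any correct process deciding~$0$, and the conclusion transfers back because the hidden-path/$K_i\exists 0$ conditions depend only on $\Ga(i,m)$.
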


\vspace{-0.3cm}

It is interesting to compare $\OptZ$ with the protocol $\hmwopt$ that was claimed by \cite{HalMoWa2001} to be unbeatable. Both protocols decide~0 when $\exists{0}$ is known, but they differ in the rule for deciding~1. In $\hmwopt$ a process decides~1 following a round in which it has not discovered a new failure. This condition implies the nonexistence of a hidden path, but is strictly weaker than it. E.g., in a run in which  all initial nodes are seen at $\node{i,2}$ but~$i$ has seen one failure in each of the first two rounds, process $i$ decides in~$\OptZ$ 
but does not decide in~$\hmwopt$.  As a result, we have 
\begin{corollary} 
\label{cor:hmw-notopt}
The protocol $\hmwopt$ presented in~\cite{HalMoWa2001} is \defemph{not} unbeatable. 
\end{corollary}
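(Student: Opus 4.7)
The plan is to show that $\OptZ$ strictly dominates $\hmwopt$ in $\gammacr$; combined with \cref{thm:optz}, this forces $\hmwopt$ to be beatable. Because the two protocols decide $0$ by the same rule (as soon as $K_i\exists{0}$ holds), I only need to compare the rules for deciding $1$.

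First I would argue the dominance $\OptZ\dom_{\gammacr}\hmwopt$. Fix an adversary $\alpha$ and suppose that in $\hmwopt[\alpha]$ process $i$ decides $1$ at time $m$, i.e., no new failure is discovered in round $m$. For every process $j$ that $i$ did not already know to have failed, $i$ received a round-$m$ message directly from $j$, so $\node{j,m-1}$ lies in $\Ga(i,m)$; and for every $j$ that $i$ had previously added to its failure set, $i$ already possesses evidence that $j$ crashed before time $m-1$. Thus no node at level $m-1$ is hidden from $\node{i,m}$, so no hidden path with respect to $\node{i,m}$ exists and $\OptZ$ also triggers a decision by time $m$. Combined with the identical rule for~$0$, this yields the dominance.

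Next I would exhibit strict advantage using the scenario alluded to in the discussion preceding the corollary. Take $n\ge 4$, $\tee\ge 2$, pick distinct processes $i,j_0,j_2,k_0$, and set $\vec{v}=(1,\ldots,1)$. Let the failure pattern $\FP$ crash $j_0$ in round $1$ with $j_0$ delivering its round-$1$ message to $k_0$ but not to $i$, and crash $j_2$ in round $1$ with $j_2$ delivering its round-$1$ message to $i$ but to no one else and sending nothing in round $2$; all other round-$1$ and round-$2$ messages go through. Then $i$ discovers $j_0$'s failure in round $1$ and $j_2$'s failure in round $2$, so $\hmwopt$ has $i$ decide at neither time $1$ nor time $2$. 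On the other hand, $k_0$'s round-$2$ message delivers $\node{j_0,0}$ to $i$, and every other initial node is either $i$'s own or was received directly at time $1$, so all of level~$0$ is seen by $\node{i,2}$; hence no hidden path with respect to $\node{i,2}$ exists, and since $\neg K_i\exists{0}$ holds, $\OptZ$ has $i$ decide at time $2$. This establishes strict dominance, and the corollary follows.

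The main technical obstacle will be the dominance step, where one must verify carefully that ``no new failure discovered in round $m$'' really matches the negation of the hidden-node condition at level $m-1$ for every process $j$ (in particular for processes already known to have crashed); the strict-advantage construction is then a direct realization of the informal picture given in the paper.
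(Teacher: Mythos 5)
Your proof is correct and follows the same approach as the paper: show that $\OptZ$ strictly dominates $\hmwopt$ (same rule for deciding~$0$; the ``no new failure in round $m$'' rule implies the ``no hidden path w.r.t.\ $\node{i,m}$'' rule, but not conversely), and since $\OptZ$ solves consensus by \cref{thm:optz}, $\hmwopt$ cannot be unbeatable. Your strict-advantage run is exactly the scenario the paper gestures at; the only small looseness is in the dominance step, where the needed fact is that for a $j$ already in $i$'s failure set at time $m-1$, the node $\node{i,m-1}$ itself serves as a level-$(m-1)$ witness that did not see $\node{j,m-2}$, so $\node{j,m-1}$ fails clause~(a) of the hidden-node definition --- a point you flag yourself as needing verification and which does indeed check out.
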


\vspace{-0.35cm}

Neiger and Bazzi in~\cite{NeigerBazzi} extend the results in \cite{HalMoWa2001}, 
to the 
case of  $\Vals = \{0,\ldots,d\}$ for $d>1$.
We remark that~$\OptZ$ can readily be extended to cover the case in which $\Vals$ contains $\{0,\ldots,d\}$ for $d>1$. The rule for~0 is unchanged, and if no hidden path exists a process can decide on the minimal value it has seen. Thus, a process decides~$v$ when it knows~$\exists v$ and that correct processes will never see a smaller value.
We call this protocol $\OptMin$. In Section \ref{sec-set-consensus}, we show how to extend
$\OptMin$ to the general case of $k$-set consensus.

\vspace{-0.2cm}

\subsection{Majority Consensus}

\vspace{-0.35cm}

Can we obtain other unbeatable consensus protocols? Clearly, the symmetric protocol $\OptO$, obtained from~$\OptZ$ by reversing the roles of~0 and~1, is unbeatable and neither dominates, nor is dominated by, $\OptZ$. 
Of course, $\OptZ$ and~$\OptO$ are extremely biased, each deciding on its favourite value if at all possible, even if only one process has it as an initial value. 
One may argue that it is natural, and may be preferable in many applications, to seek a more balanced solution, in which minority values are not favoured. Fix~$n>0$ and define the fact ``$\Maj=0$'' to be true if more than $n/2$ initial values are~0, while ``$\Maj=1$'' is true if at least 
$n/2$ values are~1. Finally, for a node $\node{i,m}$, we define $\Majvals{i,m}\eqdef 0$ if more than half of the processes whose initial value is known to~$i$ at time~$m$ have initial value~$0$; $\Majvals{i,m}\eqdef 1$ otherwise. 
Consider the following protocol:

\noindent
\underline{{\bf Protocol}~$\OptMaj$}
 (for an undecided process~$i$ at time~$m$):\\
\begin{tabular}{ll}
\qquad\qquad  {\bf if}~~$K_i(\Maj=0)$  &  {\bf then} ~$\decideZ$\\
\qquad\qquad  {\bf elseif}~$K_i(\Maj=1)$ &{\bf then} ~$\decideO$\\
\qquad\qquad  {\bf elseif} no hidden path w.r.t.~$\node{i,m}$ exists & {\bf then} $\decide(\Majvals{i,m})$.
\end{tabular}

\begin{theorem}
\label{thm:optmaj}
If $t>0$, then
$\OptMaj$ is an unbeatable consensus protocol in $\gammacr$. 
\end{theorem}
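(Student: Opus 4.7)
The plan is to follow the template of the proof of \cref{thm:optz}: first verify that $\OptMaj$ solves consensus, then rule out any strict domination via two ``decide-when-known''-style lemmas paralleling \cref{lem:decide-when-0,lem:know-igno2}. Throughout, I will invoke the following monotonicity observation, already implicit in the hidden-path machinery of \cref{sec:PA-con}: if no hidden path w.r.t.~$\node{i,m}$ exists at a witnessing layer~$\ell$, then for every process~$k$ alive at any time $m''\ge\ell$ one has $\Ga(k,m'')\subseteq\Ga(i,m)$, since every chain reaching $\node{k,m''}$ must pass through a node at layer $\ell$ that is seen by $\node{i,m}$. Complementarily, since FIP messages flow each round, $\Ga(j,m+1)\supseteq\Ga(i,m)$ whenever $i$ and $j$ are both correct.

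For correctness, Validity is immediate: each rule decides a value~$v$ that $i$ already knows to be present (rules~1 and~2 because $K_i(\Maj=v)\Rightarrow K_i(\exists v)$; rule~3 because $\Majvals{i,m}$ is by definition drawn from $\Ga(i,m)$). Decision by time $t+1$ is the same pigeonhole used for $\OptZ$: among the $t+1$ layers up to time~$t+1$, the $\le t$ fresh crashes leave some layer with no hidden node w.r.t.~$\node{i,t+1}$, so rule~3 fires if rules~1/2 have not. Agreement among correct processes is proved by case analysis on the pair of rules used by two correct deciders $i,j$: if both use rule~1 or~2, the KoP Theorem forces each decision to equal the global $\Maj$; if $i$ uses rule~3 at time~$m$ while $j$ uses rule~1/2 at time~$m'$, then monotonicity (when $m'\ge m$) or the one-round FIP inclusion from $j$'s round-$(m'+1)$ message (when $m'<m$) both yield $\Ga(j,m')\subseteq\Ga(i,m)$, placing the $>n/2$ (resp.~$\ge n/2$) witnesses of $K_j(\Maj=v)$ inside $\Ga(i,m)$ and contradicting $i$'s use of rule~3; if both use rule~3, monotonicity applied from each side, combined with the FIP inclusion to handle the subcase where $j$'s witnessing layer exceeds~$m$, forces $\Ga(i,m)=\Ga(j,m')$ and hence $\Majvals{i,m}=\Majvals{j,m'}$.

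For unbeatability, let $Q\dom_{\gammacr}\OptMaj$ solve consensus. The first lemma (analog of \cref{lem:decide-when-0}) asserts that in $Q$ every active process decides $v$ at exactly the first time $K_i(\Maj=v)$ holds: the upper bound comes from $Q\dom\OptMaj$, and the lower bound from KoP applied to an indistinguishable adversary in which $\Maj=1-v$, where by Decision some later correct~$j$ achieves $K_j(\Maj=1-v)$ and is forced to decide $1-v$ in $Q$, breaking Agreement. The second lemma (analog of \cref{lem:know-igno2}) asserts that outside this rule-1/2 schedule, $Q$ may decide at time~$m$ only when no hidden path w.r.t.~$\node{i,m}$ exists: a hidden path lets us extend $\CG_\alpha$ while preserving $\Ga(i,m)$ to an adversary in which the hidden initial values tip the majority to $1-\Majvals{i,m}$, so by the first lemma a correct process must later decide $1-\Majvals{i,m}$ and Agreement fails---this extension needs at least one extra crash, which the hypothesis $t>0$ provides. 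Together, the two lemmas force every process to decide in $Q$ at exactly the same time as in $\OptMaj$.

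The principal obstacle is the asymmetric case of Agreement and its mirror in the rule-3 adversary construction: in each, one must keep $\Ga(i,m)$ fixed while arranging that some correct process later has precisely the extra information needed to flip the majority, carefully balancing the flexibility afforded by the hidden path against what $i$ already sees.
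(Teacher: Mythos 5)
Your high-level decomposition matches the paper's: prove a ``decide-when-majority-known'' lemma and a ``no-decision-while-hidden-path-exists'' lemma, then combine. But several of the specific claims you rely on are misstated or have gaps.

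First, your monotonicity observation, $\Ga(k,m'')\subseteq\Ga(i,m)$ for all $k$ alive at $m''\ge\ell$, is literally false: for any $k\ne i$ with $m''\ge 1$, the node $\node{k,m''}$ lies in $\Ga(k,m'')$ but not in $\Ga(i,m)$. What you need (and what the absence of a hidden path actually yields) is the inclusion restricted to layer-$0$ nodes, i.e.\ the set of initial values seen; the downstream Agreement argument survives once this is repaired, but the lemma as written would be rejected.

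Second, and more seriously, your Lemma~1 --- ``in $Q$ every active process decides $v$ at exactly the first time $K_i(\Maj=v)$ holds'' --- is false as a general statement. In $\OptMaj$, and hence in any dominating $Q$, a process may decide strictly before $K_i(\Maj=v)$ ever holds, via the third rule (no hidden path), and it may never come to know $\Maj$. This is why the paper's corresponding Lemma~\ref{maj-decide-first-round} is confined to time~$1$ (where the hidden-path rule cannot have fired earlier); the paper then handles later times by the chaining argument inside Lemma~\ref{maj-not-decide}, propagating the forced decision along a sequence $b_1,\ldots,b_m$, with $b_1$ constructed to be fully informed at time~$1$. Your proposal does not describe any such induction or chaining, and the lower-bound half of your justification for Lemma~1 is circular: you invoke ``the first lemma'' (that $j$ is forced to decide $1-v$) inside the proof of that same lemma. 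Also, ``by Decision some later correct $j$ achieves $K_j(\Maj=1-v)$'' is not a consequence of Decision; in the alternate adversary, it is something one must explicitly \emph{construct} (the paper arranges for $b_1$ to see every initial node at time~$1$), and one must also arrange that the relevant witnesses $b_\ell$ do not decide earlier on the wrong value via the no-hidden-path rule, which the chain construction takes care of.

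The upshot: the two-lemma skeleton is sound and is indeed what the paper does, but you need (i) the decide-when-known lemma restricted to time~$1$ and proved by the paper's induction on the number of unseen zeros, and (ii) the no-earlier-decision lemma proved by the explicit hidden-path chaining that bootstraps from the time-$1$ lemma, rather than the general-time Lemma~1 you sketch. Also note the paper's proof of the theorem handles $n=1,2$ separately; your argument implicitly assumes $n>2$.
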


\vspace{-0.3cm}

Thus, $\OptMaj$ is an unbeatable consensus protocol that satisfies a much stricter validity condition than consensus:
\begin{itemize}
\item[]{\bf Majority Validity:}\quad For $v\in{0,1}$, if more than half of the processes are both correct and have initial value~$v$, then all processes that decide in~$r$ must decide~$v$. 
\end{itemize}

\vspace{-0.2cm}
 
\section{Unbeatable Set Consensus}
\label{sec-set-consensus}

\vspace{-0.35cm}

In this section we present an unbeatable protocol, $\OptMink$, for
(non-uniform) $k$-set consensus.
Recall that for $k$-set consensus the \Agreement\ property of consensus
is replaced with the weaker \kAgreement\ property:
the correct processes decide at most $k$ distinct values.
Since the \Validity\ property of consensus is still required, 
Lemma~\ref{lem:know-exists} applies for $k$-set consensus as well. 

Our protocol $\OptMink$ generalizes the unbeatable consensus protocol $\OptMin$ in Section \ref{sec:PA-con}.
In $\OptMink$, a process $i$ decides on 
a \defemph{low} value (i.e. a value in $\set{0,\ldots,k-1}$) as soon as possible,
namely, the first time $K_i\exists v$ holds, and
decides on a \defemph{high} value $w \in \Vals \setminus \set{0,\ldots,k-1}$, 
as soon as it knows that no $k$ values smaller than $w$ will be decided on.
Recall that $\Vals = \{0,\ldots,d\}$ for some $d \geq k$.

\begin{dfn}
Let $r$ be a run, let~$k$ be a natural number, let $i$ be a process and let $m$ be a time.
We define the following notations, in which
$r$ is implicit.
\vspace{-3mm}
\begin{enumerate}
\item
$\knownvals{i,m}~\eqdef~\{v: K_i \exists v~\mbox{holds at time~$m$}\}$,
\item $\minval{i,m}~\eqdef~\min\knownvals{i,m}$, 
\item $\knownlows{i,m}~\eqdef~ \knownvals{i,m} \cap \set{0,\ldots,k-1}$, and 
\item Process~$i$ is called \defemph{low} at time~$m$ if $\knownlows{i,m} \ne \emptyset$; 
Otherwise, we say that $i$ is \defemph{high} at~$m$.
\end{enumerate}
\end{dfn}

\vspace{-0.25cm}

As already mentioned, in $\OptMink$ low nodes decide immediately. In order
to formalize the decision rule for high nodes, we first formalize the notion
of the amount of process-disjoint hidden paths with respect to a node.

\begin{dfn}
\label{hiddencapacity}
Let $i$ be a process and let $m$ be a time. We define
the \defemph{hidden capacity} of $\node{i,m}$ (in given run) to be the maximum
number $c$ such that  for every time $\ell\le m$, there exist $c$ distinct
processes $i_1^{\ell},\ldots,i_c^{\ell}$ such that  $\node{i_1^{\ell},\ell}$ is hidden from $\node{i,m}$,
for all $\ell\le m$. The nodes $i_b^{\ell}$ are said to be \defemph{witnesses to the
hidden capacity of} $\node{i,m}$.
\end{dfn}

\vspace{-0.35cm}

Analogously to hidden paths, as illustrated in \cref{fig-lemma1} in the Appendix~\ref{sec-proofs-k-set}
a hidden capacity of $c$ indicates that as many as $c$ unknown low values may exist in the system. 
(See Lemma~\ref{exist-hidden-channels} in the Appendix~\ref{sec-proofs-k-set}).
As with hidden paths, it is straightforward to compute whether the hidden capacity of a node $\node{i,m}$
in a run with adversary~$\alpha$ based on the communication graph~$\CG_\alpha$. The hidden capacity
of $\node{i,m}$ can also be very efficiently calculated from the hidden capacity of $\node{i,m\!-\!1}$ using auxiliary
data calculated during the calculation of the latter.
Using these definitions, we phrase a protocol for $k$-set consensus.

\noindent
\underline{{\bf Protocol}~$\OptMink$}
 (for an undecided process~$i$ at time~$m$):\\
\begin{tabular}{ll}
\qquad\qquad {\bf if}~~$i$ is low or~$i$ has hidden capacity $<k$ ~~  {\bf then} ~$\decide(\minval{i,m})$
\end{tabular}

The main technical challenge in proving that $\OptMink$ is unbeatable is,
roughly speaking, showing that e.g.\ in the scenario depicted in \cref{fig-lemma1}, each of the ``hidden'' processes at time $m\!=\!2$ decides on the
unique low value known to it, in \emph{any} protocol that dominates $\OptMink$.
(See \cref{two-face} in Appendix~\ref{sec-proofs-k-set}). We conclude that if~$i$ is high, then it cannot decide without violating \kAgreement. (See \cref{k-set-cant-decide-before} in Appendix~\ref{sec-proofs-k-set}).
We give two proofs for \cref{two-face}. The first is completely constructive, and devoid of any topological arguments
(Appendix~\ref{sec-proofs-k-set}), while the
second is a topological one (Appendix~\ref{sec-topo-proof}). To the best of our knowledge, this is the first result in this field to be given proofs of both kinds,
and a comparative reading sheds light on the relationship between these two dissimilar approaches. 
Our topological proof  reasons in a novel way about subcomplexes of the protocol complex; see \cref{sec-proofs-k-set} for details and a discussion.

The above analysis implies that no $k$-set consensus protocol
can strictly dominate $\OptMink$.
Thus, to prove that $\OptMink$ is unbeatable,
it is enough to show that it indeed solves $k$-set consensus.

\begin{lemma}
\label{k-set-correct}
$\OptMink$ solves $k$-set consensus.
Furthermore, all processes decide in $\OptMink$ by time
$\bigl\lfloor \frac{f}{k} \bigr\rfloor+1$  at the latest.
\end{lemma}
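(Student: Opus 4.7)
The plan is to verify Validity, Decision, $k$-Agreement, and the round bound. \emph{Validity} is immediate: whenever $i$ decides $\minval{i,m}$, this value lies in $\knownvals{i,m}$ by construction, so $K_i\exists\minval{i,m}$ holds and hence $\exists\minval{i,m}$ is true in the run.

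For \emph{Decision} and the $\lfloor f/k\rfloor+1$ bound, the plan is a Dolev--Strong-style counting argument. If $i$ is undecided at time $m$, then $i$ must be high and have hidden capacity at least $k$ at $\node{i,m}$, i.e., at least $k$ distinct processes are hidden at every level $\ell=0,1,\ldots,m$. An inductive analysis of how the hidden-at-level-$\ell$ set evolves with $\ell$ charges at least $k$ fresh crashes per round to the failure budget visible in $i$'s view, yielding $mk\le f$; consequently $m\le\lfloor f/k\rfloor$, so every process decides by time $\lfloor f/k\rfloor+1$.

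The main obstacle is \emph{$k$-Agreement}. Suppose, toward contradiction, that $k+1$ distinct values $v_0<v_1<\cdots<v_k$ are decided by correct processes $j_0,\ldots,j_k$ at times $m_0,\ldots,m_k$. Since only $k$ low values exist, $v_k\ge k$, which forces $j_k$ to decide by the hidden-capacity clause with $v_k=\minval{j_k,m_k}$; in particular $j_k$ does not know $\exists v_s$ for any $s<k$. Because $j_s$ is correct, any decision at $m_s<m_k$ would propagate $\exists v_s$ to $j_k$ by round $m_s+1\le m_k$, contradicting this; hence $m_s\ge m_k$ for every $s<k$. I would derive the desired contradiction by showing that the hidden capacity of $\node{j_k,m_k}$ is in fact $\ge k$: for each $s<k$, the set of processes alive at time $\ell$ knowing $\exists v_s$ is contained in the hidden-at-level-$\ell$ set from $\node{j_k,m_k}$ (else $j_k$ would learn $\exists v_s$), and tracing a canonical $v_s$-propagation chain from an initial $v_s$-owner up to $j_s$ keeps this set non-empty for all $\ell\le m_k$. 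The subtle step is showing that the union over $s<k$ of these sets contains $k$ \emph{distinct} processes at every such level, since the individual sets may overlap through common carriers. This is immediate at level $0$ (the sets are the pairwise disjoint sets of initial $v_s$-owners), and at level $m_k$ in the tight case $m_s=m_k$ the distinct correct processes $j_s$ (which are hidden there since no further communication round has elapsed) serve as witnesses. For intermediate levels the plan is a Hall-type system-of-distinct-representatives argument on the propagation chains, combined with the distinctness of the $j_s$ and the observation that any persistent collision through a level must be paid for by a dedicated crash in the failure budget, yielding the required $k$ distinct hidden witnesses and contradicting the hidden-capacity bound.
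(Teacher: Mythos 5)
Your treatment of \Validity\ and \Decision\ tracks the paper's proof closely: for \Validity, $\minval{i,m}\in\knownvals{i,m}$ gives $K_i\exists v$ and hence $\exists v$; for \Decision, a process undecided at $m$ has hidden capacity $\ge k$, and each hidden witness at level $\ell<m$ corresponds to a distinct crash, yielding $km\le f$ and $m\le\lfloor f/k\rfloor$, so everyone decides by $\lfloor f/k\rfloor+1$. Those parts are fine (the paper carries them out exactly this way).

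Your plan for \kAgreement, however, takes a genuinely different and considerably harder route, and the place where you say it gets ``subtle'' is exactly where it is incomplete. You fix $k+1$ decided values, identify $j_k$ deciding the largest value $v_k\ge k$ via the hidden-capacity clause, and then try to show $\node{j_k,m_k}$ has hidden capacity $\ge k$ by exhibiting, at every level $\ell\le m_k$, $k$ distinct hidden processes drawn from the $v_s$-propagation chains. The problem is the distinctness: at an intermediate level $\ell$, all $k$ chains may pass through a single low process $p$ that already knows $\{v_0,\ldots,v_{k-1}\}$; $p$ is then one hidden process, not $k$, and nothing in your sketch forces the chains apart. You gesture at a Hall-type system-of-distinct-representatives argument and at ``collisions paid for by a dedicated crash,'' but neither is a proof: Hall's condition is not verified, and it is not clear how a collision charges a crash beyond what is already accounted for. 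As written, the argument would not establish that the hidden capacity of $\node{j_k,m_k}$ is $\ge k$, so the contradiction is not reached.

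The paper avoids the whole SDR difficulty by not trying to build $k$ witnesses at all. Take any process $i$ deciding a \emph{high} value $v$ at time $m$; since $i$ decides via the hidden-capacity clause, there is a level $\ell\le m$ at which at most $k-1$ processes are hidden from $\node{i,m}$. For any value $w<v$ decided by a non-faulty $j$ at time $m'$, one shows $m'\ge m\ge\ell$ (else $\node{j,m'}$ would be seen by $\node{i,m}$, contradicting $w<\minval{i,m}$), and then $\knownvals{j,m'}=\bigcup_{h\in H}\knownvals{h,\ell}$ for the set $H$ of processes $j$ sees at level $\ell$; so $w=\minval{h,\ell}$ for some $h$, and that $h$ must be hidden from $\node{i,m}$ (it is not seen, since $w\notin\knownvals{i,m}$, and it is alive at $\ell$). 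Distinct $w,w'$ give distinct $h,h'$ because $\minval{h,\ell}\ne\minval{h',\ell}$, so the map $w\mapsto h$ is injective into the $\le k-1$ hidden processes at level $\ell$. Hence at most $k-1$ values below $v$ are decided. This single-level injection is what your proposal is missing; it replaces the all-levels SDR construction entirely and makes the proof go through cleanly.
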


\vspace{-0.2cm}

The proof of Lemma~\ref{k-set-correct} sheds light on an inductive epistemic definition of $\OptMink$, formalizing the intuitive discussion from the beginning
of this section.
Assume that the decision rules for all values $w<v$ have been defined. Define the
decision rule for $v$ as: $i$ decides on $v$ as soon as it knows that (a) $v$
is valid and (b) no more than $k-1$ values~$<v$ will ever be decided upon.

\begin{theorem}
\label{thm:optmink}
$\OptMink$ is an unbeatable $k$-set consensus protocol in $\gammacr$. 
\end{theorem}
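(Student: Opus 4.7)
The plan is to combine two ingredients that have been set up in the body: that $\OptMink$ solves $k$-set consensus (\cref{k-set-correct}), and that no process can decide strictly earlier in any $k$-set consensus protocol than it does in $\OptMink$ (via \cref{k-set-cant-decide-before}). Together these yield unbeatability, since any strictly dominating $Q$ would need some process, under some adversary, to decide strictly earlier than in $\OptMink$.

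First, I would fix an arbitrary adversary $\alpha\in\gammacr$ and process $i$, and let $m$ be the time at which $i$ decides in $\OptMink[\alpha]$. Inspecting the guard of $\OptMink$, for every earlier time $m'<m$ the process $i$ must be high at $m'$ \emph{and} must have hidden capacity at least $k$ at $\node{i,m'}$; otherwise $\OptMink$ would have decided before time $m$, contradicting the choice of $m$. Since being high and having hidden capacity $\ge k$ are both properties of the view $\Ga(i,m')$, and this view is identical in $\OptMink[\alpha]$ and in $Q[\alpha]$ for any other full-information protocol $Q$, both conditions also hold at $\node{i,m'}$ in $Q[\alpha]$.

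Next, for an arbitrary $k$-set consensus protocol $Q$ in $\gammacr$, I would invoke \cref{k-set-cant-decide-before} to conclude that $i$ cannot decide at $\node{i,m'}$ in $Q[\alpha]$, for any $m'<m$, without some execution of $Q$ violating \kAgreement. Hence $i$ remains undecided through time $m-1$ in $Q[\alpha]$. Since this holds for every $\alpha$ and every $i$, no $k$-set consensus protocol can strictly dominate $\OptMink$. Combined with the correctness of $\OptMink$ supplied by \cref{k-set-correct}, this establishes that $\OptMink$ is an unbeatable $k$-set consensus protocol in $\gammacr$.

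The hard part, as flagged in the text, is proving \cref{k-set-cant-decide-before}, which rests on \cref{two-face}. The heart of that argument is the following construction: whenever the hidden capacity of $\node{i,m'}$ is at least $k$, one obtains $k$ process-disjoint hidden witnesses at every time level $\ell\le m'$, and one can extend the adversary so that the time-$0$ witnesses start with distinct low values $0,1,\ldots,k-1$ while remaining hidden from $\node{i,m'}$. In any $Q\dom\OptMink$, each such witness must eventually decide on the unique low value visible to it — this is precisely the content of \cref{two-face}, which the appendix establishes both combinatorially and topologically. Pairing a hypothetical decision of $i$ (necessarily on some value $\ge k$, since $i$ is high at $m'$) with these $k$ distinct low decisions would produce $k+1$ distinct decision values in a single execution of $Q$, contradicting \kAgreement.
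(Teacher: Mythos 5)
Your route is essentially the paper's route: the paper proves the theorem in one sentence by combining \cref{k-set-correct} and \cref{k-set-cant-decide-before}, exactly the two ingredients you identify, and your first and last paragraphs match the intended decomposition (correctness plus no earlier decision).

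There is, however, a logical slip in your middle paragraph that matters. You invoke \cref{k-set-cant-decide-before} ``for an arbitrary $k$-set consensus protocol $Q$'' and then conclude ``no $k$-set consensus protocol can strictly dominate $\OptMink$.'' But \cref{k-set-cant-decide-before} carries a second hypothesis beyond solving $k$-set consensus: it applies only to protocols in which every process that becomes low at time~$m$ decides by~$m$. An arbitrary $Q$ need not satisfy that, so you cannot invoke the lemma for it. The correct move is to fix a $Q$ that dominates $\OptMink$ (which is all you need, since any strictly dominating $Q$ in particular dominates) and observe that $Q\dom\OptMink$ \emph{forces} the missing hypothesis: whenever $i$ is low at $\node{i,m}$, $\OptMink$ makes $i$ decide by~$m$, hence so must $Q$, and since both are full-information protocols the low property at $\node{i,m}$ is determined by $\alpha$ alone and is identical in both runs. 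With that, \cref{k-set-cant-decide-before} applies to $Q$ and gives you that $i$ cannot decide in $Q[\alpha]$ while high with hidden capacity $\ge k$, i.e.\ not before $\OptMink$ does; so $Q$ never improves on $\OptMink$ and cannot strictly dominate it. You in fact use the correct quantifier ``$Q\dom\OptMink$'' in your final paragraph when discussing \cref{two-face}, so the fix is just to carry that restriction into the middle paragraph and supply the one-line verification that domination implies the low-decides-immediately property.
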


\vspace{-0.35cm}

\section{Unbeatable Uniform Consensus}
\label{sec-uniform}

\vspace{-0.35cm}

Under crash failures, a process generally does not know whether or not it is correct. Indeed, so long as it has not seen~$\tee$ other processes crash, the process may (for all it knows) crash in the future. As a result, $K_i\exists{0}$---the rule for deciding~0 in~$\OptZ$---is an inappropriate rule for deciding~0 in any uniform consensus protocol. This is because a process starting with~0 immediately decides~0 with this rule, and may immediately crash. If all other processes have~1, all other decisions can only be on~1. 
Of course, $K_i\exists{0}$ is still a precondition for deciding~0, but it can be strengthened. Denote by $\cv$ the fact ``some \defemph{correct} process knows $\exists{v}$''. We can show the following:
\begin{lemma}
\label{lem:correct-uni}
${K_i\cv}$ is a precondition for~$i$ deciding~$v$ in any protocol solving Uniform Consensus.
\end{lemma}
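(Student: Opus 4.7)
The plan is to invoke the Knowledge of Preconditions Theorem (\cref{thm:knowprec}): if the bare fact $\cv$ is itself a precondition for $i$ deciding $v$ in $R_P$, then $K_i\cv$ is a precondition too, which is exactly the claim. The entire burden is therefore to verify that whenever $i$ decides $v$ at a point $(r,m)$ of a uniform consensus protocol $P$, some process that is correct in $r$ already knows $\exists v$ at $(r,m)$.

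I would proceed by contradiction. Suppose $i$ decides $v$ at $(r,m)$ with $r = P[\alpha]$, but no correct process of $r$ knows $\exists v$ at time $m$. Let $K$ denote the set of all processes that know $\exists v$ at $(r,m)$; by this assumption $K$ is contained in the faulty set $\Phi$ of $\alpha$. I would then define a sibling adversary $\alpha'$ that keeps all initial values of $\alpha$, matches $\alpha$ exactly through round $m$, and from round $m+1$ on silences every process in $K$ (they crash sending no further messages), while every process outside $K$ follows $\alpha$ as before. Since $K \subseteq \Phi$, the set of correct processes in $\alpha'$ is the same set $C$ as in $\alpha$, and the total number of failures is still at most $t$, so $\alpha'$ is a legal adversary in $\gammacr$.

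In the resulting run $r' = P[\alpha']$ I would argue, by a trivial induction on time, that no process outside $K$ ever sees a $v$-labelled initial node. Through round $m$ this is immediate because $r'$ and $r$ agree there, and by the characterization of $K_j\exists v$ for full-information protocols recalled just above \cref{lem:know-exists}, the ``$v$-aware'' processes at time $m$ are exactly those in $K$; from round $m+1$ on the processes in $K$ are silent, so the fact $\exists v$ cannot propagate outside $K$. Consequently no correct process of $r'$ ever knows $\exists v$, so by \Validity\ combined with \cref{thm:knowprec} no correct process of $r'$ ever decides $v$. Yet every correct process does decide, by \Decision, while $i$ still decides $v$ at time $m$ in $r'$ because its local state at that time is identical to its state in $r$. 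Hence some correct process of $r'$ decides a value different from $v$ while $i$ decides $v$, violating \bftext{Uniform Agreement}.

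The main technical point to get right is that the surgery defining $\alpha'$ must not alter which processes are correct, which is exactly what the hypothesis $K \subseteq \Phi$ buys us. This is also the step that distinguishes the uniform setting from the non-uniform one: in the non-uniform variant a process knowing $\exists v$ need not be faulty, so silencing it could change $C$ and break the final appeal to \bftext{Uniform Agreement}, which is precisely why $K_i\exists v$ (rather than $K_i\cv$) already suffices as a precondition there.
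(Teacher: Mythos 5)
Your proof is correct and uses essentially the same adversary surgery as the paper (the paper silences \emph{all} active-but-faulty processes at time $m$ rather than just the set $K$ of processes knowing $\exists v$, but this achieves the same effect). The only presentational difference is that you first establish the bare fact $\cv$ as a precondition and then invoke \cref{thm:knowprec}, whereas the paper's proof starts from $\lnot K_i\cv$, extracts an indistinguishable run where $\cv$ fails, and reasons inline; the two routes are logically equivalent.
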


\vspace{-0.25cm}

\begin{lemma}
\label{lem:u-know}
Let $r\in R_P=R(P,\gammacr)$  and assume that $i$ knows of $\defemph{d}$ failures at $(r,m)$. 
Then \\
$(R_P,r,m)\sat K_i\cv$ ~iff~ one of ~(a)~$m\!>\!0$, $i$ is active at~$m$ and $(R_P,r,m\!-\!1)\sat K_i\exists{v}$, or \\
(b)~~$(R_P,r,m)\sat K_i(\mbox{$K_j\exists{v}$ ~held at time~$m\!-\!1$})$ ~for at least $(\tee\!-\!\defemph{d})$ distinct processes~$j$, holds.
\end{lemma}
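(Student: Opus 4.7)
The plan is to prove both directions of the biconditional by unpacking $K_i\cv$ semantically: $(R_P, r, m) \sat K_i\cv$ iff in every run $r' \sim_i r$ (i.e., $r'_i(m) = r_i(m)$), some process correct in $r'$ knows $\exists v$ at time~$m$ in $r'$.

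For the $\Leftarrow$ direction under condition (a), $i$ being active at time $m$ means $i$ successfully delivered its round-$m$ message---which carries $i$'s time-$(m-1)$ view and hence a witness of $\exists v$---to every other process. In any $r' \sim_i r$, the bound $\tee \le n-1$ guarantees at least one correct process, which is either $i$ itself or received $i$'s round-$m$ broadcast, and in either case knows $\exists v$ at time $m$ in $r'$. Under condition (b), fix $J = \{j_1, \ldots, j_{\tee-d}\}$ witnessing it, each $j_b \ne i$ (the reflexive case $j = i$ being absorbed into~(a)). Under the natural reading, ``$i$ knows of $d$ failures at $(r,m)$'' forces, in every $r'\sim_i r$, a fixed set $D$ of $d$ processes to be already crashed strictly before round~$m$, which makes $J$ and $D$ disjoint since every $j\in J$ is known by $i$ to have been active at time $m-1$. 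Pigeonhole on $\{i\}\cup J$ (size $\tee-d+1$): in any $r'\sim_i r$, the set of failed processes $F(r')$ contains $D$ and has size at most $\tee$; if neither $i$ nor any $j\in J$ were correct in $r'$, then $F(r')\supseteq D\cup J\cup\{i\}$, a disjoint union of size $\tee+1$, a contradiction. The surviving candidate---either $i$ itself (which knows $\exists v$ at~$m$ derived from~(b)) or some $j\in J$ (for which $K_j\exists v$ persists from $m-1$ to $m$)---then witnesses $\cv$ in $r'$.

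For the $\Rightarrow$ direction I would argue by contrapositive: assuming both (a) and (b) fail, I construct an adversary $\alpha'$ such that $r'=P[\alpha'] \sim_i r$ at time~$m$ yet no correct process knows $\exists v$ at~$m$ in $r'$. Because $|J|<\tee-d$, the failure budget has at least one slot beyond $D\cup J$; I would crash every process in $D\cup J$, crash $i$ itself in round $m+1$ if $K_i\exists v$ at $m$ holds in $r$, and use any remaining slots to break message chains from initial-$v$ processes to correct ones. Concretely, I keep $\Ga(i,m)$ (nodes, edges, and initial-value labels) unchanged to preserve $r'_i(m) = r_i(m)$, and reassign the failure pattern and initial values outside $\Ga(i,m)$ to eliminate every surviving $v$-witness. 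The failure of~(a) ensures $i$ can be silenced: either $\neg K_i\exists v$ at $m-1$ (so $i$'s round-$m$ messages carry no $v$-witness in the first place), or $i$ was already inactive at~$m$ (its failure is already paid for in~$D$).

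The main obstacle is the contrapositive construction: producing a concrete adversary that simultaneously (i) agrees with $r$ on $\Ga(i,m)$, (ii) crashes every process through which $\exists v$ could reach a correct survivor, and (iii) stays within the failure budget $\tee$. The count ``$\tee-d$'' in~(b) is tight precisely because $i$ itself supplies the ``$+1$'' in the pigeonhole: the $\tee-d+1$ candidates in $\{i\}\cup J$ cannot all fail within the $\tee-d$ ``free'' failure slots remaining after $D$. Matching this combinatorial tightness with the graph-theoretic constraint $r'_i(m)=r_i(m)$---so that all reassigned initial values and new crashes lie outside $\Ga(i,m)$---is where most of the careful bookkeeping lies, and it hinges on the precise reading of ``$i$ knows of $d$ failures'' as exactly the set of processes forced to have crashed in every $r'\sim_i r$.
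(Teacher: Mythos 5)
Your proposal is correct in outline and follows essentially the same route as the paper's (terse) sketch: the $\Leftarrow$ direction is the same pigeonhole count on $D\cup J\cup\{i\}$ (the paper's phrase ``including $i$ itself'' is exactly your $+1$), and the $\Rightarrow$ direction is the same adversary construction in which $i$ is crashed at time $m$ before sending and the witnesses in $J$ are crashed right after their last transmission seen by $i$, with initial values outside $\Ga(i,m)$ reassigned away from $v$. You correctly flag the bookkeeping that the paper's sketch also glosses over---constructing the adversary so it agrees with $\Ga(i,m)$, silences every $v$-knower, and stays within the budget $t$---and your observation that the budget closes precisely because $|D|+|J|+1\le t$ when $|J|<t-d$ is the same arithmetic the paper relies on.
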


\vspace{-0.3cm}

It is easy to check that at time~$\tee+1$ the fact $K_i\exists{v}$ holds exactly if 
at least one of (a) or (b) does;
thus, starting at that time $K_i\exists v$ and $K_i\cv$ are equivalent.
As in the case of consensus, we note that if by time $\tee+1$ we do not have $K_i\exists0$ (equivalently, $K_i\cz$), then we never will.
We thus phrase the following \emph{beatable} algorithm, analogous to $\Pz$, for Uniform Consensus; in this protocol, $K_i\cz$ (the precondition for
deciding~$0$ in uniform consensus) replaces $K_i\exists0$ (the precondition in consensus) as the decision rule for~$0$. The decision rule for~$1$ remains the same.
Note that $K_i\cz$ can be efficiently checked, by \cref{lem:u-know}.

\noindent
\underline{{\bf Protocol}~$\UPz$}
 (for an undecided process $i$ at time $m$):\\
\begin{tabular}{ll}
\qquad\qquad{\bf if}~~$K_i\cz$ & {\bf then}~~$\decideZ$ \\
\qquad\qquad{\bf elseif}~~$m=\tee+1$ & {\bf then}~~$\decideO$.
\end{tabular}

Following a similar line of reasoning that lead us to obtain $\OptZ$, 
we use \cref{lem:know-igno2} to obtain the following unbeatable uniform consensus protocol.

\noindent
\underline{{\bf Protocol}~$\UOptZ$}
 (for an undecided process~$i$ at time~$m$):\\
\begin{tabular}{ll}
\qquad\qquad  {\bf if}~~$K_i\cz$  &  {\bf then} ~$\decideZ$\\
\qquad\qquad  {\bf elseif}~~no hidden path w.r.t.~$\node{i,m}$ exists 
and $\neg K_i\exists{0}$
&{\bf then} ~$\decideO$.
\end{tabular}

\begin{theorem}
\label{thm:u-opt}
$\UOptZ$ is an unbeatable \defemph{uniform} consensus protocol in $\gammacr$. Moreover,
\vspace{-0.3cm}
\begin{itemize}
\item
If $f \ge t-1$, then all decisions are made by time $f+1$ at the latest.
\item
Otherwise, all decisions are made by time $f+2$ at the latest.
\end{itemize}
 \end{theorem}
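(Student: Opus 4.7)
I plan to prove Theorem~\ref{thm:u-opt} in three stages---correctness, the stated timing bounds, and unbeatability. For correctness, \Validity\ for a $0$-decision is immediate since $K_i\cz$ implies $\exists 0$, and \Validity\ for a $1$-decision is immediate because the guard $\neg K_i\exists 0$ combined with $\Vals=\{0,1\}$ forces $i$'s own initial value to be $1$, hence $\exists 1$. For \UniAg, if some~$i$ decides~$1$ at $(r,m)$ then Lemma~\ref{lem:know-igno2}, applied to the guards of the $1$-rule, yields $K_i(\nnz)$; so in~$r$ no correct process ever knows $\exists 0$, whence $\cz$ never holds in~$r$, $K_j\cz$ fails throughout~$r$ for every~$j$, and no process can decide~$0$. \Decision\ follows from the timing bound.

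For the timing bound I would use the standard ``round without failures'' argument: among any $f+1$ consecutive rounds at least one contains no failures, after which all active processes share a common view; combining this with the explicit characterization of $K_i\cz$ in Lemma~\ref{lem:u-know} shows that by time $f+2$ either $K_i\cz$ fires or no hidden path w.r.t.\ $\node{i,f+2}$ remains. The sharper bound~$f+1$ when $f\ge\tee-1$ uses that the residual failure budget is at most~$1$, which caps the number of hidden witnesses per time slice at one and collapses the hidden-path structure one round earlier---analogous to the standard refinement for $\Pz$.

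For unbeatability, let $Q$ be a Uniform Consensus protocol with $Q\dom_{\gammacr}\UOptZ$. Since $\UOptZ\dom_{\gammacr}\UPz$ (the $0$-rules coincide, and the $1$-rule of $\UOptZ$ fires no later than time $\tee+1$, at which point $K_i\exists 0$ and $K_i\cz$ become equivalent and no hidden path can survive), we also have $Q\dom_{\gammacr}\UPz$. Mirroring Lemma~\ref{lem:decide-when-0} with $K_i\cz$ in the role of $K_i\exists 0$, and using Lemma~\ref{lem:correct-uni} in place of Lemma~\ref{lem:know-exists}, one shows that every active process~$i$ decides~$0$ in~$Q$ at the first time~$K_i\cz$ holds; hence $Q$ and $\UOptZ$ make $0$-decisions at exactly the same times, and any strict improvement of~$Q$ over~$\UOptZ$ must occur on a $1$-decision: there exist~$\alpha$ and~$i$ with~$i$ deciding~$1$ in $Q[\alpha]$ at some time~$m'$ strictly earlier than its decision time~$m$ in~$\UOptZ[\alpha]$.

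The main technical obstacle is to show that $K_i\nnz$ is a precondition for~$i$ deciding~$1$ in any such~$Q$, for then the guards of~$\UOptZ$ (equivalent to $K_i\nnz$ by Lemma~\ref{lem:know-igno2}) would already have fired at $\node{i,m'}$, contradicting $m'<m$. By \UniAg, the fact ``no process ever decides~$0$ in~$r$'' is a precondition for any $1$-decision; combined with the previous paragraph, this equals the run-level fact ``$K_j\cz$ never holds for any~$j$''. The crux is the equivalence of this fact with $\nnz$: one direction follows from $K_j\cz\Rightarrow\cz\Rightarrow\neg\nnz$, and the converse uses clause~(a) of Lemma~\ref{lem:u-know}, applied to any correct~$j^*$ at time $m^*+1$, where $m^*$ is the first time $j^*$ knows $\exists 0$ (such $j^*,m^*$ exist exactly when $\nnz$ fails, and $j^*$ being correct is active at $m^*+1$), yielding $K_{j^*}\cz$ at $m^*+1$. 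With this equivalence in hand, $\nnz$ is a precondition for the $1$-decision in~$Q$, and by the Knowledge of Preconditions Theorem~\ref{thm:knowprec} so is $K_i\nnz$, completing the contradiction and hence establishing unbeatability.
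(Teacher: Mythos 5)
Your high-level decomposition (correctness, timing bounds, unbeatability) is reasonable, and you correctly identified the crux: showing that $K_i(\nnz)$ is a precondition for a $1$-decision in any dominating protocol. However, the step you use to get there fails, and the paper explicitly flags this failure.

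You claim that ``mirroring Lemma~\ref{lem:decide-when-0} with $K_i\cz$ in the role of $K_i\exists 0$ \ldots\ one shows that every active process~$i$ decides~$0$ in~$Q$ at the first time~$K_i\cz$ holds.'' This mirroring does not go through, and the paper warns against it: \emph{``the analysis is considerably subtler for Uniform Consensus, because the analogue of \cref{lem:decide-when-0} is not true. Receiving a message with value~0 in a protocol dominating~$\UPz$ does not imply that the sender has decided~0.''} The inductive skeleton of Lemma~\ref{lem:decide-when-0} relies on two things that collapse in the uniform setting. First, its base case is time~$0$: there, $K_i\exists 0$ with only one's own value visible forces the decision on~$0$ via \Validity. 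But $K_i\cz$ generally fails at time~$0$ (when $\tee>0$), so the ``first time $K_i\cz$ holds'' is some $m_0>0$ at which $K_i\exists 1$ may well hold, and \Validity\ alone does not pin the decision to~$0$. Second, the inductive step traces the first time $K_i\exists 0$ holds to a message from a sender $j$ with $K_j\exists 0$ one round earlier, who by the inductive hypothesis already decided~$0$. In the uniform case, by \cref{lem:u-know} $K_i\cz$ can become true via clause~(a) (i.e., $i$ itself had $K_i\exists 0$ a round earlier and is now known to have survived into the next round), or via clause~(b) (enough corroborating witnesses accumulate); in neither case does one obtain a process with $K_j\cz$ one round earlier. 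So the induction has no foothold.

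The gap propagates: your ``converse'' direction of the equivalence between ``no process ever decides $0$'' and $\nnz$ invokes exactly this mirrored lemma (you need $j^*$ with $K_{j^*}\cz$ at $m^*+1$ to actually decide~$0$), so the argument is essentially circular. What the paper does instead is avoid any clean characterization of when $0$-decisions occur, and instead proves two negative results directly: \cref{no-earlier-k0} (if $K_i\exists 0$ but $\lnot K_i\cz$, then $i$ cannot decide) and \cref{no-earlier-k1} (if there is a hidden path and $\lnot K_i\exists 0$, then $i$ cannot decide). These are established via intricate multi-run indistinguishability (``reachability'') arguments in \cref{decide-when-0-first-round,decide-when-0}, with carefully tracked side-conditions on $\knownf{i,m}$ and nested inductions on $m$ and on a set $Z^{z,m}_i$ of processes---precisely the machinery you would need to replace your mirroring step. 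Together they yield \cref{cor:uni-opt}, and unbeatability follows. Your sketch for the timing bounds (round without failures plus the $f\ge\tee-1$ refinement) is plausible but underspecified compared to the paper's hidden-capacity argument in \cref{u-k-solve}, which is what the paper actually invokes via \cref{u-solve}.
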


\vspace{-0.25cm}

Hidden paths again play a central role. Indeed,
as in the construction of $\OptZ$ from $\Pz$, the construction of $\UOptZ$ from $\UPz$ involved some decisions on $1$ being moved forward in time, by means of the last condition, checking the absence of a hidden path. (Decisions on $0$ cannot be moved up, as they are taken as soon as the precondition for deciding $0$ holds.)

Despite the similarity in the design and the structure of the two protocols, the proof of unbeatability for $\UOptZ$ is much more subtle and technically challenging than that for $\OptZ$. This is in a sense since in a uniform consensus protocol dominating $\UOptZ$ (unlike the case of a consensus protocol dominating $\OptZ$),
gaining knowledge even of an initial value of $0$ that is known by a nonfaulty process, no longer implies that some process has already decided on $0$.
As a result, the possibility of dominating $\UOptZ$ by switching~0 decisions to~1 decisions needs to be explicitly rejected. This is done by employing  
reachability arguments essentially establishing the existence of the continual common knowledge conditions of~\cite{HalMoWa2001}
(see the proofs in Appendix~\ref{sec-uni-cons-proofs} for details).

\vspace{-0.3cm}

\subsection{Uniform Set Consensus}
\label{subsec-uni-k}

\vspace{-0.35cm}

We now consider {\em uniform} $k$-set consensus. We present a protocol called $\UOptMink$ that generalizes 
$\UOptZ$ to~$k$ values
(i.e.\ for $k=1$, it behaves exactly
like $\UOptZ$).
While in the protocol $\OptMink$ (which is defined above for non-uniform consensus)
an undecided process~$i$ decides on its minimal value if and only if  
at the time of the decision $i$ is low or has hidden capacity $<k$,
in $\UOptMink$ an undecided process~$i$ decides on a value $v$ if and only if 
$v$ is the minimal value s.t.\ $i$ knows that both a) $v$ was at some stage the minimal value known to a process was low or had hidden capacity $<k$ and b) $v$ will be known to all processes deciding
strictly after $i$.

\noindent
\underline{{\bf Protocol}~$\UOptMink$}
 (for an undecided process~$i$ at time~$m$):\\
\begin{tabular}{ll}
\qquad\qquad{\bf if}~~$\bigl(i$ is low or has hidden capacity $< k\bigr)$ and $K_i\exists\mathsf{correct}(\minval{i,m})$ & {\bf then}~~$\mathsf{decide}_{\minval{i,m}}$ \\
\qquad\qquad{\bf elseif}~~$m>0$ and $\bigl(\node{i,m-1}$ was low or had hidden capacity $< k\bigr)$ & {\bf then}~~$\mathsf{decide}_{\minval{i,m-1}}$ \\
\qquad\qquad{\bf elseif}~~$m=\bigl\lfloor\frac{t}{k}\bigr\rfloor+1$ & {\bf then}~~$\mathsf{decide}_{\minval{i,m}}$
\end{tabular}

As shown by Theorem~\ref{u-k-solve}, 
$\UOptMink$ meets the worst-case lower bounds proven in~\cite{GHP,AGGT} for uniform $k$-set consensus
(see Appendix~\ref{sec-uni-set-cons-proofs} for the proof of Theorem~\ref{u-k-solve}).

\begin{theorem}
\label{u-k-solve}
$\UOptMink$ ~solves \defemph{uniform} $k$-set consensus in $\gammacr$. Moreover,
\vspace{-0.35cm}
\begin{itemize}
\item
If ~$f=t-1\equiv ~0\!\!\mod{k}$, ~
then all decisions are made by time $\frac{f}{k} +1$ at the latest.
\vspace{-0.15cm}
\item
Otherwise, all decisions are made by time $\min\{\bigl\lfloor\frac{t}{k}\bigr\rfloor+1,\lfloor \frac{f}{k} \rfloor +2\}$ at the latest.
\end{itemize}
\end{theorem}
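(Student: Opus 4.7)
The plan is to establish the three defining properties of uniform $k$-set consensus---Validity, Uniform $k$-Agreement, and Decision---for $\UOptMink$, and then read off the time bounds. Validity is immediate: each of the three rules outputs $\minval{i,m}$ or $\minval{i,m-1}$, which by construction is a value $v$ such that $K_i\exists v$ holds, so $\exists v$ is true. For Decision and the time bounds, rule~3 guarantees a decision at time $\lfloor t/k\rfloor+1$ at the latest. To obtain the $\lfloor f/k\rfloor+2$ bound, I would adapt the decision-time analysis underlying \cref{k-set-correct} to show that every active undecided process is either low or has hidden capacity strictly less than $k$ by time $\lfloor f/k\rfloor+1$; rule~2 then fires one round later. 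For the sharper special-case bound when $f=t-1\equiv 0\pmod k$, I would show that at time $f/k+1$ not only does the ``low or hidden capacity $<k$'' condition hold but also $K_i\exists\mathsf{correct}(\minval{i,m})$ holds, so that rule~1 fires directly; here the tightness $f=t-1$ (combined with $k\,|\,f$) forces a correct process to have gossiped its minimum back to everyone still active by this round, which is why we need not wait for the extra round demanded by rule~2.

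Uniform $k$-Agreement is the main substantive part. I would argue by contradiction: suppose $k+1$ distinct values $v_1<\cdots<v_{k+1}$ are decided, say by processes $i_1,\ldots,i_{k+1}$ at times $m_1,\ldots,m_{k+1}$. For each $j$, the triggered rule forces $v_j=\minval{i_j,m_j}$ or $v_j=\minval{i_j,m_j-1}$, so in particular $i_{k+1}$ has never seen $v_1,\ldots,v_k$ up to the relevant time. Each $v_\ell$ ($\ell\le k$) is witnessed by an actual initial node carrying it; a bookkeeping argument of the same flavor as the one used for $\OptMink$ in the proof of \cref{thm:optmink} shows that these $k$ unseen-but-present initial values force the hidden capacity of $\node{i_{k+1},m_{k+1}}$ (or its predecessor for rule~2) to be at least $k$, contradicting the preconditions of rules~1 and~2. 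The rule~3 case ($m_{k+1}=\lfloor t/k\rfloor+1$) is excluded by counting failures: sustaining $k$ pairwise process-disjoint hidden witnesses through each of the rounds $\ell\le\lfloor t/k\rfloor+1$ would require strictly more than $t$ crashes, contradicting the failure bound.

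The main obstacle is the uniform aspect. Unlike the non-uniform setting of \cref{thm:optmink}, where it suffices to reason about correct processes, here we must also account for faulty processes that decide and then crash; the very possibility of such decisions blocks a direct import of the $\OptMink$ analysis, since a hidden value known only to a faulty process could still give rise to a decision on that value. The role of the $K_i\exists\mathsf{correct}(\minval{i,m})$ clause in rule~1, together with the one-round delay in rule~2, is precisely to guarantee that every decided value will be seen by some correct process (and hence by every process that decides strictly later), so that the witnesses for distinct hidden values genuinely combine into a single hidden-capacity lower bound at $\node{i_{k+1},m_{k+1}}$. Establishing this formally requires a reachability/indistinguishability construction in the spirit of the proof of \cref{thm:u-opt}; matching the worst-case lower bounds of~\cite{GHP,AGGT} then follows from the time bounds derived above.
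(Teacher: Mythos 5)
Your plan matches the paper's proof in structure: Validity is read off directly from the decision rules; the $\lfloor f/k\rfloor+2$ bound and the rule-3 coverage are obtained by counting the crashes that disjoint hidden witnesses force in every round, which bounds the last time $\tilde m$ at which the hidden capacity is still $\ge k$; and Uniform $k$-Agreement is obtained by matching each value decided below a high decider's minimum to a process hidden (at a suitable time $\ell$) from that decider, so that hidden capacity $<k$ caps the count. You also correctly put your finger on the uniform-specific subtlety: rule~1's clause $K_i\exists\mathsf{correct}(\minval{i,m})$ and rule~2's one-round delay are exactly what guarantee, for a decider $j$ on value $w$ with associated time $m'_j$, that any process active at $m'_j+1$ already knows $\exists w$, and hence that the later-deciding process's ``minimal-value time'' $m'_i$ satisfies $m'_i\le m'_j$; this replaces the ``both non-faulty'' step of the non-uniform proof.

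Two corrections of detail. First, for the bound in the case $f=t-1\equiv 0\pmod k$, the paper does not argue that a correct process ``gossips its minimum back to everyone still active.'' It instead assumes a process~$i$ is still active and undecided at $m=f/k+1$, squeezes $k(m-1)\le k\tilde m\le\knownf{i,\tilde m}\le\knownf{i,m}\le f=k(m-1)$ to force $\tilde m=m-1$ and $\knownf{i,m}=f$, concludes $\lnot K_i\exists\mathsf{correct}(\minval{i,m})$ because rule~1 did not fire despite hidden capacity $<k$, and then applies \cref{lem:u-know} (together with the freshness of $K_i\exists\minval{i,m}$) to conclude $\knownf{i,m}<t-1$ and hence $f<t-1$, a contradiction. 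Your informal gossiping intuition points in the right direction, but the actual engine is the failure-threshold characterization of $K_i\cv$ in \cref{lem:u-know}, not a propagation claim. Second, the Uniform $k$-Agreement proof in the paper is a short direct case analysis and does not use the reachability/indistinguishability constructions of \cref{sec-uni-cons-proofs}; those are needed for \emph{unbeatability} of $\UOptZ$ (\cref{thm:u-opt}), not for correctness of $\UOptMink$. Your remark over-estimates the difficulty of that step.
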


\vspace{-0.35cm}

We emphasize that $\UOptMink$ strictly dominates all existing uniform $k$-set consensus protocols in the literature~\cite{CHLT,GGP,GP09,RRT}.
As in the case of our unbeatable protocols for consensus, uniform consensus, and (nonuniform) $k$-set consensus, the dependence of our protocols on hidden capacity and hidden paths rather than on the number of failures seen often yields much faster stopping times. 
Thus, in runs~$r$ with the property that every correct process discovers exactly~$k$ new failures 
in each of the first $\left\lfloor\frac{f}{k} \right\rfloor$ rounds, all protocols in \cite{CHLT,GGP,GP09,RRT} will decide in more than $\left\lfloor\frac{f}{k} \right\rfloor$ rounds. In contrast, for many of these runs the protocol $\UOptMink$ may be able to decide in as few as~2 rounds. At this point, however, we have been unable to resolve the following

\vspace{-0.15cm}
\noindent{\bf Open Question:}\quad {\it Is ~$\UOptMink$~ an \pdo\ solution to uniform $k$-set consensus in $\gammacr$?}

\vspace{-0.3cm}

\section{Discussion}
\label{sec-discussion}

\vspace{-0.4cm}

Unbeatability is a natural optimality criterion for distributed protocols. 
It formalizes the intuition that a given protocol cannot be strictly improved upon, which is significantly stronger than saying that it is worst-case optimal. When an all-case optimal solution exists, as for simultaneous consensus, an unbeatable protocol will be all-case optimal. 
We have presented the first unbeatable protocols for consensus, uniform consensus and $k$-set consensus. In addition, we suggested a protocol for uniform $k$-set consensus, that strictly dominates all known protocols for the problem.

\vspace{-0.15cm}

Our particular notion of unbeatability, due to Halpern, Moses and Waarts, is based on a natural and commonly accepted notion of domination among protocols~\cite{DRS,GGP,NeigerBazzi,RRT}. Indeed, the original early-stopping protocol $\Pz$ was favoured because it improved on the earlier protocols, and our $\OptZ$ improves upon it. Nevertheless, our notion of unbeatability is just one criterion of this type. Alternative ways to compare runs of different protocols may make sense, depending on the application. One could, for example, compare runs in terms of the time at which the last correct process decides, rather than when each of the processes does. Let us call the corresponding notion \defemph{last-decider unbeatability}.%
\footnote{This notion was suggested to us by Michael Schapira; we thank him for the insight.}  We note that last-decider unbeatability neither implies, nor is implied by, the notion of unbeatability studied so far in this paper. Nevertheless, none of the  protocols previously proposed in the literature for the problems we have studied are last-decider unbeatable. 
In Appendix~\ref{sec-notions} we show that all of our unbeatable protocols are also last-decider unbeatable:
\begin{theorem}
\label{thm:last-decider}
The protocols $\OptZ$, $\OptMaj$, $\OptMink$, and $\UOptZ$ are also last-decider unbeatable for consensus, majority consensus, $k$-set consensus
and uniform consensus, respectively. 
\end{theorem}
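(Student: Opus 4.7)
The plan is to derive last-decider unbeatability from the per-process unbeatability already established for each of the four protocols in \cref{thm:optz,thm:optmaj,thm:optmink,thm:u-opt}. Let $P\in\{\OptZ,\OptMaj,\OptMink,\UOptZ\}$ and suppose for contradiction that a protocol $Q$ solving the corresponding task strictly last-decider dominates $P$. Then there exists an adversary $\alpha_0$ for which the last-decider time of $Q[\alpha_0]$ is strictly smaller than that of $P[\alpha_0]$, which I call $m^*$. The last correct decider $i^*$ of $P[\alpha_0]$, being correct, decides in $Q[\alpha_0]$ by the last-decider time there, hence strictly earlier than $m^*$. By the per-process unbeatability of~$P$, there must therefore exist an adversary $\beta$ and a process $j$ that decides strictly earlier in $P[\beta]$ than in $Q[\beta]$, at respective times $t_P<t_Q$ (for simplicity I focus on the generic case in which $j$ is correct in $\beta$; the case $t_Q=\infty$ is handled in essentially the same way).

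From this witness I would construct an adversary $\gamma$ whose last-decider time in $P[\gamma]$ is strictly smaller than in $Q[\gamma]$, directly contradicting $Q$'s assumed last-decider dominance of $P$. The clean case is $\tee=n-1$: form $\gamma$ from $\beta$ by scheduling every process $i\neq j$ that is correct in $\beta$ to crash in round $t_Q+1$, while still delivering all of its round-$\le t_Q$ messages. The resulting failure count equals $n-1$, which is admissible, and the past cone of $\node{j,t_Q}$ is unchanged by the modification. Hence $j$ still decides at time $t_P$ in $P[\gamma]$ and at time $t_Q$ in $Q[\gamma]$; since $j$ is now the unique correct process of $\gamma$, the last-decider times in $P[\gamma]$ and $Q[\gamma]$ equal $t_P$ and $t_Q$ respectively, yielding the contradiction $t_P<t_Q\le t_P$.

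The hard part is the general regime $\tee<n-1$, where making $j$ the sole correct process of $\gamma$ would exceed the failure budget. The plan here is to keep additional correct processes in $\gamma$ but to perturb $\beta$ only outside the past cone of $\node{j,t_Q}$, so that $j$'s local state through time $t_Q$---and hence its $Q$-decision time---is preserved, while arranging the perturbation so that every correct-$\neq\!j$ process in $\gamma$ satisfies $P$'s decision rule strictly before $t_Q$. Concretely: for $\OptZ$ and $\UOptZ$ I would inject initial value~$0$ at processes outside $j$'s past cone so that the respective preconditions ($K_i\exists 0$ and $K_i\cz$) fire early at every remaining correct process; for $\OptMink$ I would seed enough low values outside $j$'s past cone to reduce the hidden capacity below $k$ at every remaining correct process; for $\OptMaj$ I would establish a visible majority outside $j$'s past cone. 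Verifying both that the perturbation indeed leaves $j$'s past cone intact and that it forces each remaining correct process to $P$-decide strictly before $t_Q$ will require a case analysis on the rule that first fires at $\node{j,t_P}$ in $P[\beta]$, mirroring the structure of the respective per-process unbeatability proofs. This case analysis is the most delicate part of the argument and is where the four protocols require separate---though structurally parallel---treatment.
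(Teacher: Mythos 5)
Your reduction to per-process unbeatability breaks at the construction of $\gamma$, already in the ``clean'' case $\tee=n-1$. The paper's notion of last-decider domination quantifies over \emph{all} decisions in a run, not only those by correct processes (this is also how it is used in the proof of \cref{last-dom-sufficient}). After you crash every process other than $j$ in round $t_Q+1$, each such process still behaves correctly through time $t_Q$, so its view through $t_Q$ in $P[\gamma]$ is exactly its view in $P[\beta]$; since $P$ is a deterministic protocol whose decision rule fires independently at each process, nothing rules out some process $i\ne j$ deciding in $P[\beta]$ (and hence in $P[\gamma]$, before it crashes) at a time in $(t_P,t_Q]$. So the claim ``since $j$ is now the unique correct process of $\gamma$, the last-decider time in $P[\gamma]$ equals $t_P$'' is false: all you get is that the last decision in $P[\gamma]$ is $\le t_Q$, while the last decision in $Q[\gamma]$ is $\ge t_Q$ (as $j$ decides there at $t_Q$). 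That is compatible with $Q\overset{\smash{l.d.}}{\dom}P$, so there is no contradiction. The hard case $\tee<n-1$ inherits the same defect: the perturbation you sketch is aimed at making the remaining \emph{correct} processes decide early, and is silent about what the crashing ones decide before they go.

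This is not a repairable technicality of your strategy. As the paper notes explicitly, last-decider unbeatability neither implies nor is implied by the per-process notion, so a reduction that invokes \cref{thm:optz,thm:optmaj,thm:optmink,thm:u-opt} as black boxes cannot succeed. The paper instead proves \cref{last-dom-sufficient}: a direct analogue, under the \emph{weaker} hypothesis of last-decider domination of the base protocol ($\Pz$, $\OptMaj$, $\OptMink$, $\UPz$), of the seed lemmas (such as \cref{lem:decide-when-0}) that launch each per-process unbeatability proof. The key move, for $\Pz$, is: if $K_i\exists0$ holds at $(r,m)$, then $i$ cannot distinguish $(r,m)$ from a run $r'$ in which a never-failing process $j$ already knew $\exists 0$ at $m-1$, so that \emph{every} process active at $m$ knows $\exists0$ at $m$; in $\Pz[r']$ all of them decide by $m$, the last decision is at $m$, and last-decider domination together with \Decision\ then force $i$ to decide by $m$ in $Q$. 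Once this replacement lemma is available, the remainder of each unbeatability proof goes through unchanged.
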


\vspace{-0.3cm}

In summary, this paper used a knowledge-based analysis to obtain the first ever {\em unbeatable} protocols for a range of agreement problems in the crash failure model. It identified and exposed hidden paths and hidden capacity as  patterns that play an essential role in determining whether decisions can be taken. As a side effect, we were able to design an unbeatable protocol, $\OptMaj$, for \defemph{majority consensus}, which provides more balanced decision behaviour than previously available in early stopping protocols. 

For ease of exposition and analysis, all of our  protocols were specified under the assumption of full-information message passing. In fact, they can all be implemented in such a way that any process sends any other process a total of $O(n\log n)$ bits throughout the protocol  (see Lemma~\ref{nlogn} in Appendix~\ref{sec-notions}). Thus, unbeatability is attainable at a modest price. Our study opens the way to many possible extensions. For one, we have left open the question of whether $\UOptMink$ is unbeatable. But unbeatability can be sought in other models, and for other problems. Arguably, to be really good, a distributed protocol better be unbeatable!

\bibliographystyle{plain}
\bibliography{z}

\newpage
\appendix

\section{Proofs of Section~\ref{sec:PA-con} --- Consensus}
\label{sec-proofs-consensus}

\cref{lem:know-exists} follows from \cref{thm:knowprec}.

\begin{proof}[Proof of Lemma~\ref{lem:decide-when-0}]
Assume that $Q\dom\Pz$ solves consensus. 
We prove the claim for all processes~$i$ and adversaries~$\alpha$, by induction on the time~$m$ at which 
$K_i\exists 0$ first holds in~$Q[\alpha]$ (and, equivalently, in $\Pz[\alpha]$). 

Base ($m=0$):\quad Since $i$ decides at time~0 in~$\Pz[\alpha]$, it must decide at time~0 in~$Q[\alpha]$ as well. 
At this point we have $K_i\exists{0}$. Since process~$i$ knows only its
initial value at time $0$, it follows that $i$ has initial value $0$. Hence, $K_i\exists{1}$ does \emph{not} hold  at~$0$. By \Validity, $i$ decides~0 in~$Q[\alpha]$.

Inductive step ($m>0$): Assume that the claim holds for all times $<m$.
Recall that $m$ is the first time at which
$K_i\exists{0}$ first holds.
In an fip, this can only happen if $i$ 
receives a message with~0 from some process~$j$ who was active at time~$m-1$.
Thus, $K_j \exists 0$ holds at time $m-1$, and by the induction hypothesis,
$j$ decides $0$ when $K_j\exists{0}$ first holds, which is no later than time $m-1$ in $Q[\alpha]$.
Observe that in~$\gammacr$, if $i$ receives a message from~$j$ in
round~$m$, then~$i$ cannot know that~$j$ is faulty at time~$m$: 
An execution~$\beta$ in which the adversary does not crash~$j$ at all, and that otherwise agrees with $\alpha$ is both legal (initial values are in $\{0,1\}$ and no more than $t$ crash failures) and $Q[\beta]$ is indistinguishable to~$i$ from $Q[\alpha]$ at time~$m$. 
Since $Q$ satisfies \Agreement, $i$ cannot decide~1 at or before 
time~$m$. Moreover, by \Validity, $K_i\exists{0}$ is a precondition for process~$i$ deciding~0, and so $i$ cannot decide $0$ before time $m$. 
Since~$Q$ dominates~$\Pz$, process~$i$ must decide by time~$m$
under~$Q[\alpha]$, and it thus decides~0 at~$m$.
\end{proof}

\begin{proof}[Proof of \cref{lem:know-igno2}]
If $K_i\exists0$, then we immediately have $\lnot K_i(\nnz)$; the fact that the existence of a hidden path implies the possibility for a correct process to know $\exists0$
is generalized by (and implied by) \cref{exist-hidden-channels}, and so its proof is omitted here. The second direction is generalized (and implied) by the proof of the $k$-Agreement property in \cref{k-set-correct},
and so its proof is omitted here as well.
\end{proof}

\cref{thm:optz} follows from \cref{lem:decide-when-0,lem:know-igno2}.

\subsection{Proofs for Majority Consensus}
\label{sec-maj-proofs}
The proof of \cref{thm:optmaj} is based on two lemmas:
\begin{lemma}[Decision at time $1$]\label{maj-decide-first-round}
Assume that  $n\!>\!2$ and $t\!>\!0$. Let $Q\dom\OptMaj$ solve Consensus and let $r\!=\!Q[\alpha]$ be a run of $Q$.
Let $i$ be a process and let $v$ be a value. If $K_i(\Maj\!=\!v)$ at $(r,1)$, then $Q$ makes~$i$ decide~$v$ before or at time $1$ in~$r$.
\end{lemma}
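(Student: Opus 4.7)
The plan is to combine the dominance of $Q$ over $\OptMaj$ with indistinguishability arguments grounded in Lemma~\ref{lem:know-exists}.

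First I would establish that $i$ decides by time~$1$ in $r$. Because both $Q$ and $\OptMaj$ are full-information protocols and the fact $\Maj=v$ depends only on initial values, $K_i(\Maj=v)$ at $(r,1)$ holds iff the same holds in $\OptMaj[\alpha]$ at time~$1$. Protocol $\OptMaj$ therefore has $i$ decide $v$ at time~$1$ (rules~1/2 fire then, since for $n>2$ neither $K_i(\Maj=0)$ nor $K_i(\Maj=1)$ can hold at time~$0$), and $Q\dom\OptMaj$ forces $i$ to decide by time~$1$ in $r$. It remains to rule out $i$ deciding the ``wrong'' value $u\eqdef 1-v$, which I would do by contradiction, splitting on the decision time $m\in\{0,1\}$.

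For $m=0$: Lemma~\ref{lem:know-exists} applied at time~$0$ (where $i$'s local state is just $v_i$) forces $v_i=u$. I would then construct twin adversaries $\beta,\beta^*$ with the same initial values ($v_i=u$ and $v_l=v$ for $l\ne i$), but where $i$ is correct in $\beta$ while in $\beta^*$ process $i$ crashes in round~$1$ after sending only to some fixed $j\ne i$ (the single failure needed is permitted since $t>0$). In $\beta$, $i$ decides $u$ at time~$0$ and Agreement propagates $u$ to all correct processes, in particular to $j$. In $\beta^*$, any correct $k\ne i,j$ has seen only the value $v$, so by Lemma~\ref{lem:know-exists} it cannot decide $u$; Agreement then forces $j$ to decide $v$ in $\beta^*$. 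But $j$'s time-$1$ view is identical in $\beta$ and $\beta^*$---$j$ receives from $i$ with value $u$ and from every other (correct) process with value $v$ in both---so $j$ makes the same decision at time~$1$ in both runs, contradicting the conclusion that it decides $u$ in $\beta$ but $v$ in $\beta^*$.

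For $m=1$: I would induct on the number $k$ of initial nodes $\node{q,0}$ with $v_q=u$ seen by $\node{i,1}$. The base case $k=0$ is immediate: in any adversary $\alpha'$ indistinguishable to $i$ at time~$1$, all non-seen initial values may be set to $v$, giving $\neg K_i\exists u$ at $(r,1)$ and so, by Lemma~\ref{lem:know-exists}, ruling out a decision on $u$. For the inductive step, pick such a $q$ and build a chain $\alpha\to\alpha^{(1)}\to\alpha^{(2)}$. In $\alpha^{(1)}$ process $q$ crashes in round~$1$ sending only to $i$; this leaves $\Ga(i,1)$ intact (so $i$ still decides $u$ in $Q[\alpha^{(1)}]$), and Agreement propagates $u$ to every correct process, in particular to some $j\ne i,q$. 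In $\alpha^{(2)}$ we further flip $v_q$ from $u$ to $v$; this does not alter the time-$1$ view of any $j\ne i$ (since $q$ sends to no one but $i$), so $j$ still decides $u$ in $Q[\alpha^{(2)}]$, and Agreement carries this back to $i$. But in $\alpha^{(2)}$ the number of $u$-senders seen by $\node{i,1}$ has dropped to $k-1$, so the inductive hypothesis forces $i$ to decide $v$---contradiction.

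The main obstacle will be bookkeeping failures along the chain so that each $\alpha^{(\cdot)}$ remains a legal adversary with at most $t$ failures. When $\alpha$ already sits at the failure budget, I would swap failures: unfail some sender $p\in S$ that was faulty in $\alpha$ (this leaves $\Ga(i,1)$ unchanged, since $i$ still receives $p$'s round-$1$ message) and fail $q$ in its place. The most delicate situation---where every failure of $\alpha$ is a non-sender to~$i$, so no such swap is available---is where the hypothesis $t>0$ must be used carefully, and I expect this to be the technical crux of the proof.
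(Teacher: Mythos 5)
Your approach is structurally different from the paper's, and it has genuine gaps. The paper inducts on $n-|Z_i|$ (the number of $v$-valued initial nodes seen), going \emph{up} from the base $|Z_i|=n$, and in each step constructs a run $r'$ in which a second process $j$ sees one \emph{more} $v$-valued node than $i$ does ($Z_j = Z_i \cup \{k\}$), applies the induction hypothesis to $j$ directly, and invokes \Agreement. You instead induct \emph{down} on the number of $u$-valued nodes seen, and wrap everything in a proof by contradiction: assume $i$ decides $u$, propagate $u$ via \Agreement\ to some $j$, flip a hidden value, propagate back, contradict the induction hypothesis. In principle this can be made to work, but as written there are two concrete holes.

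First, in both your $m=0$ case and your $m=1$ inductive step, you conclude ``$j$ makes the same decision at time~1 in both runs.'' But \Agreement\ only tells you $j$ decides $u$ \emph{eventually}, not by time~1, and for later decision times $j$'s views in the two runs diverge (at time~2, $j$ learns from $i$ whether $v_q=u$ or $v$, and learns which other processes did or did not hear from the failed process). To pin $j$'s decision to time~1 you need $j$ to decide by time~1 by dominance, which requires showing $K_j(\Maj=v)$ holds at time~1 in the constructed run. For $m=0$ this is easy to add (in $\beta$ and $\beta^*$, $j$ sees all $n$ values). But for $m=1$ you have not controlled $j$'s time-1 view at all: in $\alpha^{(1)}$, $j$ may miss $q$ and any other processes hidden from it, so $K_j(\Maj=v)$ need not hold, and then nothing forces $j$ to decide at time~1. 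The paper sidesteps this entirely by \emph{constructing} $j$'s view to contain $Z_i$ plus one more $v$, so that $|Z_j|>|Z_i|\geq n/2$ guarantees $K_j(\Maj=v)$ and the induction hypothesis applies to $j$ directly.

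Second, you acknowledge the failure-bookkeeping gap yourself: when all $t$ failures in $\alpha$ are non-senders to $i$ (all hidden at time~1), there is no faulty sender to swap out, and unfailing a hidden process changes $\Ga(i,1)$, breaking the indistinguishability on which the whole chain rests. You call this ``the technical crux'' and leave it open, but this is precisely where the argument is most likely to stall. The paper's three-way case split (hidden node exists / a $(1{-}v)$-node is seen / everything seen and only $i$ disagrees) is designed to route around exactly this configuration, which your single chain construction does not.
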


\begin{proof}
By definition of $\OptMaj$, $i$ decides in $\OptMaj[\alpha]$ by time $1$, since $K_i(\Maj\!=\!v)$ holds at $(\OptMaj[\alpha],1)$. As $Q\dom\OptMaj$, we thus have that $i$ must decide upon some value in $r\!=\!Q[\alpha]$ before or at time $1$. Thus,
it is enough to show that $i$ cannot decide $1\!-\!v$ up to time $1$ in $r$.

We prove the claim by induction on $n\!-\!|Z_i|$,
where $Z_i$ is defined to be the set of processes $k$ with initial value $v$, s.t.\ $\node{k,0}$ is seen by $\node{i,1}$.
As $K_i(\Maj\!=\!v)$ at $(r,1)$, we have $|Z_i|\ge \frac{n}{2}$ and so $2 \le |Z_i| \le n$.

Base: $|Z_i|=n$. In this case, all initial values are $v$, and so by \Validity~$i$ cannot decide $1\!-\!v$ in $r$.

Step: Let $2\le\ell<n$ and assume that the claim holds whenever $|Z_i|=\ell+1$. Assume that $|Z_i|=\ell$. As $|Z_i| \ge 2$, there exists $j \in Z_i \setminus \{i\}$. We reason by cases.

\begin{enumerate}[label=\Roman*.]
\item
If there exists a process $k$ s.t.\ $\node{k,0}$ is hidden from $\node{i,1}$, then there exists a run $r'$ of $Q$, s.t.~~\emph{i)} $r'_i(1)\!=\!r_i(1)$,\ \ \emph{ii)}~neither $i$ nor $j$ fail in $r'$,\ \ \emph{iii)} $k$ has initial value $0$ in $r'$, and\ \ \emph{iv)} $Z_j = Z_i\!\cup\!\{k\}$ in $r'$. (Note that by definition, $Z_i$ has the same value in both $r$ and $r'$.)
By the induction hypothesis (switching the roles of $i$ and $j$), $j$ decides $v$ before or at time $1$ at $r'$, and therefore by \Agreement, $i$ cannot decide $1\!-\!v$ in $r'$, and hence it does not decide $1\!-\!v$ up to time $1$ in $r$.
\item
If there exists a process $k \ne i$ with initial value $1\!-\!v$, s.t.\ $\node{k,0}$ is seen by $\node{i,1}$, then $k \notin \{i,j\}$. Hence,
as $t\!>\!0$, there exists a run $r'$ of $Q$, s.t.\ \ \emph{i)} $r'_i(1)\!=\!r_i(1)$,\ \ \emph{ii)}~neither $i$ nor $j$ fail in $r'$,\ \ \emph{iii)} $\node{k,0}$ is hidden from $\node{j,1}$ in $r'$, and\ \ \emph{iv)} $Z_j\!=\!Z_i$ in $r'$. (Once again, $Z_i$ has the same value in both $r$ and $r'$.)
By Case I (switching the roles of $i$ and $j$), $j$ decides $v$ before or at time $1$ in $r'$, and therefore by \Agreement, $i$ cannot decide $1\!-\!v$
in $r'$, and hence it does not decide $1\!-\!v$ up to time $1$ in $r$.
\item
Otherwise, $\node{k,0}$ is seen by $\node{i,1}$ for all processes $k$, and $k$ has initial value $v$ for all processes $k \ne i$. As $|Z_i|<n$, we have that $i$ has
initial value $1\!-\!v$. Thus, there exists a run $r'$ of $Q$, s.t.\ \ \emph{i)} $r'_i(1)\!=\!r_i(1)$,\ \ \emph{ii)}~$f=0$ in $r'$, and\ \ \emph{iii)} $Z_j\!=\!Z_i$ in $r'$. (Once again, $Z_i$ has the same value in both $r$ and $r'$.)
As $i$ has initial value $1\!-\!v$ in $r'$ as well, by Case II (switching the roles of $i$ and $j$), $j$ decides $v$ before or at time $1$ in $r'$, and therefore by \Agreement, $i$ cannot decide $1\!-\!v$
in $r'$, and hence it does not decide $1\!-\!v$ up to time $1$ in $r$, and the proof is complete. \qedhere
\end{enumerate}
\end{proof}

\begin{lemma}[No Earlier Decisions]\label{maj-not-decide}
Assume that $n\!>\!2$ and $t\!>\!0$. Let $Q\dom\OptMaj$ solve Consensus and let $r$ be a run of $Q$.
Let $i$ be a process and let $m$ be a time, s.t.\ $\lnot K_i(\Maj\!=\!0)$ and $\lnot K_i(\Maj\!=\!1)$.
If there exists a hidden path w.r.t.\ $\node{i,m}$, then $i$ does not decide at $(r,m)$.
\end{lemma}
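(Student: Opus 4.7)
The plan is to prove the contrapositive by contradiction: assume $i$ does decide some value $v\in\{0,1\}$ at $(r,m)$, and derive a violation of \Agreement. The core strategy is to exhibit an adversary $\alpha'$ such that, setting $r'=Q[\alpha']$, we have (1)~$r'_i(m)=r_i(m)$, so $i$ decides $v$ at $(r',m)$ by determinism; (2)~$K_{j_1}(\Maj=1-v)$ holds at $(r',1)$; and (3)~both $i$ and $j_1$ are correct in $\alpha'$. By \cref{maj-decide-first-round}, (2) then forces $j_1$ to decide $1-v$ by time~$1$ in $r'$, and combined with (1) and (3) this violates \Agreement.

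For the initial-value component of $\alpha'$, since $\lnot K_i(\Maj=v)$ at $(r,m)$, a direct count shows that replacing every initial value $v_k$ of a process $k$ whose node $\node{k,0}$ is not seen by $\node{i,m}$ with the value $1-v$ yields a vector $\vec u$ that is consistent with $r_i(m)$ and satisfies $\Maj=1-v$. Take $\vec u$ as the initial-value component of $\alpha'$. For the failure-pattern component, the hidden path enters decisively: the hiddenness of $\node{j_1,1}$ means that $\node{j_1,1}$ is not an ancestor of $\node{i,m}$, so adding the edges $(\node{k,0},\node{j_1,1})$ for every $k$---making $j_1$ receive from everyone in round~$1$---leaves $\Ga(i,m)$ unchanged; combined with $\vec u$, this immediately yields (2).

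The main obstacle is achieving (3) simultaneously with (1). Forcing $j_1$ to be correct in $\alpha'$ introduces, in the full-information protocol, the direct edge $(\node{j_1,m-1},\node{i,m})$, which places $\node{j_1,m-1}$ and recursively $\node{j_1,\ell}$ for every $\ell\ge 1$ into the ancestor subgraph of $\node{i,m}$ and thus threatens to change $\Ga(i,m)$. My plan is to use the \emph{entire} hidden path $j_0,\ldots,j_m$ (not merely $\node{j_1,1}$) to route $j_1$'s outgoing messages only through hidden nodes, and to schedule additional crashes on processes whose later nodes are not in $\Ga(i,m)$, thereby blocking every alternative propagation path to $\node{i,m}$ without exceeding the $t$-failure budget. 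A key bookkeeping observation is that for $m\ge 2$, $\node{j_1,1}$ being hidden already forces $j_1$ to be faulty in $\alpha$ itself (otherwise the chain $\node{j_1,1}\to\node{j_1,2}\to\cdots\to\node{j_1,m-1}\to\node{i,m}$ would reveal $\node{j_1,1}$), freeing one failure slot for reuse in $\alpha'$; together with $t>0$ and $n>2$, this provides enough flexibility for the blocking crashes while keeping both $i$ and $j_1$ correct.

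The delicate combinatorial step is therefore the construction of $\alpha'$: verifying that $\alpha'\in\gammacr$, that $j_1$ and $i$ are both correct in $\alpha'$, and that the ancestor subgraph of $\node{i,m}$ is preserved under all the required modifications to the failure pattern. This is the main technical labor of the proof, and is where both the hidden-path hypothesis (supplying hidden nodes at every level to reroute traffic) and the numerical conditions $n>2,\,t>0$ are fully consumed. Once $\alpha'$ is in hand, properties (1)--(3) hold by design, \cref{maj-decide-first-round} supplies $j_1$'s decision of $1-v$ at time $\le 1$ in $r'$, and the resulting \Agreement\ violation against $i$'s decision of $v$ completes the contradiction.
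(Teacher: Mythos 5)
There is a genuine gap in the plan, and it is exactly at the point you flag as "the main obstacle." You propose to make $j_1$ (the time-$1$ node of the hidden path) correct in $\alpha'$ while keeping $\Ga(i,m)$ intact, and to achieve this by "routing $j_1$'s outgoing messages only through hidden nodes" and scheduling auxiliary crashes. But in the crash model there is no link-level message loss: a non-crashed process sends to \emph{every} process in \emph{every} round. If $j_1$ never crashes and $i$ never crashes, then $j_1$'s round-$2$ message to $i$ is delivered, giving the chain $\node{j_1,1}\to\node{i,2}\to\cdots\to\node{i,m}$, so $\node{j_1,1}$ is seen by $\node{i,m}$ for any $m\ge 2$. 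There is no set of \emph{other} crashes that blocks this direct edge; you would have to crash $j_1$ or $i$, which is precisely what you cannot do. So conditions~(1)--(3) are jointly unachievable by the proposed construction, and the freed-up failure slot you identify does not help.

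The paper's proof sidesteps this entirely by never trying to keep $j_1$ correct. It makes the \emph{last} node $b_m$ of the hidden path correct (that one \emph{can} be correct while staying hidden, since $\node{b_m,m}$ being hidden from $\node{i,m}$ only requires that it not be a known past failure, which is automatic for a correct process). It then constructs $r'$ so that $b_1$ knows $\Maj=1-v$ at time~$1$ and so that $\node{b_\ell,\ell}$ is seen by $\node{b_{\ell+1},\ell+1}$, and shows by induction along the hidden path that each $b_\ell$ decides $1-v$ by time~$\ell$: \cref{maj-decide-first-round} starts the chain at $b_1$, and at each subsequent step \Agreement\ forbids $b_{\ell}$ from deciding $v$ while domination of $\OptMaj$ (using $K_{b_\ell}(\Maj=1-v)$) forces it to decide by time~$\ell$, so it decides $1-v$. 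The contradiction then comes from $b_m$ (correct) deciding $1-v$ while $i$ (correct) decides $v$. To salvage your argument you would need to replace your step "force $j_1$ correct" with this forward-propagation-to-$b_m$ mechanism; the rest of your outline (reduction to \cref{maj-decide-first-round}, use of the hidden path, the $m=0$ case) is sound and matches the paper.
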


\begin{proof}
Let $v\in\{0,1\}$ be a value. We show that $i$ does not decide $v$ at $(r,m)$.

We first consider the case in which $m\!=\!0$.
In this case, there exists a run $r'$ of $Q$ s.t.\ \ \emph{i)} $r'_i(0)=r_i(0)$,\ \ \emph{ii)} $\Maj\!=\!1\!-\!v$, and\ \ \emph{iii)} $f=0$.
As $f=0$ and $\Maj\!=\!1\!-\!v$ in $r'$, we have $K_i(\Maj\!=\!1\!-\!v)$ at $(r',1)$, and therefore, by \cref{maj-decide-first-round}, $i$ decides $1\!-\!v$ before or at $1$ in $r'$; therefore, $i$ does not decide $v$ at $(r',0)$, and hence neither does it decide $v$ at $(r,0)=(r,m)$.

We turn to the case in which $m\!>\!0$.
As there exists a hidden path w.r.t.\ $\node{i,m}$, for every $0\le\ell\le m$ there exists a process $b_{\ell}$ s.t.\ $\node{b_{\ell},\ell}$ is hidden from $\node{i,m}$.
Thus, there exists a run $r'$ of $Q$ s.t.\ \ \emph{i)} $r'_i(m)=r_i(m)$,\ \ \emph{ii)} $\Maj\!=\!1\!-\!v$,\ \ \emph{iii)} $\node{b_1,1}$ sees $\node{k,0}$ for all processes $k$ (and
therefore $K_{b_1}(\Maj\!=\!1\!-\!v)$ at $1$,\ \ \emph{iv)} $\node{b_{\ell},\ell}$ is seen by $\node{b_{\ell+1},\ell+1}$ for every $1\le\ell<m$, and\ \ \emph{v)} neither $b_m$
nor $i$ fail in $r'$.
We show by induction that $b_{\ell}$ decides $1\!-\!v$ before or at $\ell$ in $r'$, for every $1 \le \ell\le m$.

Base: By \cref{maj-decide-first-round}, $b_1$ decides $1\!-\!v$ before or at $1$ in $r'$.

Step: Let $1<\ell\le m$ and assume that $b_{\ell-1}$ decides $1\!-\!v$ before or at $\ell\!-\!1$ in $r'$. As $\node{b_{\ell-1},\ell\!-\!1}$ is seen by $\node{b_{\ell},\ell}$ in $r'$, there
exists a run $r''\!=\!Q[\gamma]$ of $Q$, s.t.\ \ \emph{i)} $r''_{b_{\ell}}(\ell)\!=\!r'_{b_{\ell}}(\ell)$, and\ \ \emph{ii)} Neither $b_{\ell-1}$ nor $b_{\ell}$ fail in $r''$.
As $\node{b_{\ell-1},\ell\!-\!1}$ is seen by $\node{b_{\ell},\ell}$, and as $r''_{b_{\ell}}(\ell)\!=\!r'_{b_{\ell}}(\ell)$, $b_{\ell-1}$ decides $1\!-\!v$ before or at $\ell\!-\!1$ in $r''$ as well.
As neither $b_{\ell-1}$ nor $b_{\ell}$ fail in $r''$, by \Agreement~$b_{\ell}$ does not decide $v$ before or at $\ell$ in $r''$. As $\node{b_1,1}$ is seen by $\node{b_{\ell},\ell}$ in $r'$,
we have $K_{b_{\ell}}(\Maj\!=\!1\!-\!v)$ at $(r',\ell)$, and therefore also at $(r'',\ell)$. Thus, $b_{\ell}$ decides in $(\OptMaj[\gamma],\ell)$, and therefore $b_{\ell}$ decides
before or at $\ell$ in $r''$, and so it decides $1\!-\!v$ before or at $\ell$ in $r''$, and hence it also decides $1\!-\!v$ before or at $\ell$ in $r'$, and the proof by induction is complete.

As we have shown, $b_m$ decides $1\!-\!v$ in $r'$. As neither $b_m$ nor $i$ fail in $r'$, by \Agreement~$i$ does not decide $v$ at $(r',m)$, and therefore neither
does it decide $v$ at $(r,m)$.
\end{proof}
We can now prove \cref{thm:optmaj}.

\begin{proof}[Proof of \cref{thm:optmaj}]
\Agreement, \Decision\ and \Validity\ are straightforward and left to the reader. If $n\!>\!2$, then unbeatability follows from \cref{maj-not-decide}.
If $n\!=\!1$, then it is straightforward to verify that the single process always decides at time $0$, and so $\OptMaj$ cannot be improved upon.
Finally, if $n\!=\!2$, then it is easy to check that $\OptMaj$ is equivalent to $\OptO$, and so is unbeatable.
\end{proof}

We note that the condition $t\!>\!0$ in \cref{thm:optmaj} cannot be dropped if $n\!>\!2$. Indeed, if $t\!=\!0$ and $n\!>\!2$, then both $\OptZ$ and $\OptO$ (in which some decisions are made at time $0$, and the rest --- at time $1$)
dominate $\OptMaj$ (in which all decisions are made at time $1$).

\section{Proofs of Section~\ref{sec-set-consensus} --- Set Consensus}
\label{sec-proofs-k-set}

\begin{figure}[ht]
    \begin{center}
        \subfigure[$\node{i,2}$ has hidden capacity~$3$.]{
            \label{lemma1:first}
	    \includegraphics[width=0.22\textwidth]{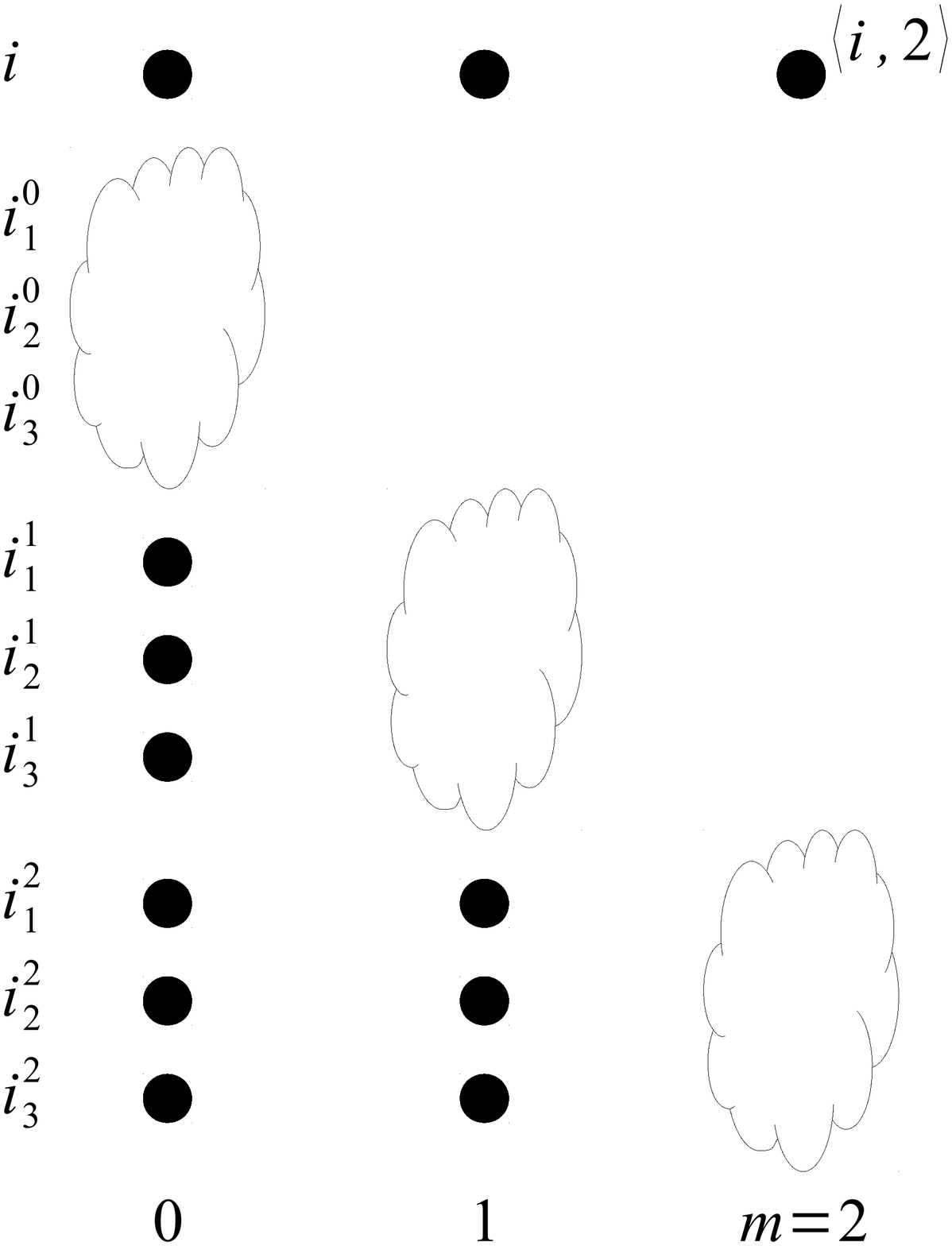}
        }
        \qquad\qquad\qquad\qquad
        \subfigure[A run $i$ considers at $2$ to be possible, in which $v_1,v_2,v_3$ are held by distinct processes at $2$.]{
            \label{lemma1:second}
	    \includegraphics[width=0.22\textwidth]{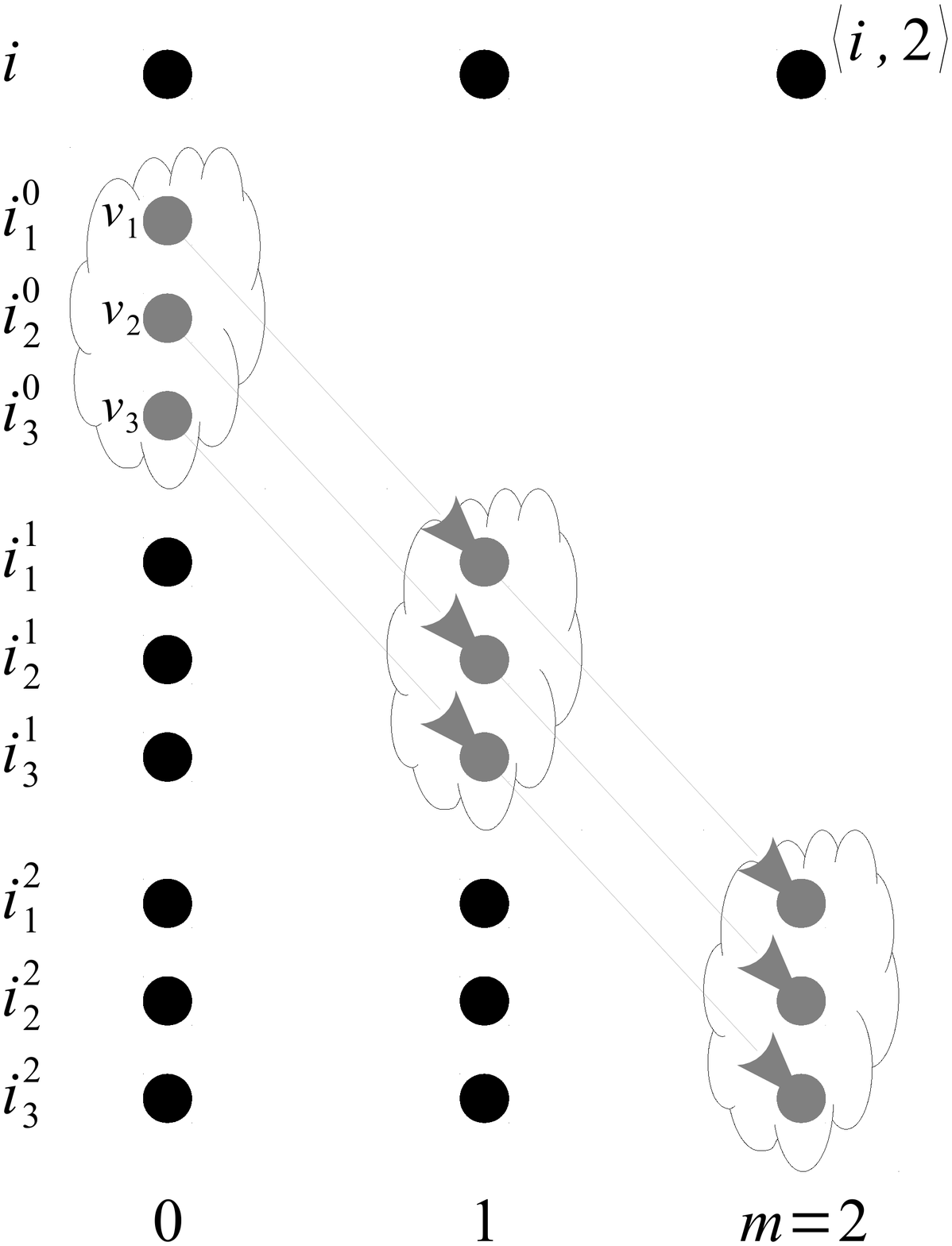}
        }
        \caption{
            A hidden capacity of $c\!=\!3$ at time $m\!=\!2$ indicates that any arbitrary $c$ values unknown to $i$ may exist in the system,
            each held by a distinct process.
        }
	\label{fig-lemma1}
    \end{center}
\end{figure}

\vspace{-4mm}
\begin{remark}
By definition, $\knownvals{i,m}=\emptyset$ (and thus $i$ is high)
for all times $m<0$, for all processes
$i$ in all runs.
\end{remark}

\begin{remark}
The hidden capacity of $i$ in $r$ is (weakly) decreasing as a function of time.
\end{remark}

\begin{lemma}[See Figure~\ref{fig-lemma1}]\label{exist-hidden-channels}
For any fip,
let $r$ be a run, let $i$ be a process and let $m$ be a time such that 
$i$ is active at time $m-1$.
Let $c$ be the hidden capacity of $\node{i,m}$ and
let $i_b^{\ell}$, for all $\ell\le m$ and $b=1,\ldots,c$,
be as in Definition~\ref{hiddencapacity}.
For every $c$ values $v_1,\ldots,v_c$ of $\Vals$, there exists a run $r'$ of the protocol
such that  $r'_i(m)=r_i(m)$,
and for all $\ell$ and $b$, (a)
$v_b \in \knownvals{i_b^{\ell}, \ell}$
(b) $\knownvals{i_b^{\ell},\ell} \setminus \set{v_b} \subseteq \knownvals{i,\ell}$,
and (c) $\node{i_b^{\ell},\ell}$
has hidden capacity $\ge c-1$ witnessed by
$i_{b'}^{\ell'}$ for $b'\ne b$ and $\ell'\le \ell$.
\end{lemma}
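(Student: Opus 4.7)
The plan is to exhibit the desired run $r'$ as $P[\alpha']$ for a carefully chosen adversary $\alpha' = (\vec{v}', \FP')$, where $P$ is the full-information protocol. For the initial values, I would set $v'_{i_b^0} := v_b$ for each $b = 1, \ldots, c$ and leave every other initial value as in $\vec{v}$; since the $i_b^0$ are distinct (being witnesses at time~$0$) and each $\node{i_b^0, 0}$ is hidden from $\node{i, m}$, this reassignment does not touch any initial value visible from $\node{i, m}$ in $r$. For the failure pattern, I would construct $\FP'$ by preserving every edge of $\Ga_\alpha(i, m)$ verbatim, additionally inserting the chain edge $(\node{i_b^{\ell-1}, \ell-1}, \node{i_b^\ell, \ell})$ for each $b$ and each $1 \le \ell \le m$, and suppressing every edge that either (i) delivers to some $\node{i_b^\ell, \ell}$ a message originating at a source outside $\Ga_\alpha(i, \ell-1) \cup \{\node{i_b^{\ell-1}, \ell-1}\}$, or (ii) would create a message chain from any hidden witness into $\Ga_\alpha(i, m)$.

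With this $\alpha'$ in hand, the three conclusions should follow from straightforward verification. The equality $r'_i(m) = r_i(m)$ holds because $\Ga_{\alpha'}(i, m) = \Ga_\alpha(i, m)$ by the preservation of the seen subgraph. Conclusion~(a) holds because the constructed chain propagates $v_b$ forward along $\node{i_b^0, 0}, \node{i_b^1, 1}, \ldots, \node{i_b^\ell, \ell}$. Conclusion~(b) holds because the suppression rule forces every initial node seen from $\node{i_b^\ell, \ell}$ other than $\node{i_b^0, 0}$ to also be seen from $\node{i, \ell}$, so its label lies in $\knownvals{i, \ell}$. For conclusion~(c), the $b$-chain is decoupled from the $b'$-chains by rule~(i), so no $\node{i_{b'}^{\ell'}, \ell'}$ with $b' \ne b$ is seen from $\node{i_b^\ell, \ell}$ in $r'$; moreover, since $\node{i_b^\ell, \ell}$'s view is essentially a restriction of $\node{i, \ell}$'s view enriched by the chain, $i_b^\ell$ has no more evidence at time~$\ell$ of any pre-$\ell'$ failure of $i_{b'}^{\ell'}$ than $i$ itself has at time~$m$ in $r$, and the latter is nil by the hypothesis that $\node{i_{b'}^{\ell'}, \ell'}$ is hidden from $\node{i, m}$.

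The main obstacle will be showing that $\FP'$ is a legitimate failure pattern in $\Crash(\tee)$. Two things must be checked: temporal consistency, namely that whenever the construction suppresses an outgoing edge of some $\node{j, \ell-1}$ all later outgoing edges of $j$ are also suppressed (so that $j$ can be assigned an honest crash round), and the crash-count bound, namely that at most $\tee$ processes end up crashed in $\FP'$. The former can be enforced by making the suppression rule monotone in time --- if $j$ drops an outgoing edge in round~$\ell$, we declare $j$ crashed at round~$\ell$ and suppress everything subsequent. The latter is the delicate bookkeeping: the hidden-capacity assumption provides the slack, because every process whose outgoing edges we newly suppress is either itself a hidden witness (and thus already within $\alpha$'s crash budget from $i$'s viewpoint, since hiddenness at $\node{i,m}$ is precisely $i$'s inability to rule out its earlier failure) or was already crashed in $\FP$. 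Careful accounting against the $\tee$-bound on $\FP$, together with the temporal monotonicity above, will complete the verification and thereby establish the lemma.
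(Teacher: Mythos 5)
Your high-level plan is the right one and matches the paper's: build $r'$ as $P[\alpha']$ where $\alpha'$ reassigns the initial value of each $i_b^0$ to $v_b$, isolates the $b$-chains in the failure pattern, and leaves $i$'s view unchanged. The paper's construction does exactly this, and your verification sketch of (a)--(c) is in the right spirit. The difference lies in how the failure pattern is actually modified, and that is where the proposal has genuine gaps.

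First, your suppression rule~(i) is materially more aggressive than what the paper does, and this breaks the crash accounting you need at the end. As written, rule~(i) lets $\node{i_b^\ell,\ell}$ receive a round-$\ell$ message only from sources in $\Ga_\alpha(i,\ell-1)\cup\{\node{i_b^{\ell-1},\ell-1}\}$; but the only time-$(\ell{-}1)$ node in $\Ga_\alpha(i,\ell-1)$ is $\node{i,\ell-1}$ itself, so $i_b^\ell$ is cut off from essentially everyone else. The paper instead has $i_b^\ell$ receive \emph{the same round-$\ell$ messages that $i$ receives}, plus a message from $i$ and the chain message from $i_b^{\ell-1}$. That choice is not cosmetic: it guarantees that the construction never \emph{removes} a round-$\ell$ edge from a non-faulty sender (a process that sent to $i_b^\ell$ but not to $i$ in $r$ was already crashing), so the only processes newly forced to crash are the witnesses $i_b^\ell$ with $\ell<m$, and those are necessarily already faulty in $r$ (otherwise a self-loop chain $\node{i_b^\ell,\ell}\to\cdots\to\node{i_b^\ell,m-1}\to\node{i,m}$ would contradict hiddenness). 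Under your rule~(i), by contrast, a non-faulty $j$ that legitimately sent to $i_b^\ell$ at round $\ell$ has that edge suppressed, and your final paragraph's claim --- that only hidden witnesses or already-crashed processes lose outgoing edges --- is simply false for such $j$. The ``careful accounting'' you defer to is exactly where this would surface, and it does not work as stated.

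Second, your rule~(ii) (``suppress every edge that would create a message chain from any hidden witness into $\Ga_\alpha(i,m)$'') is a global, self-referential desideratum rather than a definition of $\FP'$: whether a particular edge ``would create a chain'' depends on which other edges you keep, so the rule does not pick out a unique failure pattern and cannot be applied edge by edge. The paper sidesteps this by being fully constructive: each $i_b^\ell$ with $\ell<m$ is declared to crash in round $\ell+1$ and to successfully send \emph{only} to $i_b^{\ell+1}$. That single rule makes the chains pairwise disjoint and invisible to $\node{i,m}$ by a straightforward backward induction, which is what the paper uses to establish both $r'_i(m)=r_i(m)$ and conclusion~(c). Your argument for (c) (``$i_b^\ell$ has no more evidence of a pre-$\ell'$ failure of $i_{b'}^{\ell'}$ than $i$ has'') is the right intuition but rests on a construction that is not pinned down. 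To repair the proposal, replace rule~(i) with ``$i_b^\ell$ receives at round $\ell$ exactly what $i$ receives, plus from $i$ and from $i_b^{\ell-1}$,'' replace rule~(ii) with the explicit crash rule above, and then the crash-budget check becomes the observation that the only newly-scheduled crashes are of processes that were already faulty in $r$.
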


\begin{proof}
It is enough to define $r'$ up to the end of round $m$.
Let $i_b^{\ell}$,
for all $\ell\le m$ and $b=1,\ldots,c$,
be as in Definition~\ref{hiddencapacity}.
We define $r'$ to be the same as $r$, except for the following possible changes
(possible, as they may or may not hold in $r$):
\begin{enumerate}
\item
$i_b^0$ is assigned the initial value $b$, for every $b$.
\item
For every $0\le \ell < m$ and every $b$, the process
$i_b^{\ell}$ fails at $\ell$, at which
it successfully sends a message only to $i_b^{\ell+1}$.
\item
For every $0 < \ell \le m$ and every $b$, the process
$i_b^{\ell}$
receives, until time $\ell-1$ inclusive, the
exact same messages as in $r$. (By definition, $\node{i_b^\ell,\ell-1}$ is seen
by $\node{i,m}$ in $r$, and thus it indeed receives messages in $r$ until time
$\ell-1$, inclusive.) At time $\ell$, the process
$i_b^{\ell}$ receives the exact same messages as $i$, and, in addition,
a message from $i$ and the aforementioned message from $i_b^{\ell-1}$.
\end{enumerate}

It is straightforward to check, using backward induction on $\ell$, that in $r'$,
each $\node{i_b^{\ell},\ell}$ is not seen up to time $m$ by any process
other than
$i_b^{\ell'}$ for $\ell'>\ell$,
and is thus hidden from $\node{i,m}$ and from $i_{b'}^{\ell'}$ for all
$b' \ne b$ and for all $\ell'$.
Thus, for all $b$ and $\ell$, $\node{i_b^{\ell},\ell}$
has hidden capacity $\ge c-1$ witnessed by
$i_{b'}^{\ell'}$ for $b'\ne b$ and $\ell'\le \ell$.

We now show that none of the above changes alter the state of $i$ at $m$.
By definition, each $\node{i_b^{\ell},\ell}$ is hidden from $\node{i,m}$
in $r$, and as explained above --- in $r'$ as well.
We note that all modifications above affect a process $i_b^{\ell}$ only at
or after time $\ell$, and as this process at these times is not seen by
$\node{i,m}$
in either run, these modifications do not alter the state of $i$
at $m$.

Let $b \in \set{1,\ldots,c}$.
By definition of $r'$, we have $\knownvals{i_b^0,0} = \set{v_b}$.
Since for every $\ell>0$, $\node{i_b^{\ell},\ell}$ receives a message from
$\node{i_b^{\ell-1},\ell-1}$, we have by induction that
$v_b \in \knownvals{i_b^{\ell},\ell}$ for all $\ell$.

We now complete the proof by showing by induction that for all $\ell$,
$\knownvals{i_b^{\ell},\ell} \subseteq \knownvals{i,\ell} \cup \set{v_b}$.\footnote{%
A similar argument to the one used below in fact further shows that
for all $\ell>0$ and for all $b$,
$\knownvals{i_b^{\ell},\ell} =
\knownvals{i,\ell} \cup \set{v_b}$ in $r'$ for all $\ell$ and $b$.}

Base: $\knownvals{i_b^0,0} = \set{v_b} \subseteq \knownvals{i,0} \cup \set{v_b}$.

Step: Let $\ell>0$.
Let $v \in \knownvals{i_b^{\ell},\ell}$.
If $v \in \knownvals{i_b^{\ell},\ell-1}$, then
$v \in \knownvals{i,\ell}$, as $v_b^{\ell}$
is non-faulty at $\ell-1$ and thus its message is received by $\node{i,\ell}$.
Otherwise, $i_b^{\ell}$ is informed that $\exists v$ by a message it receives
at $\ell$.
By definition of $r'$, a message received by $\node{i_b^{\ell},\ell}$ is exactly
one of the following:
\begin{itemize}
\item
A message received by $\node{i,\ell}$. In this case, $v \in \knownvals{i,\ell}$ as well.
\item
A message sent by $\node{i,\ell-1}$. In this case, we trivially have
$v \in \node{i,\ell-1} \subseteq \knownvals{i,\ell}$.
\item
A message sent by $i_b^{\ell-1}$. In this case, by the induction hypothesis,
\[
v \in \knownvals{i_b^{\ell-1},\ell-1} \subseteq \knownvals{i,\ell-1} \cup \set{v_b}
\subseteq \knownvals{i,\ell} \cup \set{v_b}.\]
\end{itemize}
Thus, the proof by induction, and thus the proof of the lemma, is complete.
\end{proof}

We now generalize
Lemma~\ref{lem:decide-when-0} for $k$-set consensus.
Lemma \ref{two-face} performs this task.

\begin{lemma}
\label{two-face}
Let $P$ be a protocol solving $k$-set consensus.
Assume that in~$P$, every process $i$ that is low at any time~$m$
must decide by time~$m$ at the latest. 
Let $i$ be a process and let~$m$ be a time.
If the following conditions hold in a run $r$:

\begin{enumerate}
\item $i$ does not crash before~$m$,
\item\label{two-face-first-time} $i$ is low at~$m$ for the first time,
\item $\knownlows{i,m}=\set{v}$ for some $v$ (in particular,~$i$ has seen a single low value by time~$m$),
\item\label{two-face-hidden-capacity} $\node{i,m}$ has hidden capacity $\ge k-1$, and 
\item\label{two-face-k-live-ones} there exist $k$ distinct processes $j_1,\ldots,j_k$
such that  $\node{j_b,m-1}$ is high and $\node{j_b,m}$ is hidden from $\node{i,m}$,
for all $b=1,\ldots,k$.
\end{enumerate}
then $i$ decides in~$P$ on its unique low value $v$ at time~$m$. 
\end{lemma}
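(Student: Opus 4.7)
The plan is to argue by contradiction. Suppose that $i$ decides some value $w \neq v$ at $(r,m)$. By \Validity, $w \in \knownvals{i,m}$; since $\knownlows{i,m} = \{v\}$, the value $w$ cannot be any low value other than $v$, so $w$ must be a high value (i.e., $w \geq k$). I will construct an alternate run $r'$ of the same protocol~$P$, with $r'_i(m) = r_i(m)$ (so that $i$ still decides $w$ at time $m$ in $r'$), in which $k-1$ additional correct processes are forced to decide on $k-1$ distinct low values all different from $v$ (and hence from $w$). Together with $i$'s decision on~$w$, this yields $k$ distinct decisions by correct processes in $r'$, contradicting \kAgreement.

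To construct $r'$, let $\{v_1,\ldots,v_{k-1}\} = \{0,\ldots,k-1\}\setminus\{v\}$. Using condition~4 and the $k$ hidden processes $j_1,\ldots,j_k$ of condition~5, I invoke \cref{exist-hidden-channels} with $c = k-1$, selecting the top-layer witnesses $i_b^m$ to coincide with $j_b$ for $b = 1,\ldots,k-1$. The resulting run satisfies $r'_i(m) = r_i(m)$ and, for each $b \le k-1$, $v_b \in \knownvals{j_b,m}$ together with $\knownvals{j_b,m}\setminus\{v_b\} \subseteq \knownvals{i,m}$. In particular each such $j_b$ is low at time $m$ in $r'$ and must, by the assumption on~$P$, decide by time~$m$. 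The extra process $j_k$ is deliberately not used as a witness; it is kept in reserve as a spare hidden process at time~$m$.

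The heart of the proof is to show that each $j_b$ in fact decides on $v_b$ (rather than on $v$ or on some high value) at $(r',m)$. I plan to do so by recursively applying the lemma itself to $(j_b,m)$ in $r'$. Conditions~1 and~2 follow from the construction: $j_b$ is kept correct, and since $j_b$ was high at $m-1$ in~$r$ and the construction does not alter $j_b$'s view at time $m-1$, the process $j_b$ is still high at $m-1$ in $r'$ and becomes low at $m$ for the first time. Condition~4 is inherited: the other top-layer witnesses $j_{b'}$ with $b' \neq b$, $b' \le k-1$, are hidden from $\node{j_b,m}$ in $r'$, and the auxiliary witnesses of \cref{exist-hidden-channels} at earlier times supply the rest of the required hidden capacity. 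Condition~5 is met using these $k-2$ remaining $j_{b'}$'s together with $j_k$ and one further hidden process preserved by the construction.

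The main obstacle will be verifying condition~3 for $(j_b,m)$ in $r'$, namely that $\knownlows{j_b,m} = \{v_b\}$: a priori, $j_b$ could also learn $\exists v$ through the very message chains that deliver $v$ to~$i$. The resolution is to refine $r'$ further so that, while still preserving $r'_i(m) = r_i(m)$, every message chain in $r'$ carrying the value~$v$ is rerouted so as to bypass $\node{j_b,m}$. Here the spare process $j_k$ plays its crucial role: exploiting the fact that $\node{j_k,m}$ is hidden from $\node{i,m}$, any message chain of $r$ that would otherwise carry $v$ into $j_b$'s view at time~$m$ can be redirected in $r'$ through $j_k$ or cut by additional failures confined to process-time nodes not seen by~$i$ at~$m$, preserving $r'_i(m) = r_i(m)$ while excising $v$ from $\knownlows{j_b,m}$. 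Well-foundedness of the recursion is ensured by a decreasing measure such as the number of process-time nodes that end up knowing $\exists v$ in the constructed run. Once every $j_b$ with $b \le k-1$ decides on~$v_b$ in $r'$, the $k$ distinct decisions $w, v_1,\ldots,v_{k-1}$ by the correct processes $i, j_1,\ldots,j_{k-1}$ contradict \kAgreement, completing the proof.
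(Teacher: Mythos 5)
Your proof sketch contains a fatal off-by-one error that stems from missing the key structural move of the paper's argument. You construct a run $r'$ in which $k-1$ correct processes decide on the $k-1$ low values $\{0,\ldots,k-1\}\setminus\{v\}$, and count these together with $i$'s putative high decision $w$ to get ``$k$ distinct decisions \ldots\ contradicting $k$-Agreement.'' But $k$-Agreement permits up to $k$ distinct values; it is only $k+1$ that yields a contradiction. So no contradiction is reached. What is missing is a correct process deciding on $v$ itself, which would give the $k$-th low value and hence $k+1$ distinct decided values overall. This is precisely the technical heart of the paper's argument, and it cannot be obtained by invoking Lemma~\ref{exist-hidden-channels} alone: the hidden capacity is only $\ge k-1$, so Lemma~\ref{exist-hidden-channels} can plant only $k-1$ values. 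The paper gets the $k$-th value $v$ from a different source: the process $i_v$ that actually sent $v$ to $i$ in the original run $r$. It traces a ``hidden path'' of processes carrying $v$, argues via the induction hypothesis (at time $m-1$, not at time $m$) that $i_v$ and all the $i_w$ decide on their unique low values by $m-1$, and then performs an explicit sequence of $k$ ``changes'' to $r'$, one per $j_b$, wiring up round-$m$ messages so that $j_1,\ldots,j_k$ \emph{collectively} decide on all $k$ low values $\{0,\ldots,k-1\}$. Only then does $k$-Agreement force $i$ to decide a low value, and Validity pins it down to $v$.

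Two further points. First, the paper's induction is on the time $m$, with the inductive hypothesis applied at $m-1$ to the hidden processes $i_w$ (which are crafted to be low for the first time at $m-1$ with $\knownlows{i_w,m-1}=\{w\}$). Your proposal instead recurses on the lemma at the \emph{same} time $m$ but for different processes $j_b$, and the well-foundedness of that recursion is doubtful: the ``number of process-time nodes that end up knowing $\exists v$'' does not obviously decrease under your rerouting operation, and you would need to verify all five hypotheses of the lemma for $(j_b,m)$ in a modified run, including condition~5 (the existence of $k$ fresh hidden $j$'s for $j_b$), which your construction does not supply. Second, the ``rerouting'' you propose to enforce $\knownlows{j_b,m}=\{v_b\}$ — suppressing $v$ from $j_b$'s view while preserving $i$'s view — is exactly what is delicate here, and the paper sidesteps it entirely: it never needs the $j_b$ to know exactly one low value, because the $j_b$ are not the targets of the induction hypothesis; instead each $j_b$ decides at time $m$ by a $k$-Agreement argument against a carefully constructed auxiliary run $s$. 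I'd recommend restructuring along the paper's lines: induct on time, obtain $k$ hidden faulty processes each knowing exactly one low value (including $v$) at $m-1$, apply the IH to them, and then use the sequential ``changes'' to lift those decisions to the $k$ correct processes $j_1,\ldots,j_k$.
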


\begin{remark}
The processes $j_1,\ldots,j_k$ required by
Condition~\ref{two-face-k-live-ones}
of Lemma~\ref{two-face} need not be disjoint from the
processes $i_1^m,\ldots,i_{k-1}^m$ required by
Condition~\ref{two-face-hidden-capacity}.
\end{remark}

\begin{proof}[Proof of Lemma~\ref{two-face}]
We prove the lemma by induction on $m$.

Base ($m=0$):
Since $K_i \exists v$ at time $0$, the value~$v$ must be $i$'s initial value, and thus  
$\knownvals{i,0}=\set{v}$.
As $\node{i,m}$ is low, $i$ decides at $0$.
By the \Validity\ property of $P$, it must decide on a value in $\knownvals{i,0}$,
namely, on $v$.

Step ($m>0$):
\begin{figure}[ht]
    \begin{center}
        \subfigure[$r$, as seen by $\node{i,2}$.]{
            \label{fig-lemma2:first}
	    \includegraphics[width=0.22\textwidth]{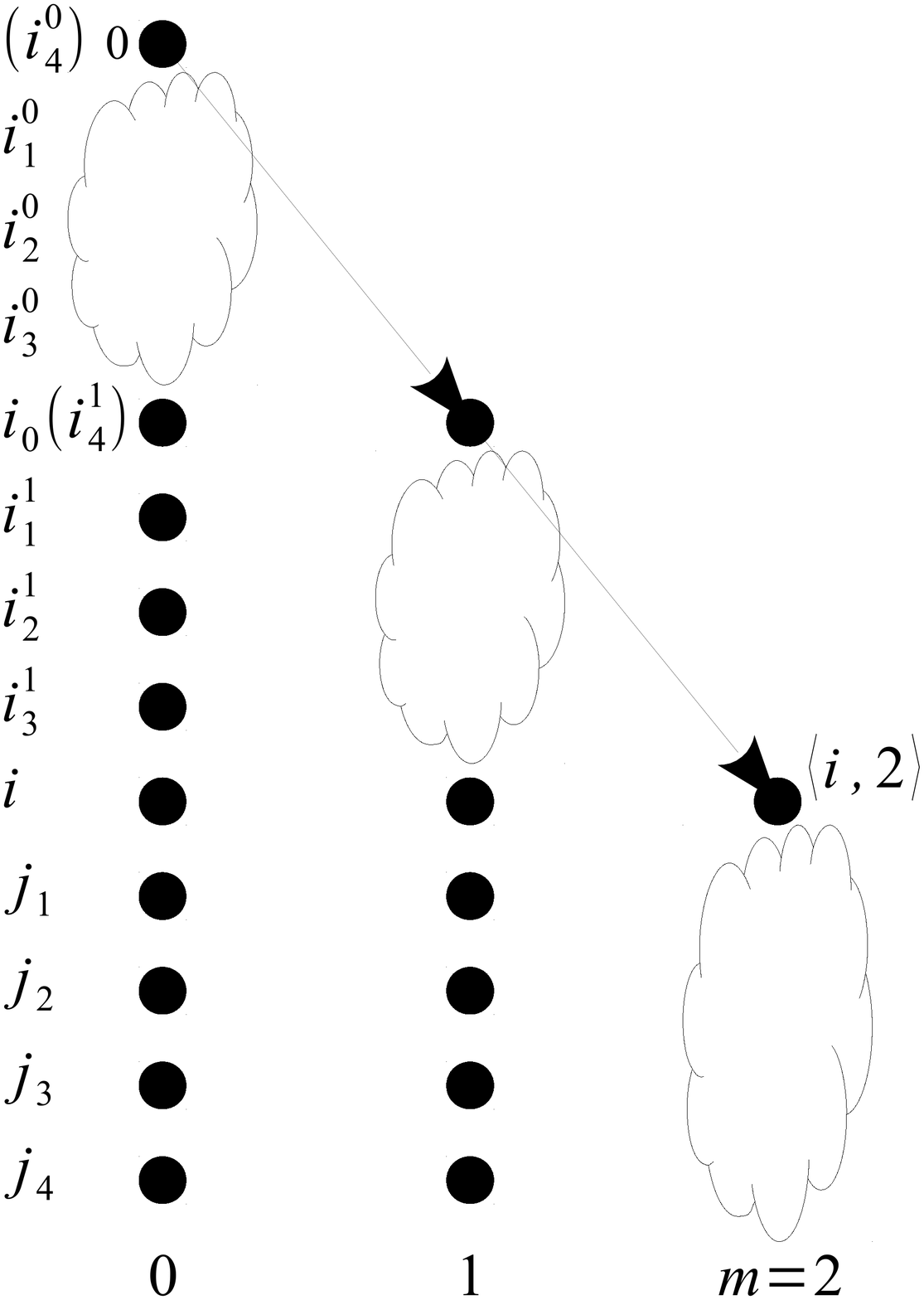}
        }~
        \subfigure[The run $r'$.]{
            \label{fig-lemma2:second}
	    \includegraphics[width=0.22\textwidth]{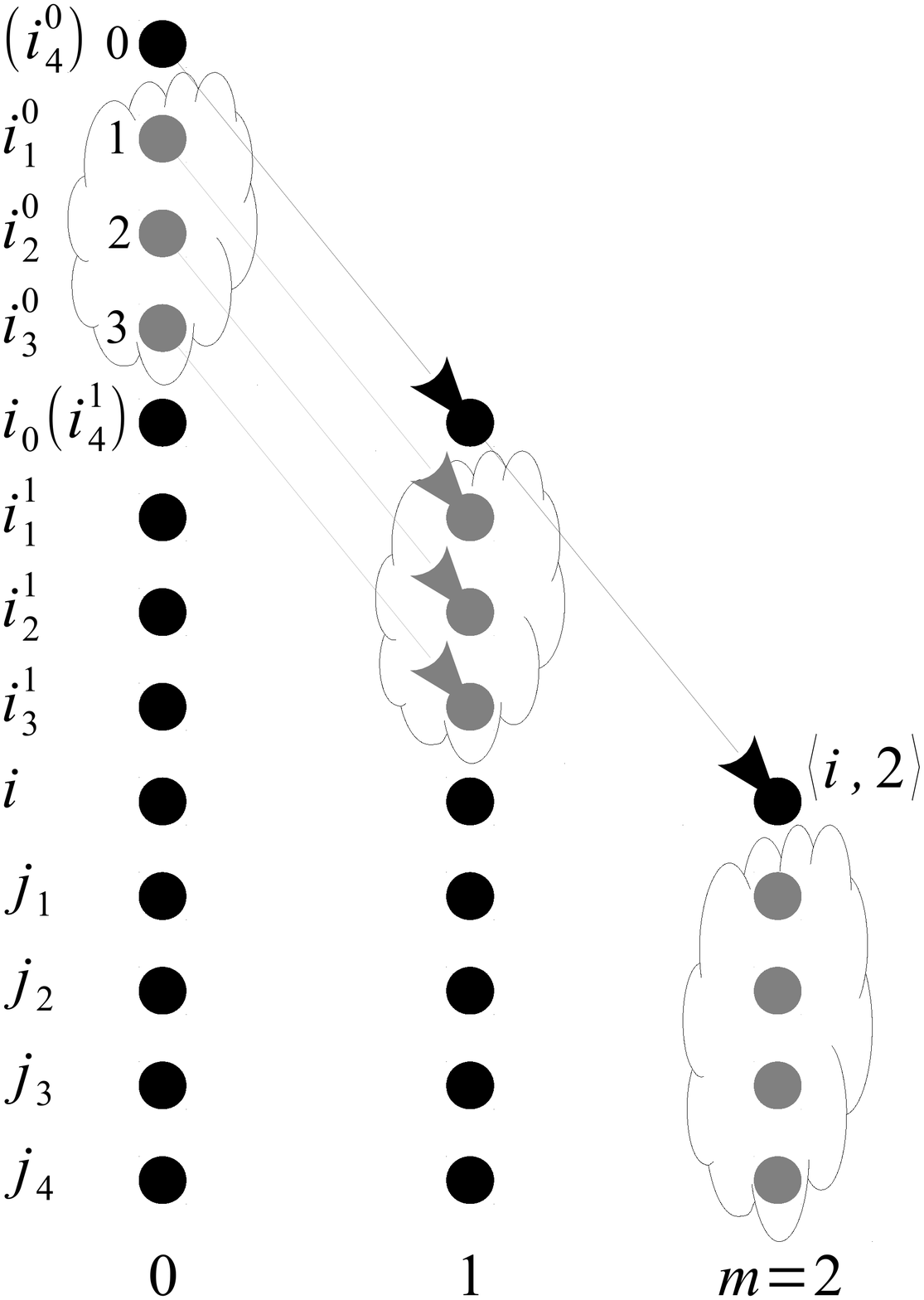}
        }~
        \subfigure[$r'$, as seen by $\node{i_3^1,1}$. The induction hypothesis dictates $i_3^1$ decides $3$ at $1$.]{
            \label{fig-lemma2:third}
	    \includegraphics[width=0.22\textwidth]{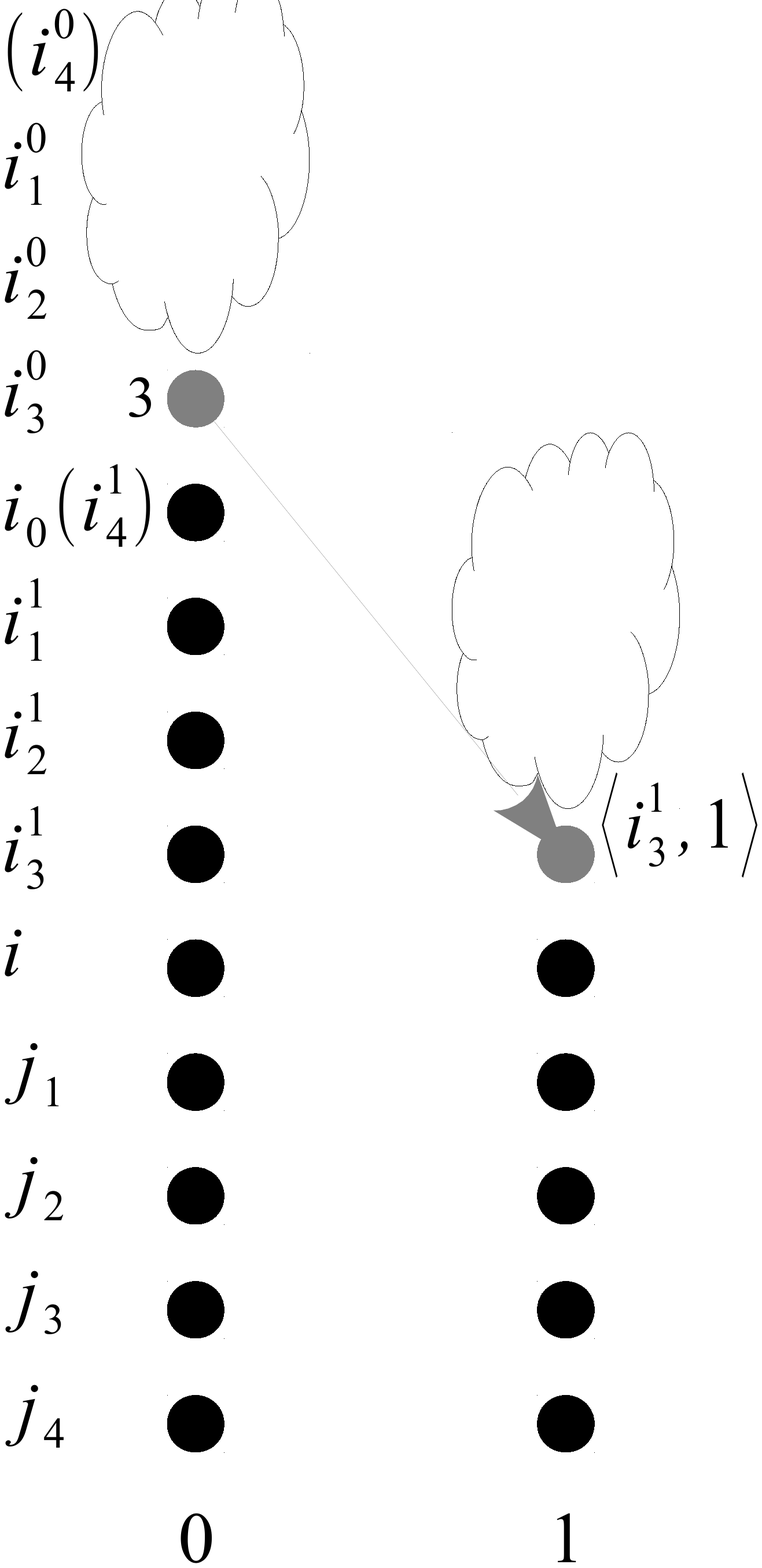}
        }~
        \subfigure[We aim to adjust the messages received by $j_1,\ldots,j_4$ at $2$ so that they collectively decide on all low values.]{
            \label{fig-lemma2:fourth}
	    \includegraphics[width=0.22\textwidth]{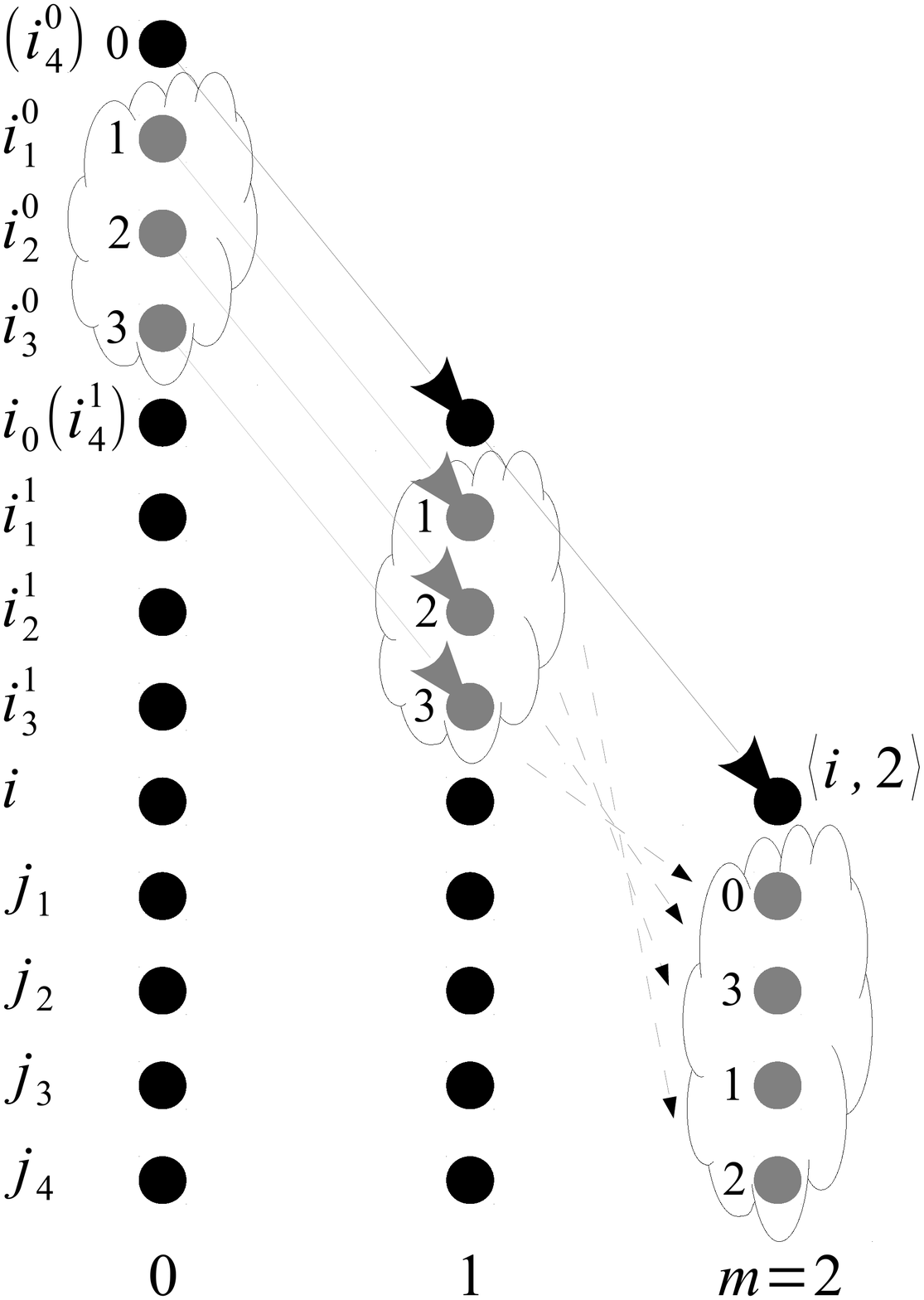}
        }
        \caption{
            Lemma~\ref{two-face} induction step proof strategy (for $m=2$, $k=4$).
        }
	\label{fig-lemma2}
    \end{center}
\end{figure}

Let $i_b^{\ell}$,
for all $\ell\le m$ and $b=1,\ldots,k-1$,
be as in Definition~\ref{hiddencapacity}. (See Figure~\ref{fig-lemma2:first}.)
Let $r'$ be the run of $P$ guaranteed to exist
by Lemma~\ref{exist-hidden-channels},
with respect to the values $\set{0,\ldots,k-1} \setminus \set{v}$. (See
Figure~\ref{fig-lemma2:second}.)
As $j_1,\ldots,j_k$ are seen by $i$ up to time $m$,
we assume w.l.o.g.\ that neither $j_1,\ldots,j_k$ nor $i$ ever fail in $r'$.
We henceforth work in $r'$.

For readability, let us denote by $i_w$, for all
$w \in \set{0,\ldots,k-1} \setminus \set{v}$, the unique process among
the $i_{b'}^{m-1}$ associated with the value $w$ in the definition of $r'$ by
Lemma~\ref{exist-hidden-channels}.
Hence,
$w \in \knownvals{i_w,m-1} \cap \set{0,\ldots,k-1} = \knownlows{i_w,m-1}$.
By Condition~\ref{two-face-first-time}, $\node{i,m-1}$ is high,
and thus, by definition of $i_w$,
$\knownlows{i_w,m-1} = \knownvals{i_w,m-1} \cap \set{0,\ldots,k-1}
\subseteq
(\knownvals{i,m-1} \cap \set{0,\ldots,k-1}) \cup \set{w} =
\knownlows{i,m-1} \cup \set{w} = \set{w}.$
We conclude that $\knownlows{i_w,m-1} = \set{w}$.

As $\knownlows{i,m} \setminus \knownlows{i,m-1} = \set{v} \setminus \emptyset =
\set{v}$, process~$i$ learned that $\exists v$ by a message it received 
at~$m$. Let $i_v$ denote the sender of this message.
We thus trivially have that $v \in \knownlows{i_v,m-1}$.
Furthermore, we have $\knownlows{i_v,m-1} \subseteq \knownlows{i,m} = \set{v}$,
and thus $\knownlows{i_v,m-1} = \set{v}$.

Define $i_k^{m-1}\eqdef i_v$ and $v_k \eqdef v$.
As $\knownlows{i_k^{m-1},m-1} = \set{v}$, for every
$\ell<m-1$ there exists a process $i_k^{\ell}$ s.t.\ (a) $\node{i_k^{\ell},\ell}$ is seen
by $\node{i_k^{\ell+1},\ell+1}$ (and thus does not fail before $\ell$)
and (b) $v \in \knownlows{i_k^{\ell},\ell}$ (and thus
$\knownlows{i_k^{\ell},\ell} = \set{v}$). (See Figure~\ref{fig-lemma2:second}.)
Let $w \in \set{0,\ldots,k-1} \setminus \set{v}$ and let $\ell<m$.
As $\knownlows{i_w,m-1}=\set{w}$,
and as $\knownlows{i_k^{\ell},\ell}=\set{v} \ne \set{w}$,
$\node{i_k^{\ell},\ell}$ is not seen by $\node{i_w,m-1}$
and thus (as $i_k^{\ell}$ does not fail before $\ell$), it is hidden from
$\node{i_w,m-1}$.
Furthermore, as $\knownlows{i_k^{\ell},\ell} = \set{v}$, it is distinct
from all $i_{b}^{\ell}$ for $b<k$.
Let now $w \in \set{0,\ldots,k-1}$.
We conclude that $\node{i_w,m-1}$
has hidden capacity $\ge k-1$ witnessed by
$i_b^{\ell}$ for $\ell \le m-1$ all for all $b$ s.t.\ $v_b \ne w$. (See
Figure~\ref{fig-lemma2:third}.)
Thus, by the induction hypothesis,
$i_w$ decides $w$ by time $m-1$.

We now apply a sequence of consecutive possible
changes to $r'$ (possible, as they may or may not actually modify $r'$),
numbered from $k$ to $1$. (See Figure~\ref{fig-lemma2:fourth}.)
For every $b=1,\ldots,k$,
change~$b$ possibly modifies only~$j_b$, and only at times~$\ge m$,
and does not contradict the fact that $i$ and all $j_1,\ldots,j_k$ never
fail.
Therefore, change~$b$ does not affect the state
$i$ or of $j_{b'}$'s up to time $m$, inclusive. Therefore, once change~$b$
is performed, the state of $j_b$ at $m$ is no longer affected by subsequent
changes.
As we show that following change~$b$, $j_b$ decides at $m$,
and denote the value decided upon by $v_b$, we therefore have that
the fact that $j_b$
decides upon $v_b$ at $m$ at the latest continues to hold throughout the
rest of the changes.

We now inductively describe the changes (recall that changes are performed starting
with change~$k$ and concluding with change~$1$):
Define $r^k\eqdef r'$. For every $b$, change~$b$ is applied to $r^b$
to yield a run $r^{b-1}$.
Let $b \in \set{1,\ldots,k}$ and assume that changes $k,\ldots,b+1$ were already
performed, and that for each $b'>b$, we have that in $r^{b'-1}$ (and thus
in $r^b$), $j_{b'}$ decides
a low value $v_{b'}$ by $m$ at the latest, such that $j_{b+1},\ldots,j_k$ are
distinct of each other.

Change~$b$: Let $j_b$ never fail. Furthermore, let $j_b$ receive at time $m$
messages exactly from (a)
$\set{i_0,\ldots,i_{k-1}} \setminus \set{i_{v_{b+1}},\ldots,i_{v_k}}$,
(b)
$i$, and
(c)
$j_1,\ldots,j_k$, except, of course, from $j_b$.

As $i$ and $j_1,\ldots,j_k$ are all high at $m-1$, and as
$\knownlows{i_w,m-1} = \set{w}$ for all $w$, we now have
$\knownlows{j_b,m} = \set{0,\ldots,k-1} \setminus \set{v_{b+1},\ldots,v_k}$.
In particular, as $b>0$, $\node{j_k,m}$ is low, and therefore must decide
at $m$ or before.
We note that there exists a run $s$ s.t.\ $s_{j_b}(m)=r_{j_b}^{b-1}(m)$,
in which neither $j_b$, nor any of the processes from which it receives messages
at $m$, ever fail. In this run, $j_{b+1},\ldots,j_k$ respectively decide on
$v_{b+1},\ldots,v_k$, and $\set{i_0,\ldots,i_{k-1}} \setminus \set{i_{v_{b+1}},\ldots,i_{v_k}}$ decide on the rest of $\set{0,\ldots,k-1}$. Thus,
by the \kAgreement\ property of $P$,
$j_b$ must decide in $s$ on a value $v_b \in \set{0,\ldots,k-1}$.
As $\knownlows{j_b,m} = \set{0,\ldots,k-1} \setminus \set{v_{b+1},\ldots,v_k}$,
by the \Validity\ property of $P$, we have that $v_b \ne \set{v_{b+1},\ldots,v_k}$.
As $s_{j_b}(m)=r_{j_b}^{b-1}(m)$,
$j_b$ must decide on $v_b$ in $r^{b-1}$ as well and the proof by induction is
complete.

By the above construction, $r_i^0(m)=r'_i(m)=r_i(m)$.
Thus, it is enough to show that in $r^0$, $i$ decides on $v$ at $m$. We thus,
henceforth, work in $r^0$.
As in $r$, and thus also in~$r^0$, $\node{i,m}$ is low, $i$ must decide by
$m$ at the latest.
As all of $j_1,\ldots,j_k$ never fail, and furthermore, collectively
decide on all of $\set{0,\ldots,k-1}$ (see Figure~\ref{fig-lemma2:fourth}), by the \kAgreement\ property of $P$,
as $i$ never fails, it must decide on a low value.
By the \Validity\ property, $i$ must decide on a value known to it to exist.
As $\knownlows{i,m}=\{v\}$ (in~$r$, and thus also in~$r^0$), we have
that $i$ decides $v$. As $v \notin \knownlows{i,m-1}$, by \Validity\
we obtain that $i$ does not decide before $m$ and the proof is complete.
\end{proof}

Using Lemmas~\ref{exist-hidden-channels} and~\ref{two-face},
we derive a necessary condition for deciding in $\OptMink$.

\begin{lemma}
\label{k-set-cant-decide-before}
Let $P$ be a protocol solving $k$-set consensus.
Assume that in~$P$, every process $i$ that is low at any time~$m$ must decide
by time~$m$ at the latest. 
Then no process decides in $P$ as long as it is both high and has hidden capacity
$\ge k$.
\end{lemma}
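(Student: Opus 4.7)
\emph{Proof plan.}
The plan is to argue by contradiction: suppose some process $i$ decides at time $m$ in a run $r$ of $P$ while $\node{i,m}$ is high and has hidden capacity $\geq k$. By \Validity, the value on which $i$ decides, call it $w$, must lie in $\knownvals{i,m}$, and hence (since $i$ is high) satisfies $w\notin\set{0,\ldots,k-1}$. From here I would construct a sibling run of $P$ in which $k+1$ correct processes decide on $k+1$ distinct values, contradicting \kAgreement.

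The first step is to invoke \cref{exist-hidden-channels} with $c=k$ and the $k$ distinct low values $v_1=0,\ldots,v_k=k-1$, producing a run $r'$ of $P$ with $r'_i(m)=r_i(m)$ (so $i$ still decides on $w$ in $r'$), in which for every $b$ the witness $\node{i_b^m,m}$ satisfies $v_b\in\knownvals{i_b^m,m}$, $\knownvals{i_b^m,m}\setminus\set{v_b}\subseteq \knownvals{i,m}$, and has hidden capacity $\geq k-1$. Because $i$ is high at $m$, this forces $\knownlows{i_b^m,m}=\set{v_b}$. I would further extend $r'$ past time $m$ so that $i$ and all of $i_1^m,\ldots,i_k^m$ are correct.

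Next, for each $b$, I would apply \cref{two-face} to $\node{i_b^m,m}$ in $r'$ to conclude that $i_b^m$ decides on $v_b$ at time $m$. Conditions~(1), (3), and (4) follow immediately from the construction and from \cref{exist-hidden-channels}. For condition~(2)---``low at $m$ for the first time''---the key observation is that in $r'$ the value $v_b$ propagates only along the chain $i_b^0\to i_b^1\to\cdots\to i_b^m$, with the last link arriving precisely at time $m$; hence $v_b\notin\knownvals{i_b^m,\ell}$ for $\ell<m$, and then monotonicity of knowledge combined with property~(b) of \cref{exist-hidden-channels} gives $\knownvals{i_b^m,\ell}\subseteq\knownvals{i,m}$, which is devoid of low values, so $i_b^m$ is high at every $\ell<m$. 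For condition~(5), I would use the $k$ distinct processes $i$ and $\set{i_{b'}^m:b'\neq b}$: each $i_{b'}^m$ is hidden from $\node{i_b^m,m}$ by property~(c) of \cref{exist-hidden-channels}, and $\node{i,m}$ is hidden from $\node{i_b^m,m}$ since $i_b^m$ receives a message from $i$ in round $m$ in the construction (so $i$ is known to be alive at $m-1$ and $\node{i,m}$ itself is unseen); and all of them are high at $m-1$---the $i_{b'}^m$'s by the same propagation argument just used for condition~(2), and $i$ because being high is monotone decreasing in time.

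Putting everything together, in $r'$ the $k+1$ correct processes $i, i_1^m,\ldots,i_k^m$ decide on the $k+1$ pairwise distinct values $w, v_1,\ldots,v_k$, contradicting \kAgreement\ and completing the proof. The hardest part of this plan is verifying condition~(2) of \cref{two-face}: one must trace how each injected low value $v_b$ propagates in $r'$ and confirm that it reaches $i_b^m$ strictly at time $m$, not before. This rests both on the backward-induction property of the construction---that $\node{i_b^\ell,\ell}$ is seen only along its own chain---and on the fact that the witnesses $i_b^0,\ldots,i_b^m$ must be distinct processes (which is in fact forced, since each $i_b^\ell$ with $\ell<m$ is failed at $\ell$ in $r'$, while $i_b^m$ is active at $m$). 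The corner case $m=0$ can be dispatched separately: condition~(2) is then vacuous, and condition~(5) follows from the convention that $\knownvals{\cdot,\ell}=\emptyset$ for $\ell<0$, so every process is high at $-1$.
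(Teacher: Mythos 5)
Your proposal is correct and follows essentially the same approach as the paper: both invoke \cref{exist-hidden-channels} with the $k$ low values $\set{0,\ldots,k-1}$ to build the sibling run $r'$, then apply \cref{two-face} to each witness $i_b^m$ (taking $i$ together with $\set{i_{b'}^m : b'\ne b}$ as the $j$'s) to force $k$ distinct low decisions, and conclude via \kAgreement\ and \Validity. The only differences are presentational---you argue by contradiction while the paper directly shows $i$ cannot decide, and you spell out the verification of conditions (2) and (5) of \cref{two-face} in more detail than the paper does.
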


\begin{proof}
Let $r$ be a run of $P$, let $i$ be a process and let $m$ be a time s.t.\
$\node{i,m}$ is high and has hidden capacity $\ge k$.
Let $i_b^{\ell}$, for all $\ell\le m$ and $b=1,\ldots,k$,
be as in Definition~\ref{hiddencapacity}.
Let $r'$ be the run of $P$ guaranteed to exist
by Lemma~\ref{exist-hidden-channels},
with respect to the values $\set{0,\ldots,k-1}$,
with $i_b^{\ell}$
associated with the value $b-1$ for all $\ell$.
As $i_b^m$, for all $b$, are seen by $i$ up to time $m$,
we assume w.l.o.g.\ that neither they nor $i$ ever fail in $r'$.
As $r'_i(m)=r_i(m)$,
it is enough to show that $i$ does not decide at $m$ in $r'$. We thus,
henceforth, work in $r'$.

Let $b \in \set{0,\ldots,k-1}$. By definition of $r'$,
$\knownlows{i_b^m,m} = \knownvals{i_b^m,m}\cap\set{0,\ldots,k-1} \subseteq
(\knownvals{i,m} \cap \set{0,\ldots,k-1}) \cup \set{b-1} = \knownlows{i,m}
\cup \set{b-1} = \set{b-1}$. As $b-1 \in \knownlows{i_b^m,m}$, we conclude that
$\knownlows{i_b^m,m} = \set{b-1}$.
If $m=0$, then we trivially have that $i_b^m$ is low for the first time at $m$.
Otherwise, as $\node{i,m}$ is high, and as, by definition,
$\node{i_b^m,m-1}$ is seen
by $\node{i,m}$ (in $r$, and therefore in $r'$), we have that $i_b^m$ is
low at $m$ for the first time as well.
By definition of $r'$, $i_b^m$ has hidden capacity $\ge k-1$.
By applying Lemma~\ref{two-face} with $i$ and $\set{i_{b'}^m}_{b' \ne b}$ as
$j_1,\ldots,j_k$, we thus obtain that $i_b^m$ decides $b-1$ at $m$.

Thus, all of $\set{0,\ldots,k-1}$ are decided upon and so, by the \kAgreement\
property of $P$, $i$ may not decide on any other value.
As $\node{i,m}$ is high, by the \Validity\ property of $P$,
$i$ may not decide on any of $\set{0,\ldots,k-1}$ at $m$.
Thus, $i$ does not decide at $m$.
\end{proof}

\begin{proof}[Proof of Lemma~\ref{k-set-correct}]
In some run of $\OptMink$, let $i$ be a non-faulty process.

\Decision:
Let $m$ be a time s.t.\ $i$ has not decided until $m$, inclusive.
Thus, $\node{i,m}$ has hidden capacity $\ge k$.
Let $i_b^{\ell}$, for all $\ell\le m$ and $b=1,\ldots,k$,
be as in Definition~\ref{hiddencapacity}.
By definition, $i_b^{\ell}$, for every $\ell < m$ and $b=1,\ldots,k$, 
fails at time $\ell$. Thus, $k\cdot m \le f$, where $f$ is the number of failure
in the current run. Thus, $m \le \frac{f}{k}$, and therefore
$m \le \bigl\lfloor \frac{f}{k} \bigr\rfloor$.
Therefore, $i$ decides by time $\bigl\lfloor \frac{f}{k} \bigr\rfloor +1$ at the latest.

Henceforth, let $m$ be the decision time of $i$
and let $v = \minval{i,m}$ be the value upon which $i$ decides.

\Validity:
As $v = \minval{i,m}$, we have $v \in \knownvals{i,m}$ and thus
$K_i \exists v$ at $m$. Thus, $\exists v$.

\kAgreement:
It is enough to show that at most $k-1$ distinct values
smaller than $v$ are decided upon in the current run.
Since $i$ decides at $m$, $\node{i,m}$ is either low or has hidden capacity $<k$.
If $\node{i,m}$ is low, then $v = \minval{i,m} \le k-1$, and thus
there do not exist more than $k-1$ distinct legal values smaller than $v$,
let alone ones decided
upon.

For the rest of this proof we assume, therefore, that $\node{i,m}$ is
high and has hidden capacity $<k$.
As $\node{i,m}$ does not have hidden capacity $k$, there exists $0\le\ell\le m$
s.t.\ no more than $k-1$ processes at time $\ell$ are hidden from $\node{i,m}$.

Let $w<v$ be a value decided upon by a non-faulty processor.
Let $j$ be this processor, and let~$m'$ be the time at which $j$ decides on $w$.
As $w<v$ and as $v = \minval{i,m}$, $\node{j,m'}$ is not seen
by $\node{i,m}$.
As $j$ and $i$ are both non-faulty, we conclude that $m' \ge m$, and thus
$m' \ge \ell$.
Let $H$ be the set of all processes seen at $\ell$ by $\node{j,m'}$.
Since $m' \ge \ell$,
We have $\knownvals{j,m'} = \bigcup_{h \in H} \knownvals{h,\ell}$. (Note
that if $m' = \ell$, then $H = \set{j}$.)
As $w = \minval{j,m'}$, we have $w = \minval{h,\ell}$ for
some $h \in H$. As $w < v = \minval{i,m}$, we have
$w \notin \knownvals{i,m}$, and thus $\node{h,\ell}$ is not seen
by $\node{i,m}$. As $\node{h,\ell}$ is seen by $\node{j,m'}$, $h$ has not failed
before $\ell$, and thus $\node{h,\ell}$ is hidden from $\node{i,m}$.
To conclude, we have shown that
\[w \in \bigl\{ \minval{h,\ell} \mid
\mbox{$\node{h,\ell}$ is hidden from $\node{i,m}$} \bigr\}.\]
As there are at most $k-1$ processes hidden at $\ell$ from $\node{i,m}$,
we conclude that no more than $k-1$ distinct values lower than $v$ are
decided upon by non-faulty processes, and the proof is complete.
\end{proof}

\cref{thm:optmink} follows from \cref{k-set-correct,k-set-cant-decide-before}.

\subsection{A Combinatorial Topology Proof of Lemma~\ref{two-face}}
\label{sec-topo-proof}

\subsubsection{Basic Element of Combinatorial Topology}
\label{sec-topo-def}

A \emph{complex} is a finite set $V$
and a collection of subsets $\cK$ of $V$ closed under containment.
An element of $V$ is called a \emph{vertex} of $\cK$,
and a set in $\cK$ is called a \emph{simplex}.
A (proper) subset of a simplex $\sigma$ is called a \emph{(proper) face}.
The \emph{dimension} $\dim \sigma$ is $|\sigma|-1$.
The dimension of a complex $\cK$, $\dim \cK$,
is the maximal dimension of any of $\cK$'s simplexes.
A complex $\cK$ is \emph{pure} if all its simplexes have the same dimension.

For a simplex $\sigma$, let $\bdry \sigma$ denote the
complex containing all proper faces of $\sigma$.  
If $\cK$ and $\cL$ are disjoint, their \emph{join}, $\cK \ast \cL$,
is the complex  $\set{ \sigma \cup \tau : \sigma \in \cK \wedge \tau \in \cL}$.

A \emph{colouring} of a complex $\cK$ is a map from the vertices of
$\cK$ to a set of \emph{colours}.
A simplex of $\cK$ is \emph{fully coloured} if its vertices are mapped to distinct colours.

Informally, a \emph{subdivision} $\Div \sigma$ of $\sigma$ is a complex
constructed by subdividing each $\sigma' \subseteq \sigma$ into smaller simplexes.
A subdivision $\Div \sigma$ maps each $\sigma' \subseteq \sigma$ to the 
pure complex $\Div \sigma'$ of dimension $\dim \sigma$ containing the simplexes that subdivide $\sigma'$.
Thus, for all $\sigma', \sigma'' \subseteq \sigma$,
$\Div \sigma' \cap \Div \sigma'' = \Div \sigma' \cap \sigma''$.
For every vertex $v \in \Div \sigma$,
its \emph{carrier}, $\Car v$,
is the face $\sigma' \subseteq \sigma$ of smallest dimension such that
$v \in \Div \sigma'$.

The \emph{barycentric} subdivision $\Bary \sigma$ of $\sigma$ can be
defined in many equivalent ways. Here we adopt the following 
combinatorial definition.
$\Bary \sigma$ is defined inductively by dimension.
For dimension 0, for every vertex $v$ of $\sigma$,
$\Bary v = v$.
For dimension $\ell$, $1 \leq \ell \leq \dim \sigma$,
for every $\ell$-face $\sigma'$ of $\sigma$,
for a new vertex $v = \sigma'$,
$\Bary \sigma' = v \ast \Bary \bdry \sigma'$.

Let $\Div \sigma$ be a subdivision of $\sigma$. 
A \emph{Sperner colouring} of $\Div \sigma$ is a colouring 
that maps every vertex $v \in \Div \sigma$ to a vertex in $\Car v$.

\subsubsection{Proof of Lemma~\ref{two-face}}

Consider any run $r$. We proceed by induction on the time $m$.

For the base of the induction $m=0$, if the four conditions holds for 
a process $i$ at time $0$, then it must be that 
$i$ starts in $r$ with input $v$, 
and consequently $V\ang{i, m} = \set{v}$.
Therefore, $i$ decides $v$ at time $0$, since $P$ satisfies the validity requirement
of $k$-set consensus.

Let us assume the claim holds until time $m-1$. We prove it holds at $m$.
Let $i$ be a process that satisfies the four conditions at time $m$.

Without loss of generality, let us assume $L\ang{i,m}=\{0\}$.
Let $i_0$ be a process such that $i$ receives a message
from $i_0$ at time $m$ and $L\ang{i_0,m-1} = \set{0}$.
We have $\ang{i,m}$ has hidden capacity greater or equal than $k-1$, 
thus, 
\cref{exist-hidden-channels}
implies that  
there exist a run $r'$ indistinguishable to $\ang{i,m}$ such that 
there are $k-1$ processes $i_1, \hdots, i_{k-1}$ 
such that for each $i_x$, $1 \leq x \leq k-1$,
$\ang{i_x, m-1}$ hidden to $\ang{i,m}$
and $L\ang{i_x,m-1} = \set{x}$.

By induction hypothesis, every $i_x$, $0 \leq x \leq k-1$,
decides at time $m-1$, at the latest, on its unique low value~$x$.
We assume, for the sake of contradiction, that $i$ decides 
on a non-low value at time~$m$
(if $i$ decides before, it necessarily decides on a non-low value).
For simplicity, let us assume $i$ decides on $k$.

By hypothesis, there are $k$ processes, $j_1, \hdots, j_k$, 
(distinct from $i$ and $i_x$)
such for each $1 \leq y \leq k$, $L\ang{j_y, m-1} = \emptyset$. 
Note that $k \in H\ang{j_y, m-1}$, for every $j_y$.

Below, we only consider runs in which 
a subset of $i_0, \hdots, i_{k-1}$ crash in round $m$
and every $j_y$ receives at least one message from some $i_x$;
all other process do not crash in round $m$.
Thus, $L\ang{j_y, m} \neq \emptyset$, for every $j_y$,
and consequently it decides at time $m$, at the latest.

We now define a subdivision, $\Div \sigma$, of a $k$-simplex 
$\sigma = \{ 0, \hdots, k \}$, 
and then define a map $\delta$ from the vertices $\Div \sigma$ to
states of  $i$, $i_x$ or $j_y$ at time~$m$.
The mapping $\delta$ will be defined in a way that
the decisions of the processes induce a Sperner colouring on $\Div \sigma$.
Finally, we argue that,
for every simplex $\tau \in \Div \sigma$, 
all its vertices are mapped to distinct compatible process states
in some execution.
Therefore, by Sperner's Lemma, there must be a $k$-dimensional simplex in $\Div \sigma$
in which $k+1$ distinct values are decided by distinct processes,
thus reaching a contradiction.

\begin{lemma}[Sperner's Lemma]
Let $\Div \sigma$ be a subdivision with a Sperners's colouring $\zeta$.
Then, $\zeta$ defines an odd number of fully coloured $(\dim \sigma)$-simplexes.
\end{lemma}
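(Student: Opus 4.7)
The plan is to proceed by induction on $n = \dim \sigma$. The base case $n = 0$ is immediate: $\sigma$ consists of a single vertex $v$, the Sperner condition forces $\zeta(v) = v$, and so there is exactly one fully coloured $0$-simplex. For the inductive step, fix a vertex $v_n$ of $\sigma$, let $\sigma' = \sigma \setminus \{v_n\}$ be the opposite $(n-1)$-face, and call an $(n-1)$-simplex $\eta$ of $\Div \sigma$ a \emph{door} if $\zeta$ maps the vertices of $\eta$ bijectively onto the vertices of $\sigma'$. The plan is to double-count the pairs $(\tau,\eta)$ in which $\tau$ is a top-dimensional simplex of $\Div \sigma$ and $\eta \subset \tau$ is a door.

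Counting from above, each $n$-simplex $\tau$ contributes $1$ if $\tau$ is fully coloured (its unique door is obtained by removing the vertex coloured $v_n$), contributes $2$ if the colour set of $\tau$ is exactly the vertex set of $\sigma'$ (in which case exactly two of $\tau$'s vertices share a colour, and deleting either yields a door), and contributes $0$ in every other case. Counting from below, a door $\eta$ lies in exactly $2$ $n$-simplexes when it is interior to $\Div \sigma$ and in exactly $1$ when it lies on the boundary. The key identification is that every boundary door actually lies in $\Div \sigma'$: if $\eta$ sits inside $\Div \sigma''$ for some facet $\sigma''$ of $\sigma$, then the Sperner condition forces each vertex of $\eta$ to be coloured by a vertex of $\sigma''$; since by assumption those colours are exactly the vertices of $\sigma'$, the two $(n-1)$-faces $\sigma'$ and $\sigma''$ contain the same $n$ vertices, hence $\sigma'' = \sigma'$. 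Therefore the number of boundary doors equals the number of fully coloured top simplexes of $\Div \sigma'$ under the restricted Sperner colouring, which is odd by the induction hypothesis.

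Reducing the two counts modulo~$2$ gives (number of fully coloured $n$-simplexes) $\equiv$ (number of boundary doors) $\equiv 1 \pmod 2$, completing the induction. I expect the main obstacle to be precisely the boundary identification in the middle paragraph: verifying that a door on the geometric boundary of $\Div \sigma$ cannot accidentally lie on a facet other than $\sigma'$. This is where the Sperner condition --- each vertex being coloured inside its own carrier --- is used essentially; the rest is a clean parity computation based on the contributions $1$ and $2$ of the two types of relevant $n$-simplexes.
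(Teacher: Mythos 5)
Your proof is correct. Note that the paper does not actually supply a proof of this lemma: it is stated as a known classical result from combinatorial topology and used as a black box in the topological proof of Lemma~\ref{two-face}. So there is no in-paper argument to compare against; what you have given is the standard double-counting (parity) proof of Sperner's Lemma, and it is carried out correctly.

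A few remarks on the details, all of which you handled properly. The contribution count from above is exactly right: an $n$-simplex with colour multiset missing some vertex of $\sigma'$ contributes $0$; one whose colour set is exactly the vertex set of $\sigma'$ must repeat precisely one colour and so has exactly two doors; a fully coloured one has exactly one. The counting from below correctly distinguishes interior $(n-1)$-faces (shared by two $n$-simplexes) from boundary ones (belonging to one). The step you rightly flag as the crux --- that a boundary door must lie in $\Div\sigma'$ and not on any other facet --- is exactly where the carrier condition of a Sperner colouring is used: if $\eta\subseteq\Div\sigma''$ then every vertex of $\eta$ is coloured by a vertex of its carrier, which is contained in $\sigma''$, and since $\eta$ is a door its colours exhaust $\sigma'$, forcing $\sigma'\subseteq\sigma''$ and hence $\sigma'=\sigma''$ by dimension. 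One point you might state explicitly for completeness is that the restriction of $\Div\sigma$ to the face $\sigma'$ is itself a subdivision of $\sigma'$ and the restricted colouring is still a Sperner colouring of it (carriers computed in $\sigma'$ agree with carriers computed in $\sigma$ for vertices of $\Div\sigma'$), which is what licenses the application of the induction hypothesis. Reducing the two counts mod~$2$ then gives the claimed parity.
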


We construct $\Div \sigma$ inductively by dimension.
The construction is a simple variant of the well-known barycentric subdivision
(see Figure \ref{fig-subdivision} (left)).

For dimension $0$, for every vertex $v \in \sigma$, we define $\Div v = v$;
hence $\Car v = v$.
For every $1$-face (edge) $\sigma'$ of $\sigma$,
if $k \notin \sigma'$ or $\sigma' = \{0, k\}$, then $\Div \sigma' = \sigma'$;
otherwise, for a new vertex $v = \sigma'$, 
$\Div \sigma' = v \ast \Div \bdry \sigma'$.
Note that $\Car v = \sigma'$.
For every $x$-face $\sigma'$ of $\sigma$, $2 \leq x \leq k$,
if $k \notin \sigma'$, then $\Div \sigma' = \sigma'$;
otherwise, for a new vertex $v = \sigma'$, 
$\Div \sigma' = v \ast \Div \bdry \sigma'$.
Again note that $\Car v = \sigma'$
(see Figure \ref{fig-subdivision} (center)).

\begin{figure}[ht]
    \begin{center}
        \includegraphics[scale=.55]{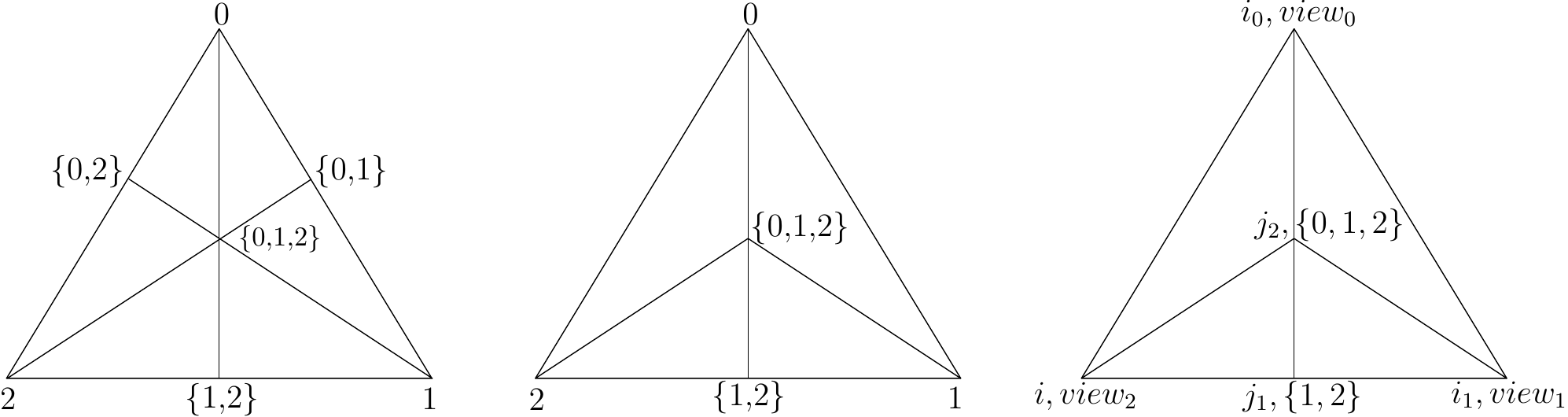}
        \caption{\footnotesize
        For dimension $k=2$ and $\sigma = \set{0,1,2}$,
        the barycentric subdivision $\sigma = \set{0,1,2}$ appears at the left,
		while the subdivision $\Div \sigma$ appears at the center.
        In the subdivision at the right, the vertices are mapped to process states.
        For example, the triangle $\set{\ang{i_0,view_0}, \ang{i_1, view_1}, \ang{j_2, \set{0,1,2}}}$ corresponds
        to the execution in which $i_0$ and $i_1$ do not crash in round $m$, and hence 
        $j_1$ receives $0$ and $1$, which are included in its view.
   		Similarly, the triangle $\set{\ang{i_1, view_1}, \ang{j_1, \set{1,2}}, \ang{j_2, \set{0,1,2}}}$
   		correspond to the execution in which $i_1$ does not crash in round $m$,
   		while $i_0$ crashes and sends a message to $j_2$ and no message to $j_1$.
        The decisions of the processes induces an Sperner colouring: 
        by assumption, $i$ decides $k=2$, and by induction hypothesis, 
        $i_0$ and $i_1$ decided $0$ and $1$ at time $m-1$;
        the rest of the processes have to decide at time $m$ and 
        they can only decide values in their views.
        }
		\label{fig-subdivision}
    \end{center}
\end{figure}

We now define the mapping $\delta$ and the Sperner colouring of $\Div \sigma$, 
which is induced by the decision function $\zeta$ of $P$.

For every vertex $v \in \sigma$, $\Div v = v$.
If $v \neq k$, then $\delta(v)$ is the state $\ang{i_v,m}$ in which 
$i_v$ sends and receives all its messages,
i.e. $i_v$ does not crash in round $m$;
otherwise, $\delta(v) = \ang{i,m}$ in $r'$.
Note that for all $v \in \sigma$, $\zeta(\delta(v)) = v$,
by induction hypothesis and because we assume $i$ decides on $k$. 

For every, $y$-face $\sigma'$ of $\sigma$, $1 \leq y \leq k$,
if there is a vertex $v \in \Div \sigma'$ with $\Car v = \sigma'$,
then $v = \sigma'$ and $k \in \sigma'$.
For such a vertex, we define $\delta(v)$ to be the state $\ang{j_y,m}$ in which 
(a) $j_y$ receives a message from $i_w$, for every $w \in \sigma'$ 
($i_w$ may crash after sending a message to $i_y$), and 
(b) $j_y$ does not receive any message from the $i_{x}$'s whose 
subindexes do not appear in $\sigma'$,
namely, they crash in round $m$
without sending a message to $j_y$.
Observe that $L\ang{i_y,m} = \sigma' \setminus \set{k}$
and $H\ang{i_y,m}$ contains $k$ and possible more high values distinct from $k$.
Since $P$ satisfies the validity requirement
of $k$-set consensus,
$\zeta(\delta(v))$ is any value in $V\ang{i_y,m} = L\ang{i_y,m} \cup H\ang{i_y,m}$. 
For now, we assume that if $\zeta(\delta(v)) \in H\ang{i_y,m}$,
then $\zeta(\delta(v)) = k$, in other words, 
if $i_y$ decides a high value, it decides on $k$;
hence $\zeta(\delta(v)) \in \Car v$.
Therefore, $\zeta$ defines a Sperner colouring for $\Div \sigma$.
Later we explain that this assumption does not affect our argument below.

Consider a $k$-simplex $\tau \in \Div \sigma$.
To show that $\delta$ maps the vertices of $\tau$
distinct process states, it is enough to see that
for every $v \in \Div \sigma$, 
if $\dim \Car v = 0$, then $\delta(v)$ is a state of $i$ or some $i_x$;
and if $1 \leq \dim \Car v \leq k$,
then $\delta(v)$ is a state of $j_y$, where $y = \dim \Car v$.
And to show that $\delta$ map $\tau$ to states of an execution,
note that if there is a $v \in \tau$ such that 
$\delta(v) = \ang{i,m}$, then 
the states in $\delta(\tau)$
correspond to an execution in which 
each $j_y$ receives a subset of the messages from
$i_0, \hdots, i_{k-1}$;
otherwise, the states in $\delta(\tau)$
correspond to an execution in which some 
$i_x$'s distinct from $i_0$ 
do not crash in round $m$ (see Figure \ref{fig-subdivision} (right)).
Observe that in the second case,
the state of $i$ at time $m$ in that execution is different from
the state of $i$ at time $m$ in $r'$, because in $r'$ $i$ only receives
a message from $i_0$.

By Sperner's Lemma, there is at least one fully coloured $k$-simplex in $\Div \sigma$,
and thus there is an execution of $P$ in which $k+1$ distinct values are decided at time $m$.
A contradiction.  

Finally, we assumed that if 
$i_y$ decides a high value, it decides on $k$.
Observe that if in $\Div \sigma$, we replace 
$k$ with the actual decision of $i_y$, then,
the number of distinct decision  at the vertices 
of a simplex of $\Div \sigma$ can only increase.
Thus, in any case, $\Div \sigma$ has a simplex with
$k+1$ distinct decisions. The lemma follows.
$\qed$

\subsection{A Discussion of Unbeatability and Connectivity}
\label{sec-discussion-connectivity}

The topological proof of Lemma \ref{two-face} is more than just a ``trick''
to prove the lemma. Actually, the proof shows what a topological
analysis of unbeatable protocols is about.

In the protocol $\OptMink$, every process decides a low value as soon 
as possible, namely, at the very first time it knows there is low value.
Therefore, if there is a $k$-set consensus protocol $P$ that dominates 
$\OptMink$,
then in $P$ every process $i$ that is low at any time~$m$ 
must decide by time~$m$ at the latest. 

The key in the unbeatability proof for $\OptMink$ is Lemma \ref{two-face},
saying, intuitively, that in every $k$-set consensus protocol $P$ with the property  
that at any time $m$ in which a process $i$ is low for the first time and
has hidden capacity at least $k-1$, then $i$ must decide on a low value.
The topological proof essentially shows that 
$i$ is forced to decide on a low value because the star complex, $\cK$,
of $\ang{i,m}$ in the protocol complex of $P$ at time $m$, $\cP_m$, is $(k-1)$-connected.
Intuitively, $\cK$ is the ``part'' of $\cP_m$ containing all executions
that are indistinguishable to $\ang{i,m}$.
That $\cK$ is $(k-1)$-connected is the reason the proof can map a subdivision of a $k$-simplexes
to process states; indeed, the subdivision is mapped to a subcomplex of $\cK$.
It is well-known that $(k-1)$-connectivity precludes the existence
of a decision function that maps process states to more than $k$ values and avoids a simplex
with $k+1$ distinct decisions at its vertices~\cite{HRT98}, namely, an execution 
with $k+1$ decided values.
Therefore, $i$ has no other choice than decide on a low value,
because if it does not do it, its decision induces an Sperner colouring,
which ultimately implies that the \defemph{k}-\Agreement\ property is violated.

The previous discussion can be formalized in a lemma saying that
if the hypothesis of Lemma~\ref{two-face} hold, then
the star complex $\Star(i,m,\cP_m)$ is $(k-1)$-connected.
Such a lemma can be proved using the techniques in~\cite{HRT98}.

It is worth noticing that in the previous analysis we only 
care about the connectivity of a proper subcomplex of the protocol complex, 
contrary to all known time complexity lower bound proofs~\cite{GHP,HRT98} 
for $k$-set consensus, which care about the 
connectivity of the whole protocol complex.
There is no contradiction with this because these lower bounds proofs
are about the time in which ``all" processes can decide,
which depends on the connectivity of the protocol complex in a given round.
While unbeatability is a notion of optimality concerned with the time at which a ``single"
process can decide, which depends just on a subcomplex 
(the star complex of a given process state) 
of the protocol complex in a given round.

This analysis sheds light on the open question in~\cite{GP09}
about how to extend previous topology techniques to deal with
optimality of protocols. 
In summary, while all-decide lower bounds have to do with
the whole protocol complex, optimal-single-decision lower bounds have to do
with just subcomplexes of the protocol complex. Our topological proof of unbeatability 
here is the first proof that we are aware of that makes this distinction. 

\section{Proofs of Section~\ref{sec-uniform} --- Uniform Consensus}
\label{sec-uni-cons-proofs}

We note that while the assumption $t\!<\!n$ simplifies presentation throughout the proofs below, the case $t\!=\!n$ can be analysed via similar tools.

\begin{proof}[Proof of Lemma~\ref{lem:correct-uni}]
Let~$P$ be a uniform consensus protocol, and let $r$ be a run of~$P$ such that
$(R_P,r,m) \not\sat K_i\cv$.
Thus, there exists a run $r'\in P[\alpha']$ such that $r_i(m)=r'_i(m)$ and
$(R_P,r',m) \not\sat \cv$.
Consider the adversary $\beta$ that agrees with~$\alpha'$ up to time~$m$, and
in which all active but faulty processes at $(r',m)$ crash at time~$m$ without
sending any messages. 
$\beta\in\gammacr$ because it has a legal input vector (identical to~$\alpha'$),
and at most~$\tee$ crash failures, as it has the same set of faulty processes
as~$\alpha'\in\gammacr$. It follows that $r''=P[\beta]$ is a run of~$P$.
Since $\beta$ agrees with $\alpha'$ on the first~$m$ rounds, we have that
$r''_i(m)=r'_i(m)$. Nonetheless, no correct process will ever know $\exists{v}$
in~$r''$, and thus by \Validity\ no correct process ever decides $v$ in~$r''$.
By decision, all correct processes thus decide not on $v$. By \UniAg, and as
$t\!<\!n$ (i.e.\ there are correct processes),
$i$ cannot decide on $v$ in~$r''$, and thus, as $r''_i(m)=r'_i(m)=r_i(m)$,
it cannot decide on $v$ in $r$ at $m$.
\end{proof}

Before moving on to prove \cref{lem:u-know}. We first introduce some notation.

\begin{definition}
For a node $\node{i,m}$, we denote by $\knownf{i,m} \in \{0,\ldots,t\}$ the number of failures known to $\node{i,m}$, i.e.\
the number of processes $j \ne i$ from which $i$ does not receive a message at time $m$.
\end{definition}

We note that \defemph{d}, as defined in \cref{lem:u-know}, is precisely $\knownf{i,m}$.

\begin{proof}[Proof of Lemma~\ref{lem:u-know}]
Sketch:
It is straightforward to see that conditions (a) and (b) imply $K_i\cv$ (Condition (a): as $\node{i,m-1}$ is seen at $m$ by all correct processes; condition (b): as the number of distinct
processes knowing $\exists0$, \emph{including $i$ itself}, is greater than the maximum number of active processes that can yet fail).
If neither condition holds, then $i$ considers it possible that only incorrect processes know $\exists{v}$, and that they all  immediately fail
($i$ at time $m$ before sending any messages, and the others ---
immediately after sending the last message seen by $i$),
in which case no correct process would ever know $\exists{v}$.
\end{proof}

As with~$\Pz$ in the case of consensus, by analysing decisions in protocols dominating $\UPz$, we show that no Uniform Consensus protocol can dominate
$\UOptZ$. \cref{decide-when-0-first-round,decide-when-0} give sufficient conditions for deciding~$0$ in any Uniform Consensus protocol
dominating $\UPz$. As mentioned above, the analysis is considerably subtler 
for Uniform Consensus, because the analogue of \cref{lem:decide-when-0} is not true. Receiving a message with value~0 in a protocol dominating~$\UPz$ does not imply that the sender has decided~0. 

\begin{lemma}[No decision at time $0$]\label{undecided-at-0}
Assume that  $t\!>\!0$.
Let $Q$ solve Uniform Consensus. No process decides at time $0$ in any run of $Q$.
\end{lemma}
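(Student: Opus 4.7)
My plan is to argue by contradiction: suppose some process $i$ decides on a value $v$ at time $0$ in a run $r = Q[\alpha]$ of $Q$. At time $0$, the local state $r_i(0)$ is determined solely by $i$'s initial value $v_i$, so by determinism of $Q$ the process $i$ decides $v$ at time $0$ in every run whose input vector assigns $v_i$ to $i$. By Validity (Lemma~\ref{lem:know-exists}), the only value $i$ can decide at time $0$ is $v_i$ itself, so $v = v_i$.

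Next, I would construct a tailored adversary $\beta \in \gammacr$ that isolates $i$'s decision. Set $v_i = v$, set $v_j = 1-v$ for every $j \ne i$ (this is possible since $\Vals = \{0,1\}$ and $n \ge 2$), and let $i$ crash in round $1$ before delivering any message. Since $t \ge 1$ this failure pattern lies in $\Crash(t)$, so $\beta$ is legal. In the run $r' = Q[\beta]$, the observation above forces $i$ to decide $v$ at time $0$. Meanwhile, every $j \ne i$ is correct in $r'$ and never receives any communication from $i$; hence in $r'$ no process other than $i$ ever learns that $\exists v$ holds. By Validity no such $j$ can decide on $v$, and by Decision each such $j$ must eventually decide; therefore every $j \ne i$ decides $1-v$. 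This yields two processes that decide different values in $r'$, contradicting Uniform Agreement.

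I don't anticipate a substantive obstacle: the only subtle point is that Uniform Agreement binds both correct and faulty processes (so $i$'s time-$0$ decision, even though $i$ immediately crashes, must still agree with the decisions of the correct processes), and this is exactly why the argument works in the uniform setting while it would not work for ordinary consensus. The assumption $t > 0$ is precisely what lets me crash $i$ right after it decides, and the assumption $n \ge 2$ is what guarantees a second process that decides differently.
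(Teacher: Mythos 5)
Your proof is correct. It takes a different route from the paper's: the paper dispatches the lemma in two lines by invoking \cref{lem:correct-uni} (which makes $K_i\cv$ a precondition for deciding~$v$) together with \cref{lem:u-know} (which characterizes when $K_i\cv$ holds) — at time~$0$ the process has $\knownf{i,0}=0<t$ and has seen no prior round, so neither clause of \cref{lem:u-know} can fire, hence $\lnot K_i\cv$. You instead unfold that abstraction into a concrete indistinguishability argument: construct an adversary in which $i$ holds the lone copy of~$v$ and crashes silently in round~$1$, observe that $i$'s time-$0$ local state (and hence its decision) is unchanged, and then use Validity plus Decision to force every other process to decide $1-v$, contradicting Uniform Agreement. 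This is morally the same scenario that underlies the proof of \cref{lem:correct-uni}, so the two proofs share the same core idea, but yours is self-contained and does not lean on the knowledge-characterization machinery, whereas the paper's is shorter because it reuses that infrastructure. One small point worth being explicit about: your conclusion requires at least one process besides $i$ to actually be correct and decide, which is guaranteed since the model fixes $n\ge 2$ and your failure pattern crashes only $i$ (so $f=1\le t$); you implicitly rely on this but it is fine to flag it, since the paper's version explicitly invokes $t<n$ at the analogous step.
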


\begin{proof}
As $t\!<\!n$, by \cref{lem:correct-uni} it is enough to show that $\lnot K_i \exists v$ for every process $i$ and $v \in \{0,1\}$.
As $0\!<\!t$, and as $\knownf{i,0}=0$ for all processes $i$ by definition, we have that by \cref{lem:u-know}, the proof is complete.
\end{proof}

\begin{lemma}[Decision at time $1$]\label{decide-when-0-first-round}
Let $Q\dom\UPz$ solve Uniform Consensus and let $r=r[\alpha]$ be a run of $Q$. Let $i$ be a process with initial value~$0$ in $r$ s.t. $i$ is  active 
at time $1$ in $r$.
If either of the following hold in $r$, then $\node{i,1}$ decides~$0$ in~$r$.
\begin{parts}
\item\label{decide-when-0-first-round-more-zeros}
$t>0$ ~and~ there exists a process $j \ne i$ with initial value $0$ s.t.\ $\node{j,0}$ is seen by $\node{i,1}$.
\item\label{decide-when-0-first-round-no-more-zeros}
$t>1$ ~and~ $\knownf{i,1} < t$.
\end{parts}
\end{lemma}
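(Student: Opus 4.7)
The proof splits into two steps: establishing that $i$ decides at time~$1$, and then that the decision is the value~$0$.

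For the first step, since $i$ has initial value~$0$ and is active at time~$1$, condition~(a) of \cref{lem:u-know} applied at $m=1$, $v=0$ holds: the required $K_i\exists 0$ at time~$0$ is automatic because $i$ starts with~$0$. Hence $K_i\cz$ holds in $\UPz[\alpha]$ at time~$1$, so $\UPz$ has $i$ decide $0$ at time~$1$. Since $Q\dom\UPz$, process~$i$ must decide by time~$1$ in $r$, and by \cref{undecided-at-0} it does not decide at time~$0$; hence $i$ decides at exactly time~$1$ in~$r$.

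For the second step I argue by contradiction: assume $i$ decides~$1$ at time~$1$ in~$r$. The strategy is to construct an adversary $\beta\in\gammacr$ with $Q[\beta]_i(1)=r_i(1)$, so that by indistinguishability $i$ also decides~$1$ at time~$1$ in $Q[\beta]$, but in which some process can be forced to decide~$0$, contradicting \UniAg. In case~(a), I take $\beta$ to agree with $\alpha$ on the initial values of all processes and on the round-$1$ messages received by~$i$ (preserving $r_i(1)$), and extend the failure pattern from time~$1$ onwards so that~$j$ is correct in~$\beta$. Since~$j$ starts with~$0$ and is active at time~$1$ in~$\beta$, the first step of the proof applied to~$j$ in~$\beta$ shows that~$j$ decides at time~$1$ in $Q[\beta]$. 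In case~(b), the hypothesis $\knownf{i,1}<t$ gives slack in the failure budget, allowing me to let~$i$ crash right after time~$1$ in~$\beta$ while keeping the total number of failures at most~$t$; together with $t>1$, this ensures that at least one fresh process~$k\ne i$ remains correct in~$\beta$, on which an analogous argument can be run.

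The main obstacle in both cases is ruling out that the second process ($j$ in~(a), or~$k$ in~(b)) also decides~$1$ at time~$1$. Unlike in the non-uniform setting of \cref{lem:decide-when-0}, a process that observes $\exists 0$ cannot be immediately concluded to have decided~$0$, so the argument cannot terminate in a single step. I plan to iterate the construction and produce a chain of adversaries $\beta=\beta^0,\beta^1,\ldots,\beta^N$, pairwise indistinguishable to suitable processes, along which the opportunity for a decision on~$1$ is progressively constrained, until at the terminal $\beta^N$ no process has initial value~$1$ and \Validity\ forces a decision on~$0$. Propagating indistinguishability back along the chain then yields the desired contradiction. This chain argument---essentially a concrete instantiation of the continual common knowledge construction alluded to in~\cite{HalMoWa2001}---is the subtle ingredient that makes the uniform case substantially more delicate than the non-uniform one, and it will require careful bookkeeping to keep every $\beta^\ell$ inside $\gammacr$.
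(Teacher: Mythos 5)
Your first step is correct and matches the paper exactly: $i$ starts with value $0$, so condition~(a) of \cref{lem:u-know} gives $K_i\cz$ at time $1$, $\UPz$ decides at time $1$, domination forces a decision in $Q$ by time $1$, and \cref{undecided-at-0} rules out time $0$. Your second step also identifies the right strategy and the right obstacle---the chain of indistinguishable adversaries is precisely what the paper does, and you correctly observe that, unlike in \cref{lem:decide-when-0}, one cannot conclude directly that a process knowing $\exists 0$ has decided $0$. So the high-level plan is sound.

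However, the heart of the lemma is the chain itself, and you leave it as a plan rather than a proof. The paper makes the chain precise as an induction on $n-|Z^0_i|$, where $Z^0_i$ is the set of $0$-valued processes seen by $\node{i,1}$, with two distinct inductive cases: (I) some node $\node{k,0}$ is hidden from $\node{i,1}$, in which case one may silently give $k$ value $0$ and expose it to $j$, increasing $|Z^0_j|$; and (II) every node $\node{k,0}$ is seen by $\node{i,1}$, in which case one hides a $1$-valued node from $j$ to reduce to case~(I). Your phrase ``progressively constrained'' gestures at this, but the induction measure and the case distinction are exactly the bookkeeping you defer, and they are not obvious a priori (note that in case~II the move is to \emph{hide} a node from $j$, temporarily increasing $j$'s uncertainty, which is the opposite of ``constraining''). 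For part~(b) the gap is more substantial: the paper does not run ``an analogous argument on $k$'' but rather uses $k$ as a \emph{fixed-view pivot} between two runs $r'$ and $r''$, where $j$'s initial value is switched from~$1$ to~$0$ and $j$ is made to crash immediately, sending only to~$i$, so that $k$ cannot distinguish $r'$ from $r''$ at \emph{any} time; part~(a) is then applied to $i$ in $r''$, and \Decision\ plus \UniAg\ propagate the constraint through $k$ back to $r'$ and finally to $r$. Without this specific reduction---switching $j$'s value while keeping $k$'s entire view unchanged---the step ``analogous argument'' does not obviously close, since in $r'$ there need not be any second $0$-valued process for $i$ to see.
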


\begin{proof}
For both parts, we first note that by \cref{lem:u-know} and by definition of $\UPz$, $i$ decides $0$ at $(\UPz[\alpha],1)$. As $Q\dom\UPz$, we thus have that $i$ must decide upon some
value in $r$ by time $1$.
By \cref{undecided-at-0}, $i$ does not decide at $(r,0)$. Thus, $i$ must decide at $(r,1)$.

We now show \cref{decide-when-0-first-round-more-zeros} by induction on $n\!-\!|Z^0_i|$,
where $Z^0_i$ is defined to be the set of processes $k$ with initial value $0$, s.t.\ $\node{k,0}$ is seen by $\node{i,1}$. Note that by definition, $i,j \in Z^0_i$, and
so $1 < |Z^0_i| \le n$.

Base: $|Z^0_i|=n$. In this case, all initial values are $0$, and so by \Validity~$i$ decides $0$ at $(r,1)$.

Step: Let $1<\ell<n$ and assume that \cref{decide-when-0-first-round-more-zeros} holds whenever $|Z^0_i|=\ell+1$. Assume that $|Z^0_i|=\ell$. We reason by cases.

\begin{enumerate}[label=\Roman*.]
\item
If there exists a process $k$ s.t.\ $\node{k,0}$ is hidden from $\node{i,1}$, then there exists a run $r'$ of $Q$, s.t.~~\emph{i)} $r'_i(1)\!=\!r_i(1)$,\ \ \emph{ii)}~$j$ is active 
at $(r',1)$,~~\emph{iii)} $k$ has initial value $0$ in $r'$, and~~~\emph{iv)} $Z^0_j = Z^0_i\!\cup\!\{k\}$ in $r'$. (Note that by definition, $Z^0_i$ has the same value in both $r$ and $r'$.)
By the induction hypothesis (switching the roles of $i$ and $j$), $j$ decides $0$ at $(r',1)$, and therefore by \UniAg, $i$ cannot decide $1$ at $(r',1)$, and
hence it does not decide $1$ at $(r,1)$. Thus, $i$ decides~$0$ at~$(r,1)$.
\item
Otherwise, $\node{k,0}$ is seen by $\node{i,1}$ for all processes $k$. As $|Z^0_i|<n$, there exists a process $k \notin Z^0_i$ (in particular, $k \notin \{i,j\}$). Hence,
as $t\!>\!0$, there exists a run $r'$ of $Q$, s.t.\ \ \emph{i)} $r'_i(1)\!=\!r_i(1)$,\ \ \emph{ii)}~$j$ is active
at $(r',1)$,\ \ \emph{iii)} $\node{k,0}$ is hidden from $\node{j,1}$ in $r'$, and~~\emph{iv)} $Z^0_j=Z^0_i$ in $r'$. (Once again, $Z^0_i$ has the same value in both $r$ and $r'$.)
By Case I (switching the roles of $i$ and $j$), $j$ decides $0$ at~$(r',1)$, and therefore by \UniAg, $i$ cannot decide $1$
at $(r',1)$, and hence it does not decide $1$ at $(r,1)$. Thus, $i$ decides~$0$ at~$(r,1)$.
\end{enumerate}

We move on to prove \cref{decide-when-0-first-round-no-more-zeros}.
If $\node{k,0}$ is hidden from $\node{i,1}$ for all processes $k \ne i$, then $\lnot K_i\exists1$ at $(r,1)$.
Thus, by \cref{lem:correct-uni}, $i$ cannot decide $1$ at $(r,1)$, and so must decide $0$ at $(r,1)$.
Otherwise, there exists a process $k\ne i$ s.t.\ $\node{k,0}$ is seen by $\node{i,1}$. As $n\!>\!t\!>\!1$, we have $n\!>\!2$ and so
there exists a process
$j \notin \{i,k\}$; if $\knownf{i,1}>0$, then we pick $j$ s.t.\ $\node{j,0}$ is hidden from $\node{i,1}$. Since
$t>1$ (for the case in which  $\knownf{i,1}=0$ and $\node{j,0}$ is seen by $\node{i,1}$) and since $t>\knownf{i,1}$ (for the case in which $\node{j,0}$ is
hidden from $\node{i,1}$), there exists a run $r'$ of $Q$,s.t.\ \ \emph{i)}~$r'_i(1)\!=\!r_i(1)$,\ \ \emph{ii)}~$k$ never fails in $r'$,\ \ \emph{iii)}~$j$ fails at $(r',0)$ before sending any messages except perhaps to $i$, and\ \ \ \emph{iv)}~$i$ fails at $(r',1)$, immediately after deciding but before sending any messages.
Thus, there exists a run $r''$ of $Q$, s.t.\ \ \emph{i)}~$r''_k(m')\!=\!r'_k(m')$ 
\underline{for all} $m'$, 
\ \ \emph{ii)} $k$ never fails in $r''$,\ \ \emph{iii)}~$i$ and~$j$ both have initial value $0$ in $r''$,\ \ \emph{iv)} $j$ fails at $(r'',0)$ while successfully sending a message only to $i$ (and therefore $j \in Z^0_i$ in $r''$), and\ \ \emph{v)} $i$ fails at $(r'',1)$, immediately after deciding but before sending out any messages.
By \cref{decide-when-0-first-round-more-zeros}, $i$ decides $0$ at $(r'',1)$, and therefore $k$ can never decide $1$ during $r''$, and therefore
neither during~$r'$. As $k$ never fails during $r'$, by \Decision\ it must thus decide~$0$ at some point during $r'$. Therefore, by \UniAg, $i$ cannot decide $1$ at $(r',1)$, and thus it does not decide $1$ at $(r,1)$. Thus, $i$ decides $0$ at $(r,1)$.
\end{proof}

\begin{lemma}[Decision at times later than $1$]\label{decide-when-0}
Let $Q\dom\UPz$ solve Uniform Consensus, let $r\!=\!Q[\alpha]$ be a run of $Q$ and let $m\!>\!0$.
Let $i$ be a process s.t.\ $K_i\exists0$ holds at time $m$ for the first time in~$r$, s.t.\ $K_i\cz$ holds at time $m+1$ for the first time in~$r$, and s.t. $i$ is  active 
at $(r,m+1)$.
If either of the following hold in $r$, then $i$ decides~$0$ at~$(r,m+1)$.
\begin{parts}
\item\label{decide-when-0-hidden-z}
All of the following hold.
\begin{itemize}
\item
$\knownf{i,m+1}<t$.
\item
There exists a process $z$ s.t.\ $K_z\exists0$ holds at time $m\!-\!1$, s.t.\ $\node{z,m\!-\!1}$ is seen by $\node{i,m}$, but s.t.\ $\node{z,m}$ is not seen by $\node{i,m\!+\!1}$, 
\item
There exists a process $j \ne i$
s.t.\ $\node{j,m}$ is seen by $\node{i,m\!+\!1}$ and $\node{z,m\!-\!1}$ is seen by $\node{j,m}$.
\end{itemize}
\item\label{decide-when-0-low-knownf}
$\knownf{i,m+1}<t-1$.
\end{parts}
\end{lemma}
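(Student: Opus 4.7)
The first step is to identify when and why $i$ must decide in $r$. Because $K_i\cz$ first holds at $(r,m+1)$, in the auxiliary run $\UPz[\alpha]$ process $i$ decides~$0$ at time $m+1$; since $Q\dom\UPz$, it follows that $i$ must decide in $r$ no later than $m+1$. \cref{undecided-at-0} rules out a decision at time~$0$. Moreover, by \cref{lem:correct-uni} together with \cref{thm:knowprec}, $K_i\cz$ is a precondition for deciding~$0$, so $i$ cannot decide~$0$ strictly before $m+1$ in~$r$. Consequently it suffices to rule out $i$ deciding~$1$ at any time $1\le m'\le m+1$ in~$r$; granting this, the only remaining possibility is that $i$ decides~$0$ at~$m+1$.

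For Part~\ref{decide-when-0-hidden-z} I would argue by contradiction: suppose $i$ decides~$1$ at some $(r,m')$ with $1\le m'\le m+1$, and construct an indistinguishable run $r'=Q[\alpha']$ in which some process that is correct in $r'$ decides~$0$ by~$m'$, violating \UniAg. The construction exploits the witness $z$ supplied by the hypothesis. Let $\alpha'$ agree with $\alpha$ on everything visible to $i$ at time $m+1$, so that $r'_i(m+1)=r_i(m+1)$ and $i$ still decides~$1$ at $m'$ in~$r'$. In $\alpha'$, make $z$ crash at round~$m$ (consistent with $\node{z,m}$ being absent from $\node{i,m+1}$'s view), preserve the chain $\node{z,m-1}\to\node{j,m}$, and arrange that $j$ is nonfaulty in $r'$ and disseminates its knowledge of~$\exists0$ widely enough during round $m+1$ so that $K_j\cz$ holds at time~$m+1$ in~$r'$. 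The bound $\knownf{i,m+1}<t$ supplies the failure budget needed for this modification. It then remains to argue that $j$ indeed decides~$0$ by~$(r',m+1)$.

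For Part~\ref{decide-when-0-low-knownf} the plan is to reduce to Part~\ref{decide-when-0-hidden-z} by engineering a witness. The extra slack $\knownf{i,m+1}<t-1$ leaves room for one additional failure beyond those forced by $\alpha$. Since $K_i\exists 0$ first holds at $(r,m)$, there is in $r$ a round-$m$ message chain from some initial-value-$0$ node $\node{w,0}$ to $\node{i,m}$; pick a penultimate node $\node{z,m-1}$ on this chain. Construct $\alpha'$ with $r'_i(m+1)=r_i(m+1)$ in which $z$ is faulty: $z$ relays its knowledge of $\exists 0$ to a designated $j$ in round~$m$ (so that $\node{z,m-1}$ is seen by $\node{j,m}$) and $j$ reaches $\node{i,m+1}$, but $z$'s round-$(m+1)$ messages to $i$ (direct or via intermediaries) are suppressed, so that $\node{z,m}$ is invisible to $\node{i,m+1}$. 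This reproduces in $r'$ the configuration required by Part~\ref{decide-when-0-hidden-z}, whose application then forces $i$ to decide~$0$ in~$r'$, contradicting the assumed decision of~$1$.

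The main obstacle I anticipate lies inside Part~\ref{decide-when-0-hidden-z}: showing that, in the constructed run $r'$, process~$j$ actually decides~$0$ by time~$m+1$. As emphasised in \cref{sec-uniform}, in uniform consensus the mere fact that $j$ learns $\exists 0$ does \emph{not} force $j$ to decide~$0$, so the clean inductive argument of \cref{lem:decide-when-0} no longer applies. The natural route is an induction on $m$: the base case $m=1$ reduces to \cref{decide-when-0-first-round} with the roles of $i$ and $j$ swapped, while the inductive step applies the present lemma to $j$ one time layer earlier. Carrying this induction through requires simultaneously maintaining the indistinguishability of $r'$ from~$r$ at $\node{i,m'}$, the failure-budget constraints, and the precise ``first time'' conditions $K_j\exists 0$ at $m-1$ and $K_j\cz$ at $m$ that are needed to invoke the inductive hypothesis for~$j$. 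The reachability-style bookkeeping this demands---in effect, reconstructing the continual common knowledge invariants of~\cite{HalMoWa2001}---is the technical heart of the argument.
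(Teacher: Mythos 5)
Your high-level opening is correct and matches the paper: $i$ must decide by $m+1$ since $Q\dom\UPz$, cannot decide~$0$ earlier by \cref{lem:correct-uni}, so the task reduces to ruling out a decision on~$1$ up to time $m+1$. But the core mechanism you propose does not close. First, the reduction of Part~\ref{decide-when-0-low-knownf} to Part~\ref{decide-when-0-hidden-z} by ``making $z$ faulty and suppressing its round-$(m+1)$ messages'' is not available when $\node{z,m}$ is already seen by $\node{i,m+1}$ in $r$: this is part of $i$'s local state at $m+1$, so any $r'$ with $r'_i(m+1)=r_i(m+1)$ must have $\node{z,m}$ seen as well, and the Part~\ref{decide-when-0-hidden-z} configuration cannot be reproduced. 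Second, the step you flag as the main obstacle --- ``apply the present lemma to $j$ one time layer earlier'' --- would require $K_j\exists 0$ to hold for the first time at $m-1$ and $K_j\cz$ to hold for the first time at $m$. In your constructed $r'$, $j$ learns $\exists 0$ from $z$'s round-$m$ message, so generically $K_j\exists 0$ first holds at $m$, not $m-1$; applying the lemma to $j$ at the same time level is circular, not an induction step.

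The paper resolves this with a second, inner induction that you are missing: it proves Parts~\ref{decide-when-0-hidden-z} and~\ref{decide-when-0-low-knownf} simultaneously, and for fixed $m$ runs an induction on $|C_i\setminus Z^{z,m}_i|$, where $Z^{z,m}_i$ is the set of processes $k$ with $\node{k,m}$ seen by $\node{i,m+1}$ and $\node{z,m-1}$ seen by $\node{k,m}$, and $C_i$ is the set of processes not known by $\node{i,m+1}$ to have crashed by time~$m$. This inner induction moves through a chain of runs indistinguishable to successive processes $j\in Z^{z,m}_i$, each time invoking \UniAg\ against the next process, until the base case $Z^{z,m}_i=C_i$. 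There $z$ itself is in $C_i$ (so $\node{z,m}$ is seen and Part~\ref{decide-when-0-hidden-z}'s second bullet fails, forcing Part~\ref{decide-when-0-low-knownf}'s condition to hold), and the outer $m$-induction hypothesis is applied to $z$ at time $m-1$ --- not to $j$ --- which works precisely because $z$ does know $\exists 0$ at $m-1$, and the ``first time'' preconditions for $z$ can be verified from the fact that $K_i\exists 0$ first holds at $m$ and $K_i\cz$ does not hold at $m$. Finally, the paper's handling of Part~\ref{decide-when-0-hidden-z} is not a reduction to Part~\ref{decide-when-0-low-knownf} at the outset; it is an observation that under Part~\ref{decide-when-0-hidden-z}'s hypotheses the inner-induction path loses more failures than it gains (since $\node{z,m}$ is hidden, Case~I steps outnumber Case~II steps), so the base case is reached with $\knownf{\cdot}<t-1$. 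Without this bookkeeping over $Z^{z,m}_i$ and $C_i$, the argument for ``$j$ decides~$0$'' that you identify as the crux does not go through.
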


\begin{proof}
We prove the lemma by induction on $m$, with the base and the step sharing the same proof (as will be seen below, the conceptual part of an induction base will be played, in a sense, by \cref{decide-when-0-first-round}).

We prove both parts together, highlighting local differences in reasoning for the different parts as needed. For \cref{decide-when-0-low-knownf}, we denote by $z$ an arbitrary process s.t.\ $K_z\exists0$ holds at time $m-1$ and s.t. $\node{z,m\!-\!1}$ is seen by $\node{i,m}$. (As $m>0$, such a process must exist for $i$ to know $\exists0$ at time $m$ for the first time; nonetheless, unlike when proving \cref{decide-when-0-hidden-z},  it is not guaranteed when proving this part that $\node{z,m}$ is not seen by $\node{i,m\!+\!1}$.)

We first note that by \cref{lem:correct-uni} and by definition of $\UPz$, $i$ decides $0$ at $(\UPz[\alpha],m\!+\!1)$. As $Q\dom\UPz$, we thus have that $i$ must decide upon some value in $r$ by time $m\!+\!1$. By \cref{lem:correct-uni}, the precondition for deciding $0$ is not met by $i$ at $(r,m)$. Therefore, it is enough to show
that $i$ does not decide $1$ before or at time $m\!+\!1$ in $r$ in order to show that $i$ decides $0$ at $(r,m\!+\!1)$.

Let $Z^{z,m}_i$ be the set of processes $k$ s.t.\ $\node{k,m}$ is seen by $\node{i,m\!+\!1}$ in $r$ and s.t.\ $\node{z,m-1}$ is seen by $\node{k,m}$ in $r$. (By definition,
$i \in Z^{z,m}_i$.)
Let $C_i$ be the set of all processes $k$ s.t.\ $\node{k,m}$ is either seen by, or hidden from $\node{i,m\!+\!1}$ (i.e.\ the set of nodes that $\node{i,m\!+\!1}$ does not know to be inactive
at time $m$). Note that by definition, $Z^{z,m}_i\subseteq C_i$.
We first consider the case in which $Z^{z,m}_i\supsetneq\{i\}$, and prove the $m$-induction step (for the given $m$) for this case by induction on $|C_i \setminus Z^{z,m}_i|$.

Base: $Z^{z,m}_i=C_i$. In this case, $\node{i,m\!+\!1}$ does not know that $z$ fails 
at time $m\!-\!1$ .
Thus, 
$z \in C_i$ and therefore $z \in Z^{z,m}_i$. 
It follows that $\node{z,m}$ is seen by
$\node{i,m\!+\!1}$ and therefore
the second condition of \cref{decide-when-0-hidden-z} does not hold. Thus, the condition of 
\cref{decide-when-0-low-knownf} holds: 
$\knownf{i,m\!+\!1}  < t\!-\!1$. Furthermore, we thus have that $z$ is active 
at time $m$. We now argue that $z$ decides $0$
at $(r,m)$, which completes the proof of the base case,
as by \UniAg~$i$ can never decide $1$ during $r$. We reason by cases; for both cases, note that since $\node{z,m}$ is seen by $\node{i,m\!+\!1}$, 
we have that 
$\knownf{z,m} \le \knownf{i,m\!+\!1}  < t\!-\!1$.

\begin{itemize}
\item
If $m=1$: As $K_z\exists0$ at time $m\!-\!1=0$, $z$ has initial value $0$.
As $\knownf{z,m}<t\!-\!1$, we have that $t>1$.
By \cref{decide-when-0-first-round-no-more-zeros} of \cref{decide-when-0-first-round} (for $i=z$), we thus have that $z$ decides $0$ at $(r,1)=(r,m)$.
\item
Otherwise, $m\!>\!1$. In this case, as $\node{z,m\!-\!2}$ is seen by $\node{i,m\!-\!1}$, and as $K_i\exists0$ holds at time $m$ for the first time, we have that $K_z\exists0$ holds at time
$m\!-\!1$ for the first time. Similarly, as $\node{z,m\!-\!1}$ is seen by $\node{i,m}$, and as $K_i\cz$ does not hold at time $m$, we have that $K_z\cz$ does not hold at time $m\!-\!1$.
By \cref{decide-when-0-low-knownf} of the $m$-induction hypothesis (for $i=z$), $z$ decides $0$ at $(r,m)$.
\end{itemize}

Step: Let $\{i\}\subsetneq Z^{z,m}_i \subsetneq C_i$, and assume that the claim holds whenever $Z^{z,m}_i$ is of larger size.
For \cref{decide-when-0-hidden-z}, note that $j \in Z^{z,m}_i$, for $j$ as defined in the conditions for that part; for \cref{decide-when-0-low-knownf}, let $j \in Z^{z,m}_i$ be arbitrary.
Analogously to the proof of the induction step in the proof of \cref{decide-when-0-first-round-more-zeros} of \cref{decide-when-0-first-round}, we reason by cases. For the time being, assume that the conditions
of \cref{decide-when-0-low-knownf} hold, i.e.\ that $\knownf{i,m\!+\!1}<t\!-\!1$.

\begin{enumerate}[label=\Roman*.]
\item
If there exists a process $k \in C_i$ s.t.\ $\node{k,m}$ is hidden from $\node{i,m\!+\!1}$, then there exists a run $r'$ of $Q$, s.t.\ \ \emph{i)} $r'_i(m\!+\!1)=r_i(m\!+\!1)$,\ \ \emph{ii)}~$j$ is active 
at $(r',m\!+\!1)$,\ \ \emph{iii)}~$\node{z,m-1}$ is seen by $\node{k,m}$ in $r'$, and\ \ \emph{iv)}~$Z^{z,m}_j = Z^{z,m}_i\!\cup\!\{k\}$ and $C_j=C_i$ in $r'$. (Note that by definition, $Z^{z,m}_i$ and $C_i$ have the same values in both~$r$ and~$r'$.)
We note that $\knownf{j,m\!+\!1}=\knownf{i,m\!+\!1}-1$ in~$r'$, and that by definition $\knownf{i,m\!+\!1}$ is the same in both $r$ and $r'$.
By the 
inductive hypothesis for $Z^{z,m}_j$ (i.e., for $j$ w.r.t.~$z$ at time~$m$), 
$j$ decides $0$ at $(r',m\!+\!1)$, and therefore by \UniAg, $i$ cannot decide $1$ in $r'$,
and therefore it cannot decide $1$ before or at $m\!+\!1$ in $r'$, and the proof is complete.

\item
Otherwise, for each process $k \in C_i$, $\node{k,m}$ is seen by $\node{i,m\!+\!1}$. As $Z^{z,m}_i\subsetneq C_i$,
there exists a process~$k \ne i$ s.t.\ $\node{k,m}$ is seen by $\node{i,m\!+\!1}$ but s.t.\ $\node{z,m\!-\!1}$ is hidden from $\node{k,m}$ (thus $k \ne j$). Hence, and since $\knownf{i,m\!+\!1}<t$, there exists a run $r'$ of $Q$, s.t.\ \ \emph{i)} $r'_i(m\!+\!1)=r_i(m\!+\!1)$,\ \ \emph{ii)}~$j$ is active 
at $(r',m\!+\!1)$,\ \ \emph{iii)}~$\node{k,m}$ is hidden from $\node{j,m\!+\!1}$ in~$r'$, and\ \ \emph{iv)}~$Z^{z,m}_j=Z^{z,m}_i$ and $C_j\supseteq C_i$ in $r'$. (Once again, $Z^{z,m}_i$ and $C_i$ have the same values
in both $r$ and $r'$.) We note that $\knownf{j,m+1}=\knownf{i,m+1}+1$ in $r'$, and that once more, by definition, $\knownf{i,m+1}$ is the same in both $r$ and $r'$.
By Case I (for~$i=j$), and since Case I uses the 
inductive hypothesis for $Z^{z,m}_j$ with one less failure, we conclude that 
$j$ decides~$0$ at $(r',m\!+\!1)$.
Therefor, 
by \UniAg, $i$ cannot decide $1$ at $(r',m\!+\!1)$, and thus it cannot decide~$1$ before or at $m+1$ in $r$, and the proof is complete.
\end{enumerate}

To show that the $Z^{z,m}_i$-induction step also holds under the conditions of \cref{decide-when-0-hidden-z}, we observe that since $\node{z,m}$  is not seen by $\node{i,m\!+\!1}$ in this case, the amount of invocations of Case II
(which uses Case I with one additional known failure) before reaching the $Z^{z,m}_i$-induction base is strictly smaller than that of Case I (which uses the  $Z^{z,m}_i$-induction hypothesis with
one less known failure), and therefore the $Z^{z,m}_i$-induction base is reached with less known failures, i.e.\ with less than $t-1$ known failures, i.e.\ the conditions of \cref{decide-when-0-low-knownf} hold at that point.

Finally, we consider the case in which $Z^{z,m}_i=\{i\}$. As any $j$ as in \cref{decide-when-0-hidden-z} satisfies $j \in Z^{z,m}_i$, we have that the conditions
of \cref{decide-when-0-low-knownf} hold, i.e.\ $\knownf{i,m\!+\!1}<t\!-\!1$. Furthermore, in we have that $\node{z,m}$ is not seen by $\node{i,m\!+\!1}$ (otherwise, $z \in Z^{z,m}_i$). As $\knownf{i,m\!+\!1}<t\!-\!1<n\!-\!2$, there exist two distinct processes $j,k \ne i$ that are not known to $\node{i,m\!+\!1}$ to fail (and thus $i,j,k,z$ are distinct). Thus, $\node{j,m}$ and
$\node{k,m}$ are seen by $\node{i,m\!+\!1}$.

By definition of $j,k$, there exists a run $r'$ of $Q$, s.t.\ \ \emph{i)} $r'_i(m\!+\!1)=r_i(m\!+\!1)$,\ \ \emph{ii)}~$k$ never fails in $r'$,\ \ \emph{iii)}~$j$ fails at $(r',m)$ before sending any messages,\ \ \emph{iv)} $i$ fails at $(r',m+1)$, immediately after deciding but before sending any messages, and\ \ \emph{v)} the faulty processes in~$r'$ are those known by $\node{i,m}$ to fail in $r$, and in addition $i$ and $j$. We note that by definition, $\knownf{i,m\!+\!1}$ is the same in $r$ and $r'$, even though the
number of failures in $r'$ is $\knownf{i,m\!+\!1}+2$.
We notice that there exists a run $r''$ of $Q$, s.t.\ \ \emph{i)}~$r''_k(m')=r''_k(m')$ \underline{for all} $m'$,\ \ \emph{ii)} $k$ never fails in $r''$,\ \ \emph{iii)} $\node{z,m-1}$ is seen by both $\node{i,m}$ and $\node{j,m}$ in $r''$,\ \ \emph{iv)} $j$ fails at $(r'',m)$ while successfully sending a message only to $i$ (and therefore both $j \in Z^{z,m}_i$ and $\knownf{i,m+1}<t-1$ in $r''$), and\ \ \emph{v)} $i$ fails at $(r'',m+1)$, immediately after deciding but before sending out any messages.
By the proof for the case in which $Z^{z,m}_i\supsetneq\{i\}$ ($j\in Z^{z,m}_i$), $i$ decides~$0$ at $(r'',m\!+\!1)$, and therefore $k$ can never decide $0$ during $r''$, and therefore neither during~$r'$. As $k$ never fails during~$r'$, by \Decision\ it must thus decide $0$ at some point during~$r'$. Therefore, by \UniAg, $i$ cannot decide~$1$ before or at $m\!+\!1$ in $r'$, and thus it does not decide~$1$ before or at $m+1$ in $r$, and the proof is complete.
\end{proof}

Now that we have established when processes must decide $0$ in any protocol dominating $\Pz$, we can deduce when processes cannot decide in any such protocol.

\begin{lemma}[No Earlier Decisions when $K_i\exists0$]\label{no-earlier-k0}
Let $Q\dom\UPz$ solve Uniform Consensus, let $r$ be a run of $Q$, let $m$ be a time, and let $i$ be a process.
If at time $m$ in $r$ we have $K_i\exists0$, but $\lnot K_i\cz$, then $i$ does not decide at $(r,m)$.
\end{lemma}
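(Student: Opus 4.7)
The plan is to argue by contradiction. Suppose $i$ decides at $(r,m)$; by \cref{lem:correct-uni}, $K_i\cz$ is a precondition for deciding $0$, so the hypothesis $\lnot K_i\cz$ forces $i$ to decide $1$. To derive a contradiction, I will exhibit a run $r'=Q[\beta]$ with $r'_i(m)=r_i(m)$ in which some non-faulty process decides $0$; by \UniAg\ this is incompatible with $i$ deciding $1$ at $(r',m)=(r,m)$.

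The construction of $\beta$ exploits $K_i\exists 0$ at $(r,m)$. Let $\alpha$ be the adversary with $r=Q[\alpha]$, and let $\beta$ agree with $\alpha$ on failure pattern and inputs through round $m$, so that $r'_i(m)=r_i(m)$ automatically. From round $m+1$ onwards, $\beta$ has $i$ crash in round $m+1$ after successfully transmitting its state to two chosen processes $j,k\ne i$ that are active at $m$ and satisfy $\lnot K_j\exists 0$ at $(r,m)$, and incurs no further failures. In $r'$: $K_j\exists 0$ first holds at $m+1$ because $j$ receives $i$'s round-$(m+1)$ message carrying the $\exists 0$ chain; $K_j\cz$ first holds at $m+2$, by \cref{lem:u-know}(a) using $j$'s own knowledge of $\exists 0$ at $m+1$ and activity at $m+2$; and the witness $z:=i$ satisfies \cref{decide-when-0-hidden-z} of \cref{decide-when-0}, since $\node{i,m}$ is seen by $\node{j,m+1}$ whereas $\node{i,m+1}$ is not seen by $\node{j,m+2}$ (because $i$ crashed), with $k$ serving as the required auxiliary process. \cref{decide-when-0} thus forces $j$ to decide $0$ at $(r',m+2)$; since $j$ is non-faulty in $r'$, this yields the sought disagreement with $i$'s decision of $1$.

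The main obstacles are (i)~respecting the failure budget and (ii)~verifying the ``first time'' aspects of the knowledge conditions. For (i), we need $f_m<t$ where $f_m$ counts failures in $\alpha$ through round $m$: indeed, if $f_m=t$, then no further failures may occur, so the set of correct processes is fixed and known to $i$ at $m$; together with $K_i\exists 0$ and $i$ being active at $m$, \cref{lem:u-know}(b) then yields $K_i\cz$, contradicting the hypothesis. For (ii), the first-time condition $\lnot K_j\cz$ at $(r',m+1)$ is verified by arranging $\beta$ so that at most the one process $i$ is known to $j$ at $m+1$ to have held $K_\ell\exists 0$ at $m$, exploiting the freedom in $\beta$ outside $i$'s round-$m$ view to either crash in round $m+1$ or reroute away from $j$ any process (not already visible to $i$ at $m$) that knew $\exists 0$ at $m$; the hypothesis $\lnot K_i\cz$ together with \cref{lem:u-know}(b) caps the number of such ``offenders'' in $i$'s view so that the \cref{lem:u-know}(b) threshold for $j$ at $m+1$ is not met. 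The base case $m=0$ reduces analogously to \cref{decide-when-0-first-round-no-more-zeros} of \cref{decide-when-0-first-round}, noting that $K_i\exists 0$ at time $0$ forces $i$'s initial value to be $0$. These reachability-style constructions play the role of the continual-common-knowledge arguments referenced elsewhere in the paper.
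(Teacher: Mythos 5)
Your proof takes a genuinely different route from the paper's. You crash $i$ in round $m+1$ and try to force a surviving neighbour $j$ (via \cref{decide-when-0-hidden-z} of \cref{decide-when-0}) to decide $0$, contradicting \UniAg. The paper instead keeps $i$ alive: it passes to a run $r'$ with $r'_i(m)=r_i(m)$ whose only faulty processes are those $\node{i,m}$ already knows about, observes $\knownf{i,m+1}=\knownf{i,m}<t-1$ there, and applies \cref{decide-when-0-low-knownf} of \cref{decide-when-0} directly to $i$, concluding $i$ decides $0$ at $(r',m+1)$ and hence is undecided at $(r',m)=(r,m)$. Your detour introduces two concrete gaps.

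First, the failure-budget step (i) is incorrect. You argue that $f_m=t$ would make the set of correct processes ``fixed and known to $i$ at $m$'' and hence give $K_i\cz$. But $i$ does not observe $f_m$; it observes only $\knownf{i,m}\le f_m$, and \cref{lem:u-know} makes $K_i\cz$ depend on $\knownf{i,m}$ and on which processes $i$ \emph{knows} knew $\exists0$ at $m-1$ --- not on the true failure count. A run with $f_m=t$, $\knownf{i,m}<t$, $K_i\exists0$, and $\lnot K_i\cz$ is entirely possible. The usable bound is $\knownf{i,m}<t-1$ (derivable from $\lnot K_i\cz$ via \cref{lem:u-know} and the existence of a $z$ known to $i$ to have held $K_z\exists0$ at $m-1$), but to exploit it you must build $\beta$ to contain \emph{only} the failures $i$ knows about, which contradicts your opening requirement that $\beta$ agree with $\alpha$ on the failure pattern through round $m$.

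Second, the existence of processes $j,k\ne i$ active at $m$ with $\lnot K_j\exists0$ at $(r,m)$ is unjustified: in $r$ every active process may already know $\exists0$, while $\lnot K_i\cz$ still holds because $i$ cannot see this. You gesture at crashing or rerouting ``offenders'' outside $i$'s view, but this again breaks the agreement of $\beta$ with $\alpha$ through round $m$, it consumes an uncontrolled amount of failure budget (and $\lnot K_i\cz$ bounds only the processes $i$ knows held $K\exists0$ at $m-1$, not those that hold $\exists0$ at $m$), and after such modifications the delicate ``first-time'' hypotheses of \cref{decide-when-0} applied to $j$ would have to be re-verified from scratch. The paper avoids all of this by never crashing $i$, so that \cref{decide-when-0} can be applied to $i$ itself.
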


\begin{proof}
If $m\!=\!0$, then by \cref{lem:u-know} and since $\lnot K_i\cz$ at $m\!=\!0$ (even though $K_i\exists0$),
we have $t\!>\!0$. Thus, by \cref{undecided-at-0}, $i$ does not decide at $(r,m)$. Assume henceforth, therefore, that $m\!>\!0$.

As $\lnot K_i\cz$, we have that by \cref{lem:u-know}, $\lnot K_i\exists0$ at time $m\!-\!1$. Thus, there exists a process $z$ s.t.\
$K_z\exists0$ at~$m\!-\!1$, and $\node{z,m\!-\!1}$ is seen by $\node{i,m}$. In turn, by \cref{lem:u-know}, we have that $\knownf{i,m}<t-1$.
There exists a run~$r'$ of $Q$, s.t.\ \ \emph{i)} $r'_i(m)\!=\!r_i(m)$, and\ \ \emph{ii)} the faulty processes in $r'$ are those known by~$\node{i,m}$ to fail in $r$. We henceforth reason about $r'$. By definition of $r'$, $\knownf{i,m\!+\!1}=\knownf{i,m}<t\!-\!1$
(by definition, the value of $\knownf{i,m}$ is the same in both $r$ and $r'$). Thus, by \cref{decide-when-0-low-knownf} of \cref{decide-when-0}, $i$ decides $0$ at $(r',m\!+\!1)$, and hence $i$ does not decide at $(r',m)$,
and therefore neither does it decide at $(r,m)$.
\end{proof}

\begin{lemma}[No Earlier Decisions when $\lnot K_i\exists0$]\label{no-earlier-k1}
Assume that  $t\!>\!0$.
Let $Q\dom\UPz$ solve Uniform Consensus, let $r$ be a run of $Q$, let $m$ be a time, and let $i$ be a process.
If there exists a hidden path w.r.t. $\node{i,m}$ in $r$, and if at time $m$ in $r$ we have $\lnot K_i\exists0$, then $i$ does not
decide at $(r,m)$.
\end{lemma}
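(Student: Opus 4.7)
The plan is to proceed by contradiction. Assume $i$ decides at $(r,m)$. Since $\cz\Rightarrow\exists0$ implies $K_i\cz\Rightarrow K_i\exists0$, the hypothesis $\lnot K_i\exists0$ at time $m$ yields $\lnot K_i\cz$ at time $m$; by \cref{lem:correct-uni} combined with \cref{thm:knowprec}, $i$ cannot decide $0$ at $(r,m)$, so $i$ must be deciding $1$. The goal is then to construct a companion run $r'=Q[\alpha']$ of $Q$ with $r'_i(m)=r_i(m)$ in which some correct process decides $0$, contradicting Uniform Agreement.

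Let $j_0,j_1,\ldots,j_m$ be a sequence of processes witnessing a hidden path w.r.t.\ $\node{i,m}$. To build $r'$, modify $r$ as follows: set the initial value of $j_0$ to $0$ (permissible because $\node{j_0,0}$ is hidden from $\node{i,m}$); for each $0\le\ell<m$, let $j_\ell$ crash during round $\ell+1$ after successfully transmitting only to $j_{\ell+1}$; let $j_m$ be correct; and otherwise preserve all initial values, messages, and failures visible to $\node{i,m}$ in $r$. Because every $\node{j_\ell,\ell}$ is hidden from $\node{i,m}$, these modifications do not alter $r_i(m)$, and a simple failure-budget check shows that the total number of failures in $r'$ remains at most $t$.

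Along the chain, for each $\ell$, $K_{j_\ell}\exists0$ first becomes true at $(r',\ell)$ and $K_{j_\ell}\cz$ first becomes true at $(r',\ell+1)$. The core of the argument is to show by induction on $\ell$ that $j_\ell$ decides $0$ by time $\ell+1$ in $r'$, for each $\ell=1,\ldots,m$. The base case $\ell=1$ is handled by \cref{decide-when-0-first-round} or \cref{decide-when-0}, exploiting that in our construction $\knownf{j_1,2}$ is tightly controlled (only $j_0$'s failure is visible to $j_1$ at time $2$). For the inductive step, \cref{decide-when-0} is applied to $j_\ell$ at time $\ell+1$: either \cref{decide-when-0-low-knownf} of \cref{decide-when-0} triggers because the explicit failure schedule keeps $\knownf{j_\ell,\ell+1}$ below $t-1$, or, when this bound is too tight, the construction is augmented with auxiliary relays so that \cref{decide-when-0-hidden-z} of \cref{decide-when-0} can be invoked with $z=j_{\ell-1}$ and an appropriate witness $j'$. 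Propagating along the chain, $j_m$ decides $0$ at $(r',m+1)$; since $j_m$ is correct and $i$ decides $1$ at $(r',m)=(r,m)$, Uniform Agreement is violated, which is the desired contradiction.

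The main obstacle will be fitting the entire inductive chain into a single legal run $r'$ and verifying the preconditions of \cref{decide-when-0} at every layer, especially in small-$t$ regimes where \cref{decide-when-0-low-knownf} is tight and the alternative \cref{decide-when-0-hidden-z} part must be arranged by adding witnesses that do not leak information to $\node{i,m}$. This is the reachability argument hinted at in the introduction, essentially constructing the continual common knowledge of $\exists0$ along the hidden chain, and it is the source of the technical subtlety the paper highlights.
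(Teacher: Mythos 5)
Your plan has a fatal timing mismatch on the chain. In your $r'$, each $j_\ell$ with $\ell<m$ crashes during round $\ell+1$ after sending only to $j_{\ell+1}$, so $j_\ell$ is not active at time $\ell+1$. By \cref{lem:correct-uni} together with \cref{thm:knowprec}, $K_{j_\ell}\cz$ is a precondition for $j_\ell$ deciding $0$; and as you yourself note, $K_{j_\ell}\cz$ first becomes attainable at time $\ell+1$ --- but $j_\ell$ has already crashed by then, so $j_\ell$ never decides $0$ in $r'$ at all. Every step of your claimed induction therefore fails. This is exactly where the uniform case departs from the non-uniform chain argument of \cref{maj-not-decide}: there the decision rule $K_i(\Maj=v)$ is already satisfiable at the time the chain node is visible to the next one, so the chain timing works out; here $K_i\cz$ lags one round behind $K_i\exists 0$. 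Nor can you appeal directly to \cref{decide-when-0}: its hypotheses explicitly require $i$ to be active at $(r,m\!+\!1)$, and \cref{decide-when-0-hidden-z} of \cref{decide-when-0} needs a process $j\ne i$ with $\node{j,m}$ seen by $\node{i,m\!+\!1}$ and $\node{z,m\!-\!1}$ seen by $\node{j,m}$; in your construction $j_{\ell-1}$ transmits only to $j_\ell$, so no such witness exists, and the ``auxiliary relays'' you invoke would have to be introduced without creating a message chain back to $\node{i,m}$ (breaking the indistinguishability $r'_i(m)=r_i(m)$) and without overrunning the budget $t$ --- you establish neither.

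The paper's own proof abandons the chain altogether. From the hidden path it extracts only a single hidden process $z$ at time $m\!-\!1$ and a non-hidden process $j\ne i$ at $m\!-\!1$, builds a run $r'$ in which $z$ alone first learns $\exists 0$ at $m\!-\!1$ and crashes sending to everyone except $i$ (while $i$ decides $1$ at $m$ and then crashes), and then passes to a run $r''$, indistinguishable to $j$ at $m\!+\!1$, in which $z$ instead crashes at time $m$ after deciding. Showing that $z$ decides $0$ at $(r'',m)$ via \cref{decide-when-0} forces $j$ to decide $0$ by $m\!+\!1$ in $r'$, contradicting \UniAg; a longer four-run chain handles the boundary case $\knownf{i,m}=t$. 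The crucial move --- swapping whether $z$ crashed before or after deciding, a switch $j$ cannot detect --- is the idea your proposal is missing, and it is what makes the reachability argument go through despite the one-round lag of $K\cz$ behind $K\exists 0$.
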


\begin{proof}
As $\lnot K_i\exists0$ at time $m$, then by \Validity, $i$ does not decide $0$ at $(r,m)$. Thus, it is enough to show that $i$ does not decide $1$ at $(r,m)$ in order to complete the proof.
If $m\!=\!0$, then by \cref{undecided-at-0}, $i$ does not decide $1$ at $(r,m)$ either. Assume henceforth, therefore, that $m\!>\!0$.

As there exists a hidden path w.r.t. $\node{i,m}$, there exist processes $z,j \ne i$ s.t.\ $\node{z,m\!-\!1}$ is hidden from $\node{i,m}$ and s.t.\
$\node{j,m\!-\!1}$ is seen by $\node{i,m}$.

We first consider the case in which $\knownf{i,m}<t$.
In this case, there exists a run $r'\!=\!Q[\beta]$ of $Q$, s.t.\ all of the following hold in $r'$:
\begin{itemize}
\item
$r'_i(m)=r_i(m)$.
\item
$z$ is the unique process that knows $\exists0$ at $m\!-\!1$, and knows so then for the first time, either having initial value $0$ (if $m\!=\!1$) or (as explained in the Non-Uniform Consensus section) seeing only a single node that knows $\exists0$ at $m\!-\!2$ (if $m\!>\!1)$.
\item
$z$ fails at $(r',m\!-\!1)$, successfully sending messages to all nodes except for $i$.
\item
The faulty processes in $r'$ are those known by $\node{i,m}$ to fail in $r$, and in addition $i$, which fails at time $m$ without
sending out any messages. In particular, $j$ never fails.
\end{itemize}

We henceforth reason about $r'$. First, we note that $\node{j,m\!+\!1}$ does not know that $z$ fails at $m\!-\!1$ (as opposed to at~$m$). As $\node{j,m}$ sees $\node{z,m\!-\!1}$,
as $K_z\exists0$ at $m\!-\!1$, and as $j$ never fails, by \cref{lem:u-know}
we have that $K_j\cz$ at $(r',m\!+\!1)$. Thus, $j$ decides at $(\UPz[\beta],m\!+\!1)$, and so $j$ must decide
before or at~$m\!+\!1$ in $r'$. As $r_i(m)\!=\!r'_i(m)$, then by \UniAg\ it is enough to show that~$j$ does not decide $1$ up to time
$m+1$ in $r'$ in order to complete the proof.

There exists a run $r''$ of $Q$, s.t.\ \ \emph{i)} $r''_j(m\!+\!1)=r'_j(m\!+\!1)$, and\ \ \emph{ii)}~the only difference between $r''$ and $r'$ up to time $m$ is that in $r''$, $z$ fails only at time $m$, after 
deciding 
but without sending a message to $j$. By \UniAg, it is enough to show that $z$ decides $0$ at $(r'',m)$ in order to complete the proof.

We henceforth reason about $r''$. As $z$ does not know at $m$ that neither $z$ nor $i$ fail, we have $\knownf{z,m\!-\!1}\le\knownf{z,m}<t\!-\!1$.
Thus, $t\!>\!1$. If $m\!=\!1$, we therefore have by \cref{decide-when-0-first-round-no-more-zeros} of \cref{decide-when-0-first-round} that $z$ decides $0$ at $(r'',m)$. Otherwise, $m\!>\!1$.
As $K_z\exists0$ at $m\!-\!1$ for the first time, as $\node{z,m\!-\!1}$ sees only one node at $m\!-\!1$ that knows $\exists0$, and as $\knownf{z,m}<t\!-\!1$, by \cref{lem:u-know} we have $\lnot K_z\cz$ at $m\!-\!1$. Thus, by \cref{decide-when-0-low-knownf} of \cref{decide-when-0} (for $i=z$), $z$ decides $0$ at $(r'',m)$. Either way, the proof is complete.

We now consider the case in which $\knownf{i,m}=t$.
There exists a run $r'\!=\!Q[\beta]$ of $Q$, s.t.\ all of the following hold:
\begin{itemize}
\item
$r'_i(m)=r_i(m)$.
\item
All processes $k$ s.t.\ $\node{k,m\!-\!1}$ is hidden from $\node{i,m}$ (including $k=z$) know $\exists0$ at $(r',m\!-\!1)$, either having initial value $0$ (if $m\!=\!1$) or all seeing only a single node that knows $\exists0$ at $m\!-\!2$ (and which fails at time $m\!-\!2$ without being seen by $\node{i,m}$) --- denote this node by $z'$.
\item
All such processes fail at time $m\!-\!1$, successfully sending messages to all nodes except for $i$.
\item
The faulty processes failing in $r'$ are those known by $\node{i,m}$ to fail in $r$. In particular, there are $t$ such processes.
\end{itemize}
We henceforth reason about $r'$.
We note that as $i$ never fails, $\knownf{i,m\!-\!1}\le\knownf{j,m}$ (equality can actually be shown to hold here, but we do not need it).
As the number of nodes at $m\!-\!1$ knowing $\exists0$ that are seen by $\node{j,m}$ equals $\knownf{i,m}-\knownf{i,m\!-\!1}\ge t-\knownf{j,m}$ (by the above
remark, equality holds here as well), we have by \cref{lem:u-know} that $K_j\cz$ at $m$, and therefore $j$ decides
at $(\UPz[\beta],m)$; thus, it must decide before or at $m$ in $r'$. As $r_i(m)\!=\!r'_i(m)$, by \UniAg\ it is enough
to show that $j$ does not decide $1$ up to time $m$ in $r'$ in order to complete the proof.

We proceed with an argument similar in a sense to those of \cref{decide-when-0-first-round-more-zeros} of \cref{decide-when-0-first-round} and the inner induction in the proof of \cref{decide-when-0}.

As $\node{z,m\!-\!1}$ is seen by $\node{j,m}$, there exists a run $r''$ of $Q$, s.t.\ \ \emph{i)} $r''_j(m)\!=\!r'_j(m)$, and\ \ \emph{ii)} the only difference between $r''$ and $r'$ up to time $m$
is that in $r'$, $z$ never fails, but rather $i$ fails at $m\!-\!1$ after sending a message to $j$ but without sending a message to $z$.
We note that there are $t$ processes failing throughout $r''$. 
We henceforth reason about $r''$. If $m\!=\!1$, then $z$ has initial value $0$ and if $m\!>\!1$, then $\node{z,m\!-\!1}$ sees $\node{z',m\!-\!2}$;
either way, by \cref{lem:u-know}, $K_z\cz$ at $(r'',m)$ and therefore $z$ must decide before or at time $m$. Thus, it is enough to show that $z$ does not decide $1$ up to time $m$ in $r''$
in order to complete the proof.

As $\node{i,m\!-\!1}$ is not seen by $\node{z,m}$, there exists a run $r'''$ of $Q$, s.t.\ \ \emph{i)} $r'''_z(m)\!=\!r''_z(m)$, and\ \ \emph{ii)} the only difference between $r'''$ and $r''$ up to time $m$
is that in $r'''$, $\node{i,m-1}$ sees $\node{z',m\!-\!2}$ (or, if $m=1$, then the difference is that $i$ has initial value $0$); we note that $\node{i,m\!-\!1}$ is still seen by $\node{j,m}$.
We note that there are $t$ processes failing throughout $r'''$.
Observe that the number of nodes at $m\!-\!1$ knowing $\exists0$ that are seen by $\node{j,m}$ in $r'''$  is greater than in $r'/r''$ (between which $j$ at $m$ cannot distinguish), however $\knownf{j,m}$ remains the same between $r'/r''$ and~$r'''$; thus, $K_j\cz$ at $m$ in $r'''$ as well, and therefore $j$ must decide before or at time $m$ in $r'''$. Thus, it is enough to show that $j$
does not decide $1$ up to time $m$ in $r'''$ in order to complete the proof. We henceforth reason about~$r'''$. 

As $\node{i,m\!-\!1}$ is seen by $\node{j,m}$, there exists a run $r''''$ of $Q$, s.t.\ \ \emph{i)} $r''''_j(m)=r'''_j(m)$, and\ \ \emph{ii)} the only difference between $r''''$ and $r'''$ up to time $m$
is that in $r''''$, $i$ does not fail (and is thus seen by $\node{z,m}$).
We note that there are~$t-1$ processes failing throughout $r''''$, and thus in particular $\knownf{z,m}<t$. If $m=1$, then by \cref{decide-when-0-first-round-more-zeros} of \cref{decide-when-0-first-round} (for~$i=z$ and $j=i$), $z$ decides $0$ in $(r'''',m)$. Otherwise, i.e.\ if $m\!>\!1$, by \cref{decide-when-0-hidden-z} of \cref{decide-when-0} (for $i=z$, $z=z'$, and~$j=i$), $z$ decides $0$ in $(r'''',m)$. Either way, the proof is complete.
\end{proof}

From \cref{no-earlier-k0,no-earlier-k1}, we deduce sufficient conditions for Unbeatability of Uniform Consensus protocols dominating $\UPz$; these conditions also become necessary
if it can be shown that there exists some Uniform Consensus protocol dominating $\UPz$ that meets them, as we indeed show momentarily for $\UOptZ$.

\begin{cor}\label{cor:uni-opt}
Assume that  $0<t<n$.
A protocol $Q\dom\UPz$ that solves Uniform Consensus and in which a node $\node{i,m}$ decides whenever any of the following hold at $m$, is a \pdo\ Uniform Consensus protocol.
\begin{itemize}
\item
$K_i\cz$.
\item
No hidden path w.r.t.\ $\node{i,m}$ exists, and $\lnot K_i\exists0$.
\end{itemize}
\end{cor}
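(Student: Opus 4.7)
The plan is to prove unbeatability by showing that any Uniform Consensus protocol $Q'$ with $Q' \dom Q$ must coincide with $Q$ on the decision time of every process against every adversary, so that $Q'$ cannot strictly dominate $Q$. The main leverage will be that the two decision conditions listed in the corollary are not only sufficient for decision in $Q$ (by construction), but also \emph{necessary} for decision in any Uniform Consensus protocol dominating $\UPz$.

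First I will apply transitivity of $\dom$ together with the hypothesis $Q \dom \UPz$ to deduce $Q' \dom \UPz$, so that Lemmas~\ref{no-earlier-k0} and~\ref{no-earlier-k1} apply to both $Q$ and $Q'$. Combining these two lemmas with the implication $K_i\cz \Rightarrow K_i\exists 0$, I obtain that in every Uniform Consensus protocol $P \dom \UPz$, a process~$i$ can decide at $(P[\alpha],m)$ only when one of the two conditions of the corollary holds at $\node{i,m}$: the two bad cases $\lnot K_i\cz \wedge K_i\exists 0$ and $\lnot K_i\exists 0 \wedge \text{(hidden path w.r.t.\ }\node{i,m}\text{)}$ are precisely those ruled out by the two lemmas. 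Next, since both of the corollary's conditions are functions of the communication graph $\CG_\alpha$ alone (as remarked just before Protocol~$\Pz$, $K_i\exists 0$ depends only on $\alpha$, and by Lemma~\ref{lem:u-know} so does $K_i\cz$, while the absence of a hidden path is purely combinatorial), I can define $\mu(\alpha,i)$, independently of the protocol, to be the first time at which one of the two conditions holds at $\node{i,m}$ in $\CG_\alpha$. The hypothesis on $Q$ forces $i$'s decision in $Q[\alpha]$ to occur at time at most $\mu$, while the necessary-condition statement above forces it to occur at time at least $\mu$; hence $i$ decides in $Q[\alpha]$ exactly at time $\mu$. The same necessary-condition statement applied to $Q'$ forces $i$'s decision time in $Q'[\alpha]$ to be at least $\mu$, while $Q' \dom Q$ forces it to be at most $\mu$; hence $i$ also decides in $Q'[\alpha]$ at exactly time $\mu$. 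Thus no process ever decides strictly earlier in $Q'[\alpha]$ than in $Q[\alpha]$, which proves that $Q$ is unbeatable.

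The main obstacle is the first step --- verifying that the two listed conditions together exhaust the decision possibilities in every $P \dom \UPz$. This amounts to checking that the negation of their disjunction, namely $\lnot K_i\cz \wedge \bigl(K_i\exists 0 \vee \text{hidden path}\bigr)$, falls entirely under the hypotheses of Lemmas~\ref{no-earlier-k0} and~\ref{no-earlier-k1}, which one sees by a case split on whether $K_i\exists 0$ holds. Once this is in hand, the remainder is a routine transitivity-and-squeeze argument that requires no further machinery.
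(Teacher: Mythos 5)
Your proof is correct and follows the same route the paper implicitly takes --- the paper itself states the corollary as an immediate deduction from Lemmas~\ref{no-earlier-k0} and~\ref{no-earlier-k1} without spelling out the argument, and you have filled in exactly the intended reasoning: transitivity of $\dom$ to bring any potential dominator under the scope of the two lemmas, the case split showing the negation of the corollary's disjunction is exactly what the lemmas forbid, adversary-determinedness of all tests to define $\mu(\alpha,i)$, and the squeeze. One small point worth tightening: your squeeze is quantified over processes that decide in $Q[\alpha]$, but the conclusion ``$Q'$ does not strictly dominate $Q$'' really amounts to $Q\dom Q'$, which is quantified over processes that decide in $Q'[\alpha]$. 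The fix is one sentence: if $i$ decides in $Q'[\alpha]$ at $m'$, then $m'\ge\mu(\alpha,i)$ by the lemmas, so $i$ is still alive at $\mu(\alpha,i)$ (the adversary is the same), so $Q$'s decision rule fires for $i$ at $\mu(\alpha,i)\le m'$ --- this also handles the degenerate possibility of $i$ deciding in $Q'$ but not in $Q$, which your phrasing does not explicitly rule out.
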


By \cref{cor:uni-opt}, we have that
if $\UOptZ$ solves Uniform Consensus, then it does so in a \pdo\ fashion.

\begin{lemma}
$\UOptZ\dom\UPz$
\end{lemma}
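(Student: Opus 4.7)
The plan is to verify that for every adversary~$\alpha$ and process~$i$, whenever $i$ decides in $\UPz[\alpha]$ at time~$m_i$ it also decides in $\UOptZ[\alpha]$ at some time $\le m_i$. A preliminary observation is that since both protocols are full-information, they induce the same communication graph $\Ga(i,m)$ for any fixed~$\alpha$; hence the truth values of $K_i\cz$, $K_i\exists 0$, and of ``there is a hidden path w.r.t.\ $\node{i,m}$'' all coincide between $\UPz[\alpha]$ and $\UOptZ[\alpha]$ at every $(i,m)$.

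The ``decide~$0$'' case is immediate: if $i$ decides~$0$ in $\UPz[\alpha]$ at~$m_i$, then $K_i\cz$ holds at~$m_i$, and since the rule ``\textbf{if}~$K_i\cz$ \textbf{then}~$\decideZ$'' is also the first branch of~$\UOptZ$, process~$i$ must decide in $\UOptZ[\alpha]$ no later than~$m_i$.

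For the ``decide~$1$'' case, $i$ decides in $\UPz[\alpha]$ at $m_i=t+1$ with $\neg K_i\cz$ at time~$t+1$. The equivalence $K_i\exists v \Leftrightarrow K_i\cv$ at time~$t+1$, observed immediately after \cref{lem:u-know}, yields $\neg K_i\exists 0$ as well. What remains is to show that no hidden path w.r.t.~$\node{i,t+1}$ exists; once this is established, the second branch of~$\UOptZ$ fires (if it has not already), making $i$ decide by~$m_i$.

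The hidden-path nonexistence is the main obstacle, and I would settle it by a counting argument. Suppose toward contradiction that $j_0,j_1,\ldots,j_{t+1}$ is a hidden path w.r.t.~$\node{i,t+1}$. For every $\ell\le t$, since $\node{j_\ell,\ell}$ is not seen by $\node{i,t+1}$ and $j_\ell\ne i$ (because $\node{i,\ell}$ is always seen by $\node{i,t+1}$), process $j_\ell$ cannot have delivered its round-$(\ell+1)$ message to~$i$---otherwise $\node{j_\ell,\ell}$ would lie in $\node{i,\ell+1}$'s view and hence in $\node{i,t+1}$'s view. Thus $j_\ell$ is faulty, and its failure round $\tau(j_\ell)$ satisfies $\tau(j_\ell)\le \ell+1$. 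I claim $j_0,\ldots,j_t$ are pairwise distinct, yielding $t+1$ distinct faulty processes and contradicting the bound of $t$ failures. Indeed, if $j_\ell=j_{\ell'}$ for some $\ell<\ell'\le t$, then $\tau(j_{\ell'})=\tau(j_\ell)\le \ell+1\le \ell'$, and in each subcase---either $\tau(j_{\ell'})<\ell'$ so $j_{\ell'}$ is already dead in round~$\ell'$, or $\ell'=\ell+1$ and $j_{\ell'}$ fails in round~$\ell'$ without delivering to~$i$---process $j_{\ell'}$ fails to deliver to~$i$ in round~$\ell'$. Thus $\node{i,\ell'}$, which is itself seen by $\node{i,t+1}$, does not see $\node{j_{\ell'},\ell'-1}$; this is exactly the evidence required by clause~(a) of the definition of ``hidden'' for $i$ to know at time~$t+1$ that $j_{\ell'}$ failed before time~$\ell'$, violating the hiddenness of $\node{j_{\ell'},\ell'}$.
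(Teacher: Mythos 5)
Your proof is correct and takes essentially the same route the paper does: the decide-$0$ branch of $\UOptZ$ subsumes that of $\UPz$ verbatim, and the decide-$1$ case at time $t+1$ reduces to the two facts that $K_i\cz \Leftrightarrow K_i\exists 0$ at time $t+1$ and that no hidden path can exist with respect to $\node{i,t+1}$. The only difference is that you supply the pigeonhole argument for the latter fact (faulty, pairwise-distinct $j_0,\ldots,j_t$ would exceed the bound of $t$ failures), whereas the paper merely asserts it.
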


\begin{proof}
As explained above, at time $\tee+1$ no hidden paths exist, and furthermore, $K_i\exists0$ iff $K_i\cz$.
\end{proof}

\begin{theorem}
\label{u-solve}
$\UOptZ$ ~solves Uniform Consensus in $\gammacr$. Furthermore,
\begin{itemize}
\item
If $f \ge t-1$, then all decisions are made by time $f+1$ at the latest.
\item
Otherwise, all decisions are made by time $f+2$ at the latest.
\end{itemize}
 \end{theorem}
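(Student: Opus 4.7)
The plan is to verify Validity, Uniform Agreement, Decision, and the timing bounds in turn, using Lemmas~\ref{lem:correct-uni}, \ref{lem:u-know}, and~\ref{lem:know-igno2}. \emph{Validity} is immediate: a decision of $0$ requires $K_i\cz$, which entails $K_i\exists 0$ and hence $\exists 0$; a decision of $1$ requires $\lnot K_i\exists 0$, forcing $v_i=1$ (since $\Vals=\{0,1\}$) and thus $\exists 1$.

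For \emph{Uniform Agreement}, suppose for contradiction that in some run $r$, process $i$ decides $0$ at $(r,m)$ while $j$ decides $1$ at $(r,m')$. The first rule gives $K_i\cz$ at $(r,m)$, so some correct process $k$ satisfies $K_k\exists 0$ at $(r,m)$, and by monotonicity of knowledge of $\exists 0$ this persists throughout $r$. By Lemma~\ref{lem:know-igno2}, the precondition of the second rule — no hidden path w.r.t.\ $\node{j,m'}$ together with $\lnot K_j\exists 0$ — is exactly $K_j(\nnz)$; hence $\nnz$ itself holds in $r$, meaning no correct process ever knows $\exists 0$, contradicting $K_k\exists 0$ at time $m'$.

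For \emph{Decision and timing}, fix a correct process $i$ and let $f$ be the number of failures. If $K_i\cz$ ever becomes true, $i$ decides $0$; otherwise, by Lemma~\ref{lem:u-know}, $\lnot K_i\exists 0$ persists and $i$ decides $1$ at the first $m$ at which no hidden path w.r.t.\ $\node{i,m}$ exists. The plan is to combine two accounting arguments. A chain-of-failures argument shows that any adversary delaying $K_i\exists 0$ must build a chain of faulty $0$-aware processes each crashing after a single forwarded send, so with $f$ failures $K_i\exists 0$ holds by round $f+1$ at the latest, and hence $K_i\cz$ by round $f+2$ via Lemma~\ref{lem:u-know}(a). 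A hidden-path counting argument, using the parenthetical in the definition of ``hidden'' (namely that seeing any $\node{j',\ell}$ not seeing $\node{j,\ell-1}$ witnesses $j$'s earlier failure), shows that a hidden node $\node{j_\ell,\ell}$ in a hidden path essentially requires $j_\ell$ to have crashed at round exactly $\ell+1$, so a hidden path at time $m$ consumes $m$ distinct failures; hence no hidden path persists past $f+1$. Together these give the general bound $f+2$. The sharper bound $f+1$ when $f\ge t-1$ is obtained via the observation after Lemma~\ref{lem:u-know} that $K_i\exists v\Leftrightarrow K_i\cv$ at time $t+1$ and, more importantly, via Lemma~\ref{lem:u-know}(b): once $i$ has observed $d$ failures, only $t-d$ witnesses are needed to force $K_i\cz$, so when $f\ge t-1$ the final relay in the worst-case chain already provides enough witnesses to trigger (b) one round earlier, collapsing the lag in rule~$1$.

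\emph{Main obstacle}: The delicate step is the hidden-path counting against the precise parenthetical definition of ``hidden'' (which is more restrictive than naive intuition suggests), together with the interaction between the chain-of-failures bound on $\tau_0=\min\{m:K_i\exists 0\text{ holds}\}$ and the Lemma~\ref{lem:u-know}(a)/(b) upgrade from $K_i\exists 0$ to $K_i\cz$. Reclaiming the extra round at the boundary $f=t-1$ (where the bound tightens from $f+2$ to $f+1$) depends on pinning down precisely when condition (b) of Lemma~\ref{lem:u-know} kicks in with $d$ close to $t$, so that $K_i\cz$ is forced simultaneously with $K_i\exists 0$ rather than one round later.
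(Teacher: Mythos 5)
Your approach is genuinely different from the paper's: the paper discharges Theorem~\ref{u-solve} in one line by invoking Theorem~\ref{u-k-solve} (the direct proof of which handles $\UOptMink$ for general $k$, of which $\UOptZ$ is the $k=1$ instance), whereas you give a self-contained argument specialized to $\UOptZ$. Your Validity and Uniform Agreement arguments are correct; the latter is actually a slicker packaging than the paper's $\UnikAg$ argument, since for $k=1$ you can go straight through Lemma~\ref{lem:know-igno2}: the $1$-decision rule is exactly $K_j(\nnz)$, whose truthfulness directly contradicts the $\cz$ entailed by any $0$-decision. The case structure for Decision (either $K_i\cz$ eventually holds and $i$ decides $0$, or it never does and $i$ decides $1$ once the hidden path is gone) is also sound, and the general $f+2$ bound from the chain argument plus Lemma~\ref{lem:u-know}(a) and the hidden-path count is recoverable.

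The genuine gap is in the sharper bound $f+1$ when $f\ge t-1$. You correctly identify that Lemma~\ref{lem:u-know}(b) with $d$ close to $t$ is the mechanism, but your justification --- "the final relay in the worst-case chain already provides enough witnesses" --- conflates the witness count with $d=\knownf{i,m}$. The number of witnesses in (b) is only one (the single process $j$ from which $i$ first learns $\exists 0$ at $m$); what must be established is that $\knownf{i,m}\ge t-1$ so that one witness suffices. This does not follow from the chain argument as you state it. What is needed, and what the paper does in the proof of Theorem~\ref{u-k-solve} via the chain $f = k(m-1) \le k\tilde m \le \knownf{i,\tilde m} \le \knownf{i,m} \le f$, is to re-deploy the hidden-path machinery in a second role: since $i$ is undecided and $\lnot K_i\exists 0$ through time $m-1$, a hidden path w.r.t.\ $\node{i,m-1}$ exists, and its nodes $j_0,\ldots,j_{m-2}$ are $m-1=f$ distinct processes each of which sends to $i$ in some round $\ell$ but not in round $\ell+1$ --- crashes that $i$ itself observes. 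Hence $\knownf{i,m}\ge f$, giving $\knownf{i,m}=f$, after which $t-d\le 1$ exactly when $f\ge t-1$ and (b) fires. Without that link between the hidden path and $\knownf{i,\cdot}$, your outline only delivers $f+2$ across the board; the $f+1$ bound is not obtained.
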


\begin{proof}
This is a special case of \cref{u-k-solve}, for which a complete proof is given below.
\end{proof}

\cref{thm:u-opt} follows from \cref{cor:uni-opt,u-solve}; in the boundary case of $t\!=\!0$ (which is not covered by \cref{cor:uni-opt}), we note that $\UOptZ$ and $\OptZ$ coincide, as do the problems of uniform consensus
and consensus; hence $\UOptZ$ is unbeatable, and \cref{thm:u-opt} holds, in that case as well.

\section{Proofs of Section~\ref{subsec-uni-k} --- Uniform Set Consensus}
\label{sec-uni-set-cons-proofs}

\begin{proof}[Proof of \cref{u-k-solve}]

\Decision:
By definition of $\UOptMink$, every process that is active at time $\bigl\lfloor\frac{t}{k}\bigr\rfloor+1$,
and in particular every non-faulty process, decides by this time at the latest.

Before moving on to show \Validity\ and \UnikAg, we first complete the analysis of stopping times. 
In some run of $\UOptMink$, let $i$ be a process and let $m$ be a time s.t.\ $i$ is active at $m$ but has not decided until $m$, inclusive.
Let $\tilde{m}\le m$ be the latest time not later than $m$ s.t.\ $\node{i,\tilde{m}}$ has hidden capacity $\ge k$. By definition
of $\UOptMink$, as $i$ is undecided at $m$, we have $\tilde{m} \ge m-1$.

As $\node{i,\tilde{m}}$ has hidden capacity $\ge k$ at $\tilde{m}$, let $i_b^{\ell}$, for all $0\le\ell\le \tilde{m}$ and $b=1,\ldots,k$, be as in Definition~\ref{hiddencapacity}.
By definition, $\node{i_b^{\ell},\ell}$, for every $0\le\ell < \tilde{m}$ and $b=1,\ldots,k$, is hidden from $\node{i,\tilde{m}}$.
Thus, $k\cdot \tilde{m} \le \knownf{i,\tilde{m}} \le f$.
therefore, $\tilde{m} \le \frac{f}{k}$ and so $\tilde{m} \le \bigl\lfloor\frac{f}{k}\bigr\rfloor$.
Hence, as $m-1 \le \tilde{m}$, we have $m \le \tilde{m}+1 \le \bigl\lfloor \frac{f}{k} \bigr\rfloor + 1$.
We thus have that every process that is active at time $\bigl\lfloor\frac{f}{k}\bigr\rfloor+2$, decides by this time at the latest.

Assume now that $m = \bigl\lfloor \frac{f}{k} \bigr\rfloor + 1$ and that $f$ is a multiple of $k$.
($i$ is still a process that is active but undecided at $m$.)
As $f$ is a multiple of $k$, then $m = \frac{f}{k} +1$, and so $f = k \cdot (m-1) $. As $f = k \cdot (m-1) \le k \cdot \tilde{m} \le \knownf{i,\tilde{m}} \le  \knownf{i,m}\le f$, we
we have that both $\tilde{m}=m-1$ and $\knownf{i,m}=f$.
As $\tilde{m}=m-1$, we have that $i$ has hidden capacity $<k$ at~$m > \tilde{m}$. As $i$ is undecided at $m$,
we thus have, by definition of $\UOptMink$, that $\lnot K_i\cv$ for $v\eqdef\minval{i,m}$.
As by definition $K_i\exists v$ at $m$, we have by \cref{lem:u-know} that $K_i\exists v$ at $m$ for the first time.
Therefore, as $m>\tilde{m}\ge0$, there exists a process $j$ such that $K_j\exists v$ at $m-1$ and s.t.\ $\node{j,m-1}$ is seen by $\node{i,m}$. Thus, by \cref{lem:u-know}
and since $\lnot K_o\cv$, we have $\knownf{i,m}<t-1$, and so $f=\knownf{i,m}<t-1$.

We thus have that if $f=t-1$ and if this value is a multiple of $k$, then every process that is active at time $\bigl\lfloor\frac{f}{k}\bigr\rfloor+1$ decides by this time at the latest.

We move on to show \Validity\ and \UnikAg. Henceforth, 
let $i$ be a (possibly faulty) process that decides in some run of $\UOptMink$, let $m_i$ be the decision time of $i$, and
let $v$ be the value upon which $i$ decides.
Thus, there exists $m'_i \in \{m_i,m_i-1\}$ s.t.\ $\node{i,m'_i}$ is low or has hidden capacity $<k$, and s.t.\ $v = \minval{i,m'_i}$.
(To show this when $m_i=\bigl\lfloor\frac{t}{k}\bigr\rfloor+1$, we note that in this case $m_i>\bigl\lfloor\frac{f}{k}\bigr\rfloor$, and so, as shown in the stopping-time analysis above, this implies that $\node{i,m_i}$ has hidden capacity $<k$.)

\Validity:
As $v=\minval{i,m'_i}$, we have $K_i\exists v$ at $m'_i$, and thus $\exists v$.

\UnikAg:
It is enough to show that at most $k-1$ distinct values
smaller than $v$ are decided upon in the current run.
If $\node{i,m'_i}$ is low, then $v = \minval{i,m'_i} < k-1$, and thus there do not exist more than $k-1$ distinct legal values smaller than $v$, let alone ones decided
upon. For the rest of this proof we assume, therefore, that $\node{i,m'_i}$ is high, and so has hidden capacity $<k$.

Let $w<v$ be a value decided upon by some process. Let $j$ be this process, and let~$m_j$ be the time at which $j$ decides on $w$.
Thus, $w = \minval{i,m'_j}$ for some $m_j' \in \{m_j,m_j-1\}$ s.t.\ if $m'_j=m_j$, then either $K_j\cw$ at $m_j$, or $m_j=\bigl\lfloor\frac{t}{k}\bigr\rfloor+1$.

We first show that $m'_j \ge m'_i$. If $m'_j=m_j$ and $m_j=\bigl\lfloor\frac{t}{k}\bigr\rfloor+1$, then we immediately have
$m'_j=\bigl\lfloor\frac{t}{k}\bigr\rfloor+1\ge m_i \ge m'_i$, as required. Otherwise, the analysis is somewhat more subtle. We
first show that in this case, if $i$ is active at $m'_j+1$, then $K_i\exists w$ at $m'_j+1$. We reason by cases, according to the value of $m'_j$.
\begin{itemize}
\item
If $m'_j=m_j$, then $K_j\cw$ at $m'_j$, and thus there exists a process $k$ that never fails,
s.t.\ $K_k\exists w$ at $m'_j$. As $k$ never fails, $\node{k,m'_j}$ is seen by $\node{i,m'_j+1}$, and thus $K_i\exists w$ at $m'_j+1$, as required.
\item
Otherwise, $m'_j=m_j-1$. As $j$ is active at $m_j$, it does does not fail at $m'_j<m_j$, and therefore $\node{j,m'_j}$ is seen by
$\node{i,m'_j+1}$. Thus, as $K_j\exists w$ at $m'_j$, we obtain that $K_i\exists w$ at $m'_j+1$ in this case as well.
\end{itemize}
As $w<v$ and as $v = \minval{i,m'_i}$, we have $\lnot K_i\exists w$ at $m'_i$. Thus, we obtain that $m'_i < m'_j+1$, and therefore
$m'_j \ge m'_i$ in this case as well, as required. We have thus shown that we always have $m'_j \ge m'_i$.

As $\node{i,m'_i}$ does not have hidden capacity $k$, there exists $0\le\ell\le m'_i$ s.t.\ no more than $k-1$ processes at time $\ell$ are hidden from~$\node{i,m'_i}$.
As $m'_i \ge \ell$, we have $m'_j \ge m'_i \ge \ell$.
Let $H$ be the set of all processes seen at $\ell$ by $\node{j,m'_j}$. (Note
that if $m'_j = \ell$, then $H = \set{j}$.)
Since $m'_j \ge \ell$,
we have $\knownvals{j,m'_j} = \bigcup_{h \in H} \knownvals{h,\ell}$. Thus, $w=\minval{j,m'_j}=\min_{h \in H}\{\minval{h,\ell}\}$. Therefore,
$w = \minval{h,\ell}$ for some $h \in H$. As $\lnot K_i \exists w$ at $m'_i$, we thus have that $\node{h,\ell}$ is not seen
by $\node{i,m'_i}$. As $\node{h,\ell}$ is seen by $\node{j,m'_j}$, $h$ does not fail
before $\ell$, and thus $\node{h,\ell}$ is hidden from $\node{i,m'_i}$.
To conclude, we have shown that
\[w \in \bigl\{ \minval{h,\ell} \mid
\mbox{$\node{h,\ell}$ is hidden from $\node{i,m'_i}$} \bigr\}.\]
As there are at most $k-1$ processes hidden at $\ell$ from $\node{i,m'_i}$,
we conclude that no more than $k-1$ distinct values lower than $v$ are
decided upon, and the proof is complete.
\end{proof}

\section{Different Types of Unbeatability}
\label{sec-notions}

We first formally define last-decider unbeatability.

\begin{definition}[Last-Decider Domination and Unbeatability]
\leavevmode
\begin{itemize}
\item
A decision protocol $Q$ \defemph{last-decider dominates} a protocol~$P$ in~$\gamma$, denoted by $Q\boldsymbol{\overset{\smash{l.d.}}{\dom}_\gamma} P$ if, for all adversaries $\alpha$, if $i$ the last decision in~$P[\alpha]$ is at time $m_i$, then all decisions in $Q[\alpha]$ are taken before or at $m_i$. Moreover, we say that $Q$  \defemph{strictly last-decider dominates} $P$
if $Q\overset{\smash{l.d.}}{\dom}_\gamma P$ and  $P\!\!\boldsymbol{\not}\!\!\!\overset{\smash{l.d.}}{\dom}_\gamma Q$. I.e., if for some $\alpha\in\gamma$ the last decision in $Q[\alpha]$ is {\em strictly before} the last decision in $P[\alpha]$.
\item
A protocol $P$ is a \defemph{last-decider unbeatable} solution to a decision task~$S$ in a context~$\gamma$ if $P$ solves~$S$ in~$\gamma$ and no protocol $Q$ solving~$S$ in~$\gamma$ strictly last-decider dominates~$P$.%
\end{itemize}
\end{definition}

\begin{remark}
\leavevmode
\begin{itemize}
\item
If $Q\boldsymbol{\dom_\gamma} P$, then $Q\boldsymbol{\overset{\smash{l.d.}}{\dom}_\gamma} P$. (But not the other way around.)
\item
None of the above forms of strict domination implies the other.
\item
None of the above forms of unbeatability implies the other.
\end{itemize}
\end{remark}

Last-decider domination does not imply domination in the sense of the rest of this paper (on which our proofs is based). 
Nonetheless, the specific property of protocols dominating $\OptZ$, $\OptMaj$, $\OptMink$ and $\UOptZ$, which we use to prove that these protocols are unbeatable, holds also for protocols that only last-decider dominate these protocols. 

\begin{lemma}\label{last-dom-sufficient}
\leavevmode
\begin{parts}
\item\label{last-dom-sufficient-pz}
Let $Q\overset{\smash{l.d.}}{\dom}\Pz$ satisfy \Decision. If $K_i\exists0$ at $m$ in a run $r\!=\!Q[\alpha]$ of $Q$, then $i$ decides in $r$ no later than at $m$.
\item\label{last-dom-sufficient-maj}
Let $Q\overset{\smash{l.d.}}{\dom}\OptMaj$ satisfy Decision. If $K_i(\Maj=v)$ for $v \in \{0,1\}$ at $m$ in a run $r\!=\!Q[\alpha]$ of $Q$, then $i$ decides in $r$ no later than at $m$.
\item\label{last-dom-sufficient-mink}
Let $Q\overset{\smash{l.d.}}{\dom}\OptMink$ satisfy \Decision. If $i$ is low at $m$ in a run $r\!=\!Q[\alpha]$ of $Q$, then $i$ decides in $r$ no later than at $m$.
\item\label{last-dom-sufficient-upz}
Let $Q\overset{\smash{l.d.}}{\dom}\UPz$ satisfy \Decision. If $K_i\cz$ at $m$ in a run $r\!=\!Q[\alpha]$ of $Q$, then $i$ decides in $r$ no later than at $m$.
\end{parts}
\end{lemma}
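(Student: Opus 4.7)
All four parts follow a common template: given $\phi$ at $(r,m)$ in $r=Q[\alpha]$, I would exhibit an adversary $\alpha'$ with (a) $i$'s view at time $m$ under $\alpha'$ equal to $\Ga(i,m)$, (b) $i$ correct in $\alpha'$, and (c) every decision in $P[\alpha']$ occurring at or before time $m$ (where $P$ is the corresponding base protocol $\Pz$, $\OptMaj$, $\OptMink$, or $\UPz$). The hypothesis $Q\overset{l.d.}{\dom}P$ then bounds every decision in $Q[\alpha']$ by $m$; \Decision\ and (b) force $i$ itself to decide in $Q[\alpha']$, hence by $m$; and (a) together with determinism of $Q$ transfers this to $r$. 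The natural choice of $\alpha'$ is the \emph{optimistic completion} of $\Ga(i,m)$: declare correct every process $j$ with $\node{j,m-1}\in\Ga(i,m)$ (in particular $i$ itself), and crash the rest as late as $\Ga(i,m)$ permits. The enabling observation is that $\node{i,m}$ is the only node of $\Ga(i,m)$ at time $m$, so round-$m$ edges ending at any $\node{k,m}$ with $k\ne i$ can be added freely in $\alpha'$ without altering $\Ga(i,m)$; thus every correct process in $\alpha'$ aggregates at round $m$ the time-$(m-1)$ views of all correct-in-$\alpha'$ processes.

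For parts \ref{last-dom-sufficient-pz}, \ref{last-dom-sufficient-mink} and \ref{last-dom-sufficient-maj} the optimistic completion alone should suffice. The hypothesis $\phi$ guarantees that some correct-in-$\alpha'$ process already satisfies the knowledge-based half of $P$'s decision rule at time $m-1$ (the edge case $m=0$ being handled by an all-$v_i$ failure-free $\alpha'$), and its round-$m$ broadcast propagates the decisive fact to every correct process: every correct process knows $\exists 0$ by $m$ for part \ref{last-dom-sufficient-pz}, becomes low (knowing $\minval{i,m}$) for part \ref{last-dom-sufficient-mink}, and has an aggregated view at $m$ coinciding with $i$'s view at $m$ up to the relabeling $\node{i,m}\leftrightarrow\node{k,m}$ for part \ref{last-dom-sufficient-maj}, hence also knowing $\Maj=v$. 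In each case $P$'s decision rule is triggered at every correct process by time $m$, while faulty processes decide (if at all) by their crash times $\le m$, so (c) holds.

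Part \ref{last-dom-sufficient-upz} will be the main obstacle. By \cref{lem:u-know}, $K_i\cz$ at $(r,m)$ arises either via case (a) ($K_i\exists 0$ at $m-1$ with $i$ active at $m$) or via case (b) ($i$ knows at least $t-d$ processes had $K_j\exists 0$ at $m-1$, where $d$ is the number of failures known to $i$ at $m$). The case-(b) scenario is again handled by the optimistic completion: the same witnesses remain correct in $\alpha'$, their time-$(m-1)$ views are visible to every correct $k$ via round-$m$ broadcasts, and $d_k=d$ in $\alpha'$, so condition (b) of \cref{lem:u-know} holds at every correct $k$ by time $m$. The case-(a)-only scenario is where I expect the subtlety: if the set $J_{m-1}$ of processes knowing $\exists 0$ at $m-1$ is too small (possibly $\{i\}$) to meet the threshold for (b) at other correct $k$, the optimistic completion only forces the last decision in $\UPz[\alpha']$ to $m+1$, not to $m$. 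The remedy in the plan is to \emph{strategically add failures} to $\alpha'$ outside $\Ga(i,m)$: for each non-witness $j\notin J_{m-1}$ declared correct by the optimistic completion, reinterpret $j$ as crashing in round $m$ after sending only to $i$. This leaves the edge $\node{j,m-1}\to\node{i,m}$ intact (so $\Ga(i,m)$ is preserved) while deleting every $\node{j,m-1}\to\node{k,m}$ for $k\ne i$, thereby raising $d_k$ at every remaining process and, using $t<n$, lowering the threshold $t-d_k$ below $|J_{m-1}|$, so that the witnesses in $J_{m-1}$ (who still broadcast at round $m$) suffice for (b) at every process in $\alpha'$ at time $m$---correct processes and converted-faulty ones alike, since both still receive from every witness. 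Verifying that this conversion always preserves $\Ga(i,m)$, that the resulting $\alpha'$ lies in $\gammacr$, and that the count argument closes for every process is the main technical hurdle, and the resulting reasoning should parallel the reachability-style argument for continual common knowledge used in the proof of \cref{thm:u-opt}.
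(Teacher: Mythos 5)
Your plan for Parts~\ref{last-dom-sufficient-pz}--\ref{last-dom-sufficient-mink} matches the paper's argument: exhibit an adversary $\alpha'$ in which $i$'s time-$m$ view is unchanged, $i$ (and, for $m>0$, a suitable time-$(m-1)$ witness $j$ with $\node{j,m-1}\in\Ga(i,m)$) never fails, so that $j$'s round-$m$ broadcast makes every process active at $m$ satisfy the base protocol's decision rule; all decisions in the base protocol's run then occur by $m$, and last-decider domination together with \Decision\ transfers the bound to $i$ in $Q$.

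Part~\ref{last-dom-sufficient-upz} is where the plan as written has a genuine gap, on two counts. First, the ``optimistic completion'' keeps processes that are \emph{hidden} from $\node{i,m}$ alive as long as $\Ga(i,m)$ permits, which means some $\node{j,m-1}$ with $\node{j,m-1}\notin\Ga(i,m)$ may still be seen by other processes $k$ at time $m$. That \emph{lowers} $\knownf{k,m}$ relative to $\knownf{i,m}$ --- precisely the wrong direction for your threshold argument. The paper's construction is pessimistic on this point: it stipulates that every process at $m-1$ not seen by $\node{i,m}$ is seen by no process at $m$, which forces $\knownf{k,m}\ge\knownf{i,m}$ for every active $k$. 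Second, crashing \emph{every} non-witness declared correct does not in general keep $\alpha'$ in $\gammacr$: case (b) of \cref{lem:u-know} only guarantees $|J_{m-1}|\ge t-\knownf{i,m}$, and case (a) can give $|J_{m-1}|=1$, so the $n-\knownf{i,m}$ processes you declared correct may contain far more than $t-\knownf{i,m}$ non-witnesses, pushing the failure count past $t$. The paper is surgical here: it fixes one witness $j$, picks a set $I$ of \emph{exactly} $t-\knownf{i,m}-1$ processes (disjoint from $\{i,j\}$) seen at $m-1$ by $\node{i,m}$, and crashes only $I$ in round $m$ sending solely to $i$; combined with the pessimistic handling of hidden nodes this gives $\knownf{k,m}\ge\knownf{i,m}+|I|=t-1$ for every active $k\ne j$, so condition (b) of \cref{lem:u-know} is met at every active $k$ with the single witness $j$, and the total number of failures stays at $t-1$. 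Lastly, the closing step is not the reachability-style induction of \cref{thm:u-opt} that you anticipate; once the run $r'$ is in place, the threshold count closes in one step, with no induction over reachable configurations.
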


The main idea in the proof of each of the parts of \cref{last-dom-sufficient} is to show that $i$ considers it possible that
all other active processes also know the fact stated in that part,
and so they must all decide by the current time in the corresponding run of the dominated protocol. Hence, the last decision decision in that run is made in the current time; thus, by last-decider domination, $i$ must decide.
The proofs for the first three parts are somewhat easier, as in each of these parts, any process at $m$ who sees (at least) the nodes seen by $\node{i,m}$
(or has the same initial value, if $m\!=\!0$) also knows the relevant
fact stated in that part. We demonstrate this by proving \cref{last-dom-sufficient-pz}; the analogous proofs of \cref{last-dom-sufficient-maj,last-dom-sufficient-mink} are left to the reader.

\begin{proof}[Proof of \cref{last-dom-sufficient-pz} of \cref{last-dom-sufficient}]
If $m\!=\!0$, then there exists a run $r'\!=\!Q[\beta]$ of $Q$, s.t.~~\emph{i)} $r'_i(0)\!=\!r_i(0)$,~~\emph{ii)} in $r'$ all initial values are $0$, and~~\emph{iii)} $i$ never fails in $r'$. Hence, in $\Pz[\beta]$ all decisions
are taken at time $m\!=\!0$, and therefore so is the last decision. Therefore, the last decision in $r'$ must be taken at time $0$. As $i$ never fails in $r'$, by \Decision\ it must decide at some
point during this run, and therefore must decide at $0$ in $r'$. As $r_i(0)\!=\!r'_i(0)$, $i$ decides at $0$ in $r$ as well, as required.

If $m\!>\!0$, then there exists a process $j$ s.t.\ $K_j\exists0$ at $m-1$ in $r$ and $\node{j,m-1}$ is seen by $\node{i,m}$. Thus, there exists a run $r'\!=\!Q[\beta]$ of $Q$,
s.t.~~\emph{i)} $r'_i(m)\!=\!r_i(m)$, and~~\emph{ii)} $i$ and $j$ never fail in $r'$. Thus, all processes that are active at $m$ in $r'$ see $\node{j,m-1}$ in $r'$ and therefore know $\exists0$ in $r'$.
Hence, in $\Pz[\beta]$ all decisions are taken by time $m$, and therefore so is the last decision. Therefore, the last decision in $r'$ must be taken no later than at time $m$.
As $i$ never fails in $r'$, by \Decision\ it must decide at some point during this run, and therefore must decide by $m$ in $r'$. As $r_i(m)\!=\!r'_i(m)$, $i$ decides by $m$ in $r$ as well, as required.
\end{proof}

As the proof of \cref{last-dom-sufficient-upz} is slightly more involved, we show it as well.

\begin{proof}[Proof of \cref{last-dom-sufficient-upz} of \cref{last-dom-sufficient}]
If $m\!=\!0$, then by \cref{lem:u-know}, $t\!=\!0$. There exists a run $r'\!=\!Q[\beta]$ of $Q$, s.t.~~\emph{i)} $r'_i(0)=r_i(0)$, and~~\emph{ii)} in $r'$ all initial values are $0$. Therefore,
as $t\!=\!0$, we have by \cref{lem:u-know} that all processes know $\cz$ at $m\!=\!0$ in $r'$. Hence, in $\UPz[\beta]$ all decisions
are taken at time $m\!=\!0$, and therefore so is the last decision. Therefore, the last decision in $r'$ must be taken at time $0$ as well. Since $t\!=\!0$, $i$ never fails in $r'$, and so by \Decision\ it must decide at some
point during this run, and therefore must decide at $0$ in $r'$. As $r_i(0)\!=\!r'_i(0)$, $i$ decides at $0$ in $r$ as well, as required.

If $m\!>\!0$, then there exists a process $j$ s.t.\ $K_j\exists0$ at $m\!-\!1$ in $r$ and $\node{j,m-1}$ is seen by $\node{i,m}$ in $r$.
Furthermore, as $t\!<\!n$, there exists a set of processes $I$
s.t.~~\emph{i)} $i,j \notin I$,~~\emph{ii)} $|I|=t\!-\!\knownf{i,m}\!-\!1$, and~~\emph{iii)} $\node{k,m\!-\!1}$ is seen by $\node{i,m}$ for every $k \in I$. Thus, there exists a run $r'=Q[\beta]$ of $Q$, s.t.~~\emph{i)} $r'_i(m)\!=\!r_i(m)$,~~\emph{ii)} $i$ and $j$ never fail in $r'$,~~\emph{iii)} all of $I$ fail in $r'$ at $m\!-\!1$, successfully sending messages only to $i$, and~~\emph{iv)} every process at $m\!-\!1$ in $r'$ that is not seen by $\node{i,m}$, is not seen by any other process at $m$ as well. We henceforth reason about $r'$. Every process $k \ne j$ that is active at $m$ sees $\node{j,m\!-\!1}$ and furthermore satisfies
$\knownf{k,m}\ge\knownf{i,m}+|I|=t-1$. Thus, by \cref{lem:u-know}, $K_k\cz$ at $m$, and thus $k$ decides at $(\UPz[\beta],m)$.
Additionally, as $K_j\exists0$ at $m\!-\!1$, by \cref{lem:u-know} $K_j\cz$ at $m$, and thus $j$ decides at $(\UPz[\beta],m)$.
Hence, in $\UPz[\beta]$ all decisions are taken by time $m$, and therefore so is the last decision. Therefore, the last decision in $r'$ must be taken no later than at time $m$.
As $i$ never fails in $r'$, by \Decision\ it must decide at some point during this run, and therefore must decide by $m$ in $r'$. As $r_i(m)=r'_i(m)$, $i$ decides by $m$ in $r$ as well, as required.
\end{proof}

As explained above, \cref{thm:last-decider} follows from \cref{last-dom-sufficient}, and from the proofs of \cref{thm:optz,thm:optmaj,thm:optmink,thm:u-opt}.

\mbox{~}\\
Finally, we sketch the structure of communication-efficient implementations for the protocols proposed in the paper:
\begin{lemma}
\label{nlogn}
For each of the protocols $\OptZ$, $\OptMaj$, $\OptMink$, $\UOptZ$ and $\UOptMink$ there is a protocol with identical decision times for all adversaries, in which every process sends at most $O(n\log n)$ bits overall to each other process. 
\end{lemma}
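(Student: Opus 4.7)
The plan is to show that, in each of the five protocols, the decision predicate evaluated at $\langle i,m\rangle$ is determined by an $O(n\log n)$-bit summary of $\Ga(i,m)$, and that this summary can be maintained and conveyed from $i$ to any other process $j$ via per-round deltas totalling $O(n\log n)$ bits over the entire run, while producing identical decisions at identical times.

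For the summary, for each other process $k$, process $i$ stores three fields totalling $O(\log n)$ bits: $k$'s initial value (once $i$ has learned it); $i$'s current best upper bound on the round in which $k$ crashed (else $\infty$); and a bit recording whether $i$ has yet seen the node corresponding to $k$'s last successful send. With the default convention that any $k$ still presumed alive is seen up to time $m-1$, this table determines each node's status (seen, hidden, or known-failed) at every level $\ell\le m$, and hence $K_i\exists v$, the hidden capacity at each level, and, by \cref{lem:u-know}, the uniform predicates $K_i\cv$ (invoking the property of the fip that the initial values known to any $\langle j,m-1\rangle\in\Ga(i,m)$ are precisely those reachable from $j$ backward through the communication graph, which is itself reconstructible from the table). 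In round $m+1$, $i$ sends to $j$ only those summary entries that underwent a \emph{structural} change during round $m$: the initial value $v_k$ the one time it is learned, the crash-round bound the one time it becomes finite and the at-most-one later time it tightens, and the last-send bit the one time it flips. Each event is $O(\log n)$ bits, and each field changes $O(1)$ times per process $k$, so the total transmission from $i$ to $j$ over the whole run is $O(n\log n)$ bits. Since $j$ reconstructs exactly the summary it would have computed under full broadcast, its decision at every step is unchanged.

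The main obstacle is the accounting for the crash-round bound, which a priori could refine through many intermediate values. The resolution rests on an observation specific to the fip: because every surviving process re-broadcasts its entire knowledge each round, the tightest currently-available upper bound on $k$'s crash round held anywhere in the system reaches $i$ in one additional round. Consequently, for each $k$ the finite values that $i$'s bound ever takes form a sequence of length at most two --- the initial value one above the true crash round (held by those who received $k$'s crashing-round message), and, possibly, a further tightening to the true crash round itself (as soon as any surviving process that missed $k$'s crashing-round message contributes the decisive witness). An analogous short-and-monotone argument handles the last-send bit, keeping the per-process transmission count at $O(1)$ and the total at $O(n\log n)$.
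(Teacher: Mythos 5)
Your proposal is correct and follows essentially the same approach as the paper's sketch: compress the full-information protocol by having each process transmit only per-process deltas (initial value once, and the crash-round bound a constant number of times as it is first established and then tightened), each costing $O(\log n)$ bits, for a total of $O(n\log n)$ bits per ordered pair. The paper's version is a bit terser (it cites \cite{MT} for the linear-message reduction and uses two message types, {\tt value} and {\tt failed\_at}, rather than your explicit three-field summary and the one-extra-round propagation argument), but the accounting and the conclusion are the same.
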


\begin{proof}(Sketch)
Moses and Tuttle in \cite{MT} show how to implement full-information protocols in the crash failure model with linear-size messages. In our case, a further improvement is possible, since decisions in all of the protocols depend only on the identity of hidden nodes and on the vector of initial values. In a straightforward implementation, we can have a process $i$ report  ``{\tt value}$(j) = v$'' once for every $j$ whose initial value it discovers, and ``{\tt failed\_at}$(j) = \ell$'' once where~$\ell$ is the earliest failure round it knows for $j$. In addition, it should send an ``{\tt I'm\_alive}'' message in every round in which it has nothing to report. Process~$i$ can send at most one {\tt value} message and two 
{\tt failed\_at} messages for every $j$. Since {\tt I'm\_alive} is a constant-size message sent fewer than~$n$ times, and since encoding~$j$'s ID requires $\log n$ bits, a process~$i$ sends a total of $O(n \log n)$ bits overall. 
\end{proof}

\end{document}